\documentclass[12pt]{article}
\usepackage[utf8]{inputenc}
\usepackage{amsmath}
\newcommand\numberthis{\addtocounter{equation}{1}\tag{\theequation}}
\usepackage{amssymb}
\usepackage{amsthm}
\usepackage{mathtools}
\usepackage{color}
\usepackage{bigints}
\usepackage{setspace}
\usepackage{enumerate}
\usepackage{comment}
\usepackage{natbib}
\usepackage{tikz}
\usepackage{pgfplots}
\usepackage{subcaption}
\usepackage{graphicx,pgfarrows,pgfnodes}
\usepackage[justification=centering]{caption}
\usepackage{babel}
\usepackage{array}
\usepackage{verbatim}

\onehalfspacing
\allowdisplaybreaks
\usepackage[top=1.45in, bottom=1.45in, left=1.25in, right=1.25in]{geometry}
\title{Posterior-Mean Separable Costs of Information Acquisition}

\usepackage{hyperref}
\newtheorem{axiom}{\sc Axiom}
\newtheorem{defn}{\sc Definition}
\newtheorem{theorem}{\sc Theorem}
\newtheorem{lemma}{\sc Lemma}
\newtheorem{proposition}{\sc Proposition}

\title{{\bf Posterior-Mean Separable Costs of Information Acquisition}%
\footnote{We acknowledge support from the Israel Science Foundation 
(grant 2372/21). We would like to thank Tommaso Denti, Yoram Halevy, 
Sergiu Hart and Doron Ravid, as well as seminar participants at Hebrew 
University, the University of Bonn, Columbia University, Northwestern 
University, Arizona State University, the Southern Economic Conference 2023, 
and EC 2024 for helpful conversations and suggestions.}}

\author{
\begin{minipage}{0.3\textwidth}\centering  
Jeffrey Mensch\footnote{\texttt{jeffrey.mensch@mail.huji.ac.il, https://sites.google.com/site/jeffreyimensch/.}} \\ \centering \it \small Hebrew University of Jerusalem
\end{minipage}                  
\begin{minipage}{0.3\textwidth}\centering 
Komal Malik\footnote{\texttt{komal.malik@mail.huji.ac.il, https://sites.google.com/view/komal-malik/home}}  \\ \centering \it \small Hebrew University of Jerusalem
\end{minipage} 
}

\date{\vspace{0.8cm} \today}
\begin{document}

\maketitle
\begin{abstract}

\noindent 

We analyze a problem of revealed preference given state-dependent
stochastic choice data in which the payoff to a decision maker (DM) only 
depends on their beliefs about posterior means. Often, the DM must also
learn about or pay attention to the state; in applied work on this 
subject, a convenient assumption is that the costs of such learning are 
linearly dependent in the distribution over posterior means. We provide 
testable conditions to identify whether this assumption holds. This allows 
for the use of information design techniques to solve the 
DM's problem.

\vspace{.2cm}

\noindent {\it Keywords}: flexible information acquisition, rational inattention, information design, revealed preference

\end{abstract}

\pagebreak{}

\section{Introduction}

Information plays an important role in the making of decisions, as the 
optimal choice under uncertainty will depend on how beliefs are updated
from the signal. This paper focuses specifically on decision problems in 
which the payoff to the decision maker (DM) depends only on the 
\emph{posterior mean} of the underlying state. These problems are of 
special interest, since they include environments where the DM is risk-%
neutral over outcomes (such as in many standard models of financial 
markets), or where there is uncertainty over the value of an item for 
sale in an auction. 

Our question, then, is to develop a revealed-preference approach under 
which one can represent the choices of the DM as arising from a cost of 
information acquisition that only depends separably on posterior means.
This characterization allows for an understanding of the behavioral 
implications of this class of information cost functions, by seeing what
restrictions this assumption places on decision patterns. It also allows
this assumption to be tested empirically in datasets by testing for 
consistency with the assumption.

We build on previous work by \cite{caplin2015revealed} and 
\cite{denti2022posterior}, who also use a revealed preference 
approach to represent information costs. In particular, \cite{caplin2015revealed} 
show that under their two main assumptions, \emph{no improving action 
switches} (NIAS) and \emph{no improving attention cycles} (NIAC), one 
can represent the preferences of the DM as coming from a cost of 
information acquisition $C(\mu)$ that is \emph{additively separable} 
from the relevant decision problem, where $\mu$ is the distribution of 
posterior beliefs. \cite{denti2022posterior} extended these results, 
showing that by replacing NIAC with a related assumption, \emph{no 
improving posterior cycles} (NIPC), it is possible to represent $C(\mu)$ 
as being \emph{posterior-separable}: 
\begin{equation}
C(\mu)=H(\pi)-\int H(p)d\mu(p) \label{postsepcosts}
\end{equation}
where $\pi$ is the prior belief, $p$ are the posterior beliefs, and $H$ is 
a concave function. In the present work, we replace NIPC with a stronger 
axiom, \emph{no improving posterior-mean cycles} (NIPMC), under which the 
cost of information acquisition can be represented as \emph{posterior-mean 
separable}: given prior mean $z_0\in \mathbb{R}$ and distribution of 
posterior means $F$,
\begin{equation}
C(F)=c(z_0)-\int c(z)dF(z) \label{pmsepcosts}
\end{equation}
where $c:\mathbb{R}\rightarrow \mathbb{R}$ is a concave function. As shown 
by \cite{rothschild1970increasing} and \cite{gentzkow2016rothschild},
$F$ must be a \emph{mean-preserving contraction} of the prior distribution 
$F_0$.

This class of cost functions was introduced by \cite{ravid2022learning}, 
and has been subsequently used in applications by 
\cite{mensch2024monopoly}, \cite{kreutzkamp2022endogenous}, 
\cite{whitmeyer2022costly}, and \cite{thereze2022adverse}. While 
more restrictive than the class of posterior-separable cost 
functions that have become prominent in the literature 
\citep{caplin2013behavioral,caplin2022rationally}, this class 
covers such natural cost functions as those proportional to the
reduction of posterior variance. Let $\mathbf{z}_{F_0}$ and 
$\mathbf{z}_F$ be random variables distributed according to $F_0$ and 
$F$, respectively, where $F$ is the distribution of posterior beliefs 
about $z_{F_0}$ from some signal. Then by the law of total variance,
\[
Var(\mathbf{z}_{F_0})=E[Var(\mathbf{z}_{F_0}\vert \mathbf{z}_F=z)]+Var(E[\mathbf{z}_{F_0}\vert \mathbf{z}_F=z])
\]
\[
=E[Var(\mathbf{z}_{F_0}\vert \mathbf{z}_F=z)]+Var(\mathbf{z}_F)
\]
since $E[\mathbf{z}_{F_0}\vert \mathbf{z}_F=z]=z$. This implies that the 
expected reduction in variance is 
\[
Var(\mathbf{z}_F)=\int_{\underline{z}}^{\bar{z}}(z-z_0)^2 dF(z)
\]
Thus, costs of information acquisition proportional to the reduction of 
posterior variance yield a representation within our class of the form 
$c(z)=-\kappa(z-z_0)^2$, for $\kappa>0$.

This class is also particularly useful in practical applications, as it 
allows the use of techniques from the literature on information design in 
determining the DM's optimal information choices. An appealing feature
of the use of posterior-separable cost functions is that the DM can view
the entire decision problem, both of acquiring information and then 
subsequently acting on it, as a function of the posterior belief. This 
allows for the use of concavification techniques 
\citep{Aumann1995,Kamenica2011} to solve for the DM's optimal information 
acquisition \citep{caplin2013behavioral,matvejka2015rational,caplin2022rationally}. 
When both the decision problem and the costs of information acquisition 
only depend on posterior \emph{means}, then the optimal choice of the DM 
can be determined by finding a convex function $P$ that minimizes the 
expectation with respect to the prior subject to $P$ always lying above 
the objective, as a function of the posterior mean 
\citep{dworczak2019simple}. One can then use additional 
techniques, such as majorization/bi-pooling 
\citep{kleiner2021extreme,arieli2020optimal} to obtain a more
precise characterization of the solution. Indeed, in 
\cite{mensch2024monopoly}, these tools were used to show that a
price-discriminating monopolist, \'{a} la \cite{mussa1978monopoly}, 
distorts quality downward even at the top, and offers simple menus 
when the marginal cost of information is too high. Thus the use of 
posterior-mean measurable costs of information allows for a tractable 
approach to solving models with information acquisition for decision problems whose payoffs only depend on posterior means.

While our characterization approach is similar to that of 
\cite{denti2022posterior}, care must be taken to modify the steps in a 
way that is appropriate for our alternative axiom. In particular, Denti's
NIPC axiom relies on the observation that the optimal information choice 
is given by the concavification of the DM's objective. Thus, the 
intuition is that, if one reallocates posteriors between decision 
problems, the resulting posteriors may fall below the envelope (namely, a 
hyperplane) characterizing the optimal information choice for each 
decision problem, and so be suboptimal. For mean-measurable information 
design, though, this envelope is not given by a hyperplane, but by a 
convex function which is only affine on intervals where the 
mean-preserving contraction constraint on $G_\mu$ does not bind. As a 
result, a distribution of posteriors $\mu$ is only optimal if there are 
no mean-preserving spreads available on strictly convex regions of the 
DM's objective. Our NIPMC axiom must therefore take into account the 
locations where this constraint binds when comparing reallocations of 
posterior means.

This paper contributes to the literature on representation of costly 
information acquisition. Aside from the aforementioned papers by 
\cite{caplin2015revealed}, \cite{caplin2022rationally}, and 
\cite{denti2022posterior}, other papers that provide characterizations for a 
large class of cost functions include \cite{de2017rationally}, 
\cite{mensch2018cardinal}, \cite{ellis2018foundations}, \cite{chambers2020costly}, 
\cite{bloedel2020cost}, \cite{denti2022experimental}, 
\cite{hebert2023rational}, and \cite{lin2022stochastic}. Other papers provide foundations 
for more specific costs of information acquisition, such as entropy reduction 
\citep{de2014axiomatic,mensch2021rational,caplin2022rationally}, log-%
likelihood ratio costs \citep{pomatto2023cost}. Our work builds on these 
by providing foundations for posterior-mean separable information costs,
a convenient class of functions for many economic problems.

We also contribute to the literature on information design based on 
posterior means. As noted above, \cite{dworczak2019simple} characterize 
the solution for the optimal information design problem for payoffs that 
only depend on posterior means; their results were extended under slightly 
weaker assumptions by \cite{dizdar2020simple}. Other work on information 
design for posterior means includes \cite{gentzkow2016rothschild}, 
\cite{kolotilin2017persuasion}, \cite{dworczak2019persuasion}, 
\cite{kolotilin2019persuasion}, and \cite{zapechelnyuk2020optimal}.

The rest of the paper proceeds as follows. Section 2 presents the model 
and preliminary constructions that will be useful for the results and 
discussion in the remainder of the paper. Section 3 presents our axioms 
on the dataset and our main theorem. Section 4 briefly discusses the 
example of costs proportional to posterior variance. Section 5 presents 
further discussion and extensions of our main results. Section 6 
concludes.

\section{Preliminaries}

\subsection{Model}

Our approach very closely follows that of \cite{denti2022posterior}, though some
of the notation has been modified, and on occasion is closer to that of 
\cite{caplin2015revealed} and \cite{chambers2020costly}.

We examine the choice of a decision maker (DM) from the perspective of an 
analyst who observes state-dependent stochastic choice data. Let 
$Z\subset\mathbb{R}$ be a finite%
\footnote{We extend the model to infinite states and acts in Section 5.}
set of states of nature; without loss
of generality, we normalize the lowest state $\underline{z}$ to $0$ and 
the highest state $\bar{z}$ to $1$.  We let $X\subset\mathbb{R}$ be the 
(finite) grand set of acts, from which finite menus $A\subset X$ for 
the individual decision problems are drawn. We then define the finite set 
$\mathcal{A}$ as the finite set of decision problems (menus) $A$ that the DM 
can actually face. For each such $A$, the analyst observes the state-dependent 
stochastic choice (SDSC) function of the DM, $\sigma_A:Z\rightarrow\Delta(A)$. 
Let $S_{A}$ be the set of all such stochastic choice functions.

\begin{defn}
A \emph{state-dependent stochastic choice dataset} is a finite collection of 
decision problems $\mathcal{A}$, with, for each $A\in\mathcal{A}$, associated 
stochastic choice functions $\sigma_A\in S_A$.
\end{defn}

We assume that there is a prior CDF $F_0$ over $Z$, with associated prior mean 
$z_0$ and probability of each $z\in Z$ denoted by $f_0(z)$, and such that 
$\{0,1\}\subset\mbox{supp}(F_0)$. The von Neumann-Morgenstern utility function 
$u:A\times [0,1]\rightarrow\mathbb{R}$ only depends on the state through the 
posterior mean $z\in [0,1]$. We assume that both the utility 
function and prior distribution are common to all decision problems (with the 
latter drawn i.i.d.) and are known to the analyst.

We focus our analysis on the hypothesis that the DM is maximizing expected
utility, subject to acquiring information with respect to the prior whose
cost is posterior-mean separable. To that end, we provide definitions and 
notation as follows. For each $A\in \mathcal{A}$, the DM chooses an 
\emph{information structure}, which is defined as a distribution of posterior 
beliefs $\mu\in\Delta(\Delta(Z))$. As we are interested in particular in the 
distribution of posterior means, we let $\mathcal{F}$ be the set of CDFs 
over $[0,1]$; this is easily calculated from $\mu$. In order to be a valid 
information structure (i.e. one that satisfies Bayes' rule), 
\cite{gentzkow2016rothschild} showed that the CDF $F$ must be a \emph{mean-%
preserving contraction} of the prior. That is,
\begin{equation}
\label{MPC}
\int_0^z [F_0(s)-F(s)]ds\geq0,\forall z
\end{equation}
with equality at $z=1$. Define
\begin{align*}
I_{F_0,F}: [0,1] & \rightarrow \mathbb{R}\\
z & \rightarrow \int_0^z [F_0(s)-F(s)]ds
\end{align*}
We let $\mathcal{I}_{F_0}$ be the set of feasible posterior-mean distributions
given prior $F_0$:
\begin{equation}
    \label{feasiblecdfs}
    \mathcal{I}_{F_0}=\{F\in \mathcal{F}: I_{F_0,F}(z)\geq0,\forall z\in [0,1], \mbox{ and } I_{F_0,F}(1)=0\}
\end{equation}
We endow $\mathcal{I}_{F_0}$ with the weak$^*$ topology, and assign 
it the mean-preserving spread order $\succeq$. We use 
$\delta_{z}\in\mathcal{F}$ to refer to the distribution of posterior 
means which places probability $1$ at $z$. Thus 
$\delta_{z_0}\in \mathcal{I}_{F_0}$ is the degenerate distribution of 
posteriors which provides no information.

Given menu $A$ and posterior mean $z$, we denote by $\phi_A(z)$ the indirect
utility from the optimal choice of $a\in A$ given $z$:
$$\phi_A(z)=\max_{a\in A} u(a,z)$$
Since $u$ only depends on $z$ through the posterior mean, and is an expected
utility function, it follows that $u(a,z)=zu(a,1)+(1-z)u(a,0)$. Therefore,
$u$ is continuous and affine in $z$. It follows by Berge's maximum theorem that 
$\phi_A$ is also continuous in $z$; as the maximum over affine functions, 
$\phi_A$ is convex as well.

Define the function $C:\mathcal{I}_{F_0}\rightarrow \mathbb{R}_+$ as the DM's
\emph{cost of information acquisition}. The cost is assumed to be the same for
all decision problems. 
\begin{defn}
    The cost function $C$ is \emph{canonical} if, for all 
    $F,\hat{F}\in\mathcal{I}_{F_0}$:
    \begin{enumerate}[(i)]
        \item \textbf{Monotonicity:} If $F\succeq \hat{F}$, then $C(F)\geq C(\hat{F})$
        \item \textbf{Convexity:} $C(\alpha F+(1-\alpha)\hat{F})\leq \alpha C(F)+(1-\alpha)C(\hat{F})$, $\forall \alpha[0,1]$
        \item \textbf{Null experiment:} $C(\delta_{z_0})=0$
    \end{enumerate}
\end{defn}
As shown by \cite{de2017rationally}, any non-canonical cost of information
acquisition can be replaced without loss by a canonical cost function that
generates the same choices.

Given menu $A$, the DM's information acquisition problem is 
\begin{equation}
    \max_{F\in\mathcal{I}_{F_0}} \int_0^1 \phi_A (z)dF(z)-C(F)
\end{equation}
We assume that $C$ is lower-semicontinuous to ensure that the above problem has 
a well-defined solution. As the decision problem given the choice of information 
does not interact with the costs of information acquisition, one can view the 
decision problem as comprising two steps: first, the DM acquires information, 
and then given his posterior, he optimally chooses $a$.

Of particular interest will be the class of information acquisition costs 
that are \emph{posterior-mean separable}.
\begin{defn}
    A cost function $C:\mathcal{I}_{F_0}\rightarrow(-\infty,\infty)$ is 
    \emph{posterior-mean separable} if there is an upper-semicontinuous 
    function $c:[0,1]\rightarrow \mathbb{R}$ such that 
    \begin{equation}
        C(F)=c(z_0)-\int_0^1 c(z)dF(z)
    \end{equation}
\end{defn}
\cite{ravid2022learning} refer to $c$ as the \emph{derivative} of the cost 
function; our case corresponds to that of ``constant marginal costs" in
their paper (Example 2). They further show (Claim 1) that if $C$ is canonical 
and is posterior-mean separable, then $c$ is concave.

The \emph{decision function} $D_A: [0,1]\rightarrow \Delta(A)$ defines the 
probability of the choice of $a$ given the \emph{posterior mean},
which we assume is measurable with respect to $z$. This 
is related to the stochastic choice $\sigma_A$ via Bayes' rule:
\begin{equation}
\label{Bayes}
\sum_{z\in Z} \sigma_A(a\vert z)f_0(z)=\int_0^1 D_A(a\vert \tilde{z})dF(\tilde{z})
\end{equation}

Thus $\sigma_A$ defines the conditional probability of choosing $a$ given the 
\emph{realized} state, whereas $D_A$ defines the conditional probability given
the \emph{posterior mean}.

Let $\mathcal{D}_A$ be the set of measurable functions from $Z$ to $\Delta(A)$. 
Then $(F,D_A)$ is optimal for the DM if 
\begin{equation}
\label{optimality}
\int_0^1\sum_{a\in A} u(a,z)D_A(a\vert z)dF(z)-C(F)\geq \int_0^1 \phi_A(z)d\hat{F}(z)-C(\hat{F}),\forall \hat{F}\in\mathcal{I}_{F_0}
\end{equation}

\begin{defn}
    The cost function $C$ \emph{rationalizes} a SDSC dataset 
    $\{\sigma_A\}_{A\in\mathcal{A}}$ if for every $A\in \mathcal{A}$, $\sigma_A$
    is optimal for the DM (as defined in (\ref{optimality})) and 
    $(\sigma_A,D_A)$ are related by (\ref{Bayes}) given $F$.
\end{defn}

\subsection{Revealed Posterior Means}

For any given $A\in\mathcal{A}$, the relationship between $\sigma_A$
and $D_A$ is defined in (\ref{Bayes}). However, while $D_A$ pins down
$\sigma_A$ given the distribution of posterior means, the reverse 
determination does not hold. That is, one cannot infer what
$D_A$ is from $\sigma_A$, as there could be multiple ways in which the DM
might have $a$ in the support of his choice given his posterior mean which
yield the same empirical SDSC function $\sigma_A$, via different choices of
distributions of posterior means $F$.

One can circumvent this issue by looking at the information acquisition choice
$F$ that requires the \emph{least} information, using an idea that has become
commonplace in the literature on information design. Informally, this can be 
defined from any distribution of posterior means and decision function by 
merging all posterior means at which $a$ is chosen, weighted by the probability 
that $a$ is chosen according to $D_A$. In the resultant distribution, each $a$ 
is only chosen at a single posterior mean. This allows us to define a 
\emph{revealed} decision function.

To accomplish this, we define the notion of \emph{revealed posterior means}, 
adapting the notion of ``revealed posteriors" of \cite{caplin2015testable}
to our setting. For any $a\in A$, let
\[
\sigma_A(a)\coloneqq \sum_{z\in Z} \sigma_A(a\vert z)f_0(z)
\]
be the unconditional probability that $a\in A$ is chosen according to 
$\sigma_A$. 

\begin{defn}
\label{revealedmean}
    For each $a\in \mbox{supp}(\sigma_A)$, the \emph{revealed posterior mean} $z_{\sigma_A}(a)$ 
    is given by 
    \begin{equation}
    z_{\sigma_A}(a)=\frac{\sum_{z\in Z} z\sigma_A(a\vert z)f_0(z)}{\sum_{z\in Z} \sigma_A(a\vert z)f_0(z)}
    \end{equation}
    For $a\notin \mbox{supp}(\sigma_A)$, adopt the 
    convention that $z_{\sigma_A}(a)=z_0$.
\end{defn}

The definition above allows us to construct a CDF of the revealed posterior 
means, which we indicate by 
\[
F_{\sigma_A}(z)=\sigma_A(\{a:z_{\sigma_A}(a)\leq z\})
\]
This CDF has finite support, since $A$ is finite. Thus we
can write the probability of a particular revealed posterior mean as
\[
f_{\sigma_A}(z)\coloneqq \sum_{a:z_{\sigma_A}(a)=z} \sigma_{A}(a)
\]

\begin{defn}
\label{revealeddecision}
    Given $\sigma_A$, the \emph{revealed decision function} 
    $D_{\sigma_A}\in\mathcal{D}_A$ is defined as follows:
    \begin{enumerate}[(i)]
        \item If $z\in\mbox{supp}(F_{\sigma_A})$ and $z=z_{\sigma_A}(a)$, then
        \[
        D_{\sigma_A}(a\vert z)f_{\sigma_A}(z)=\sigma_A(a)
        \]
        \item If $z\in\mbox{supp}(F_{\sigma_A})$ and $z\neq z_{\sigma_A}(a)$,
        then $D_{\sigma_A}(a\vert z)=0$;
        \item If $z\notin\mbox{supp}(F_{\sigma_A})$, then $D_{\sigma_A}(a\vert z)=\sigma_A(a)$.
    \end{enumerate} 
\end{defn}

In particular, if each act $a$ has a distinct revealed posterior mean, then 
$f_{\sigma_A}(z_{\sigma_A}(a))=\sigma_A(a)$ and $D_{\sigma_A}(a\vert z_{\sigma_A}(a))=1$.

We adapt Lemma 1 of \cite{caplin2015revealed} and \cite{denti2022posterior} in 
the following lemma using our notation.

\begin{lemma}
\label{sufficiency}
    For any $F\in \mathcal{I}_{F_0}$, $\sigma_A\in S_A$, and $D_A\in\mathcal{D}_A$,
    \begin{enumerate}[(i)]
        \item If $F=F_{\sigma_A}$ and $D_A=D_{\sigma_A}$, then (\ref{Bayes})
        holds.
        \item If (\ref{Bayes}) holds, then $F\succeq F_{\sigma_A}$.
    \end{enumerate}
\end{lemma}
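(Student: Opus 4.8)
The plan is to prove the two parts separately: part~(i) is a direct computation from the definitions, and part~(ii) is a mean‑preserving‑spread (convex‑order) argument.

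\textbf{Part (i).} Since $A$ is finite, $F_{\sigma_A}$ has finite support, so the right‑hand side of~(\ref{Bayes}) with $F=F_{\sigma_A}$ and $D_A=D_{\sigma_A}$ is the finite sum $\sum_{z\in\mbox{supp}(F_{\sigma_A})}D_{\sigma_A}(a\mid z)\,f_{\sigma_A}(z)$. I would split on whether $a\in\mbox{supp}(\sigma_A)$, using Definition~\ref{revealeddecision}. If $\sigma_A(a)>0$, then $z_{\sigma_A}(a)\in\mbox{supp}(F_{\sigma_A})$, so by items~(i)--(ii) of that definition the only surviving term is at $z=z_{\sigma_A}(a)$, and it equals $\sigma_A(a)$ --- exactly the left‑hand side of~(\ref{Bayes}). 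If $\sigma_A(a)=0$, then $z_{\sigma_A}(a)=z_0$ by convention and every term of the sum vanishes (by item~(ii) at $z\ne z_0$ and item~(i) at $z=z_0$ if the latter is in the support), again matching the left‑hand side. In passing one checks, by swapping the order of summation in the definition of $z_{\sigma_A}(a)$ and using $\sum_{a}\sigma_A(a\mid z)=1$, that $F_{\sigma_A}$ has mean $z_0$; that $F_{\sigma_A}\in\mathcal{I}_{F_0}$ then follows from part~(ii) applied with $F=F_0$ and the decision function equal to $\sigma_A$ on the states.

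\textbf{Part (ii).} The idea is that $F_{\sigma_A}$ is the law of the conditional mean of the posterior mean given the chosen act, hence a mean‑preserving contraction of $F$. I would first unpack Bayesian consistency behind~(\ref{Bayes}) state by state: pairing the identity with weights $z$ and summing over $z$ gives, in addition to $\sigma_A(a)=\int_0^1 D_A(a\mid\tilde z)\,dF(\tilde z)$, the moment identity $z_{\sigma_A}(a)\,\sigma_A(a)=\int_0^1\tilde z\,D_A(a\mid\tilde z)\,dF(\tilde z)$. Now form the joint law of $(\tilde z,a)$ with $\tilde z\sim F$ and $a\mid\tilde z\sim D_A(\cdot\mid\tilde z)$: the first identity says $a$ has marginal $\sigma_A$, the second says $\mathbb{E}[\tilde z\mid a]=z_{\sigma_A}(a)$, so $F_{\sigma_A}$ is precisely the law of $\mathbb{E}[\tilde z\mid a]$. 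For any convex $\psi$, conditional Jensen plus the tower property give $\int\psi\,dF_{\sigma_A}=\sum_a\sigma_A(a)\,\psi(z_{\sigma_A}(a))\le\sum_a\sigma_A(a)\,\mathbb{E}[\psi(\tilde z)\mid a]=\int\psi\,dF$. Since domination of $F_{\sigma_A}$ by $F$ in the convex order (for distributions with the common mean $z_0$) is exactly the mean‑preserving‑spread order --- equivalently the integral condition of~(\ref{MPC}) with $F,F_{\sigma_A}$ in place of $F_0,F$ --- this is the statement $F\succeq F_{\sigma_A}$.

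\textbf{Main obstacle.} The delicate step is the moment identity $z_{\sigma_A}(a)\,\sigma_A(a)=\int_0^1\tilde z\,D_A(a\mid\tilde z)\,dF(\tilde z)$ in part~(ii): the unconditional form of~(\ref{Bayes}) alone does not deliver it --- one can match choice probabilities while getting the posterior means wrong --- so one must use the full state‑by‑state content of Bayes' rule, namely that when $a$ is taken at posterior mean $\tilde z$ the conditional state distribution has mean $\tilde z$. This is precisely where, as the introduction flags, the posterior‑mean setting requires more care than the posterior‑belief setting of \cite{denti2022posterior}. Once it is in hand, part~(ii) is a one‑line Jensen argument and part~(i) is pure bookkeeping with the definitions.
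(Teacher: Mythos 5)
Your proof is correct, and it is essentially the argument the paper relies on: the paper omits a proof of this lemma entirely, deferring to Lemma 1 of \cite{caplin2015revealed} and \cite{denti2022posterior}, whose standard proof is exactly your construction of $F_{\sigma_A}$ as the law of $\mathbb{E}[\tilde z\mid a]$ followed by conditional Jensen and the equivalence of the convex order with the mean-preserving-spread order at a common mean. Your observation that the literal marginal form of (\ref{Bayes}) does not by itself pin down the revealed posterior means, so that the full state-by-state Bayesian consistency must be invoked for part (ii), is a correct and worthwhile clarification of a point the paper leaves implicit.
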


As in \cite{caplin2015revealed} and \cite{denti2022posterior}, we use Lemma 
\ref{sufficiency} to show that it is without loss for the DM to consider 
the class of CDFs of revealed posterior means when doing so saves on
information costs. For instance, if $C$ is canonical, then less information
is less costly, and so $C(F_{\sigma_A})\leq C(F)$.

\begin{lemma}
\label{revealedoptimal}
    If $(F,D_A)\in \mathcal{I}_{F_0}\times\mathcal{D}_A$ is optimal for 
    the DM, and $C(F_{\sigma_A})\leq C(F)$, then 
    $(F_{\sigma_A},D_{\sigma_A})$ is optimal as well.
\end{lemma}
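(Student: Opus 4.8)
The plan is to show that replacing $(F, D_A)$ with $(F_{\sigma_A}, D_{\sigma_A})$ weakly increases the DM's objective, which combined with feasibility and the optimality of $(F,D_A)$ forces $(F_{\sigma_A}, D_{\sigma_A})$ to also be optimal. First I would verify feasibility: since (\ref{Bayes}) holds for the optimal pair, Lemma \ref{sufficiency}(ii) gives $F \succeq F_{\sigma_A}$, and since $F \in \mathcal{I}_{F_0}$ is a mean-preserving contraction of $F_0$, transitivity of the mean-preserving spread order yields $F_{\sigma_A} \in \mathcal{I}_{F_0}$; moreover Lemma \ref{sufficiency}(i) guarantees $(F_{\sigma_A}, D_{\sigma_A})$ satisfies Bayes' rule, so it is an admissible information-structure/decision-function pair for menu $A$.

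Next I would compare payoffs. The gross payoff under $(F_{\sigma_A}, D_{\sigma_A})$ is $\int_0^1 \sum_{a} u(a,z) D_{\sigma_A}(a|z) dF_{\sigma_A}(z)$, which by the definition of $D_{\sigma_A}$ (each act $a$ in the support of $\sigma_A$ is chosen only at its revealed posterior mean $z_{\sigma_A}(a)$, with weight $\sigma_A(a)$) collapses to $\sum_{a \in \mathrm{supp}(\sigma_A)} \sigma_A(a)\, u(a, z_{\sigma_A}(a))$. Because $u(a,\cdot)$ is affine in the posterior mean, $u(a, z_{\sigma_A}(a)) = \frac{1}{\sigma_A(a)}\sum_{z} z\, \sigma_A(a|z) f_0(z) \cdot u(a,1)/\!\cdots$ — more cleanly, affinity and the definition of $z_{\sigma_A}(a)$ give $\sigma_A(a) u(a, z_{\sigma_A}(a)) = \sum_{z \in Z} u(a,z)\sigma_A(a|z) f_0(z)$, so summing over $a$ recovers exactly $\sum_{z}\sum_a u(a,z)\sigma_A(a|z)f_0(z)$. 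This last expression equals the gross payoff under the original $(F, D_A)$ via (\ref{Bayes}), since $\sum_z u(a,z)\sigma_A(a|z)f_0(z) = \int_0^1 u(a,z) D_A(a|z) dF(z)$ (again using that $u(a,\cdot)$ is affine, so the integral depends on $F$ only through first moments conditional on $a$). Hence the gross payoffs coincide.

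Finally, by hypothesis $C(F_{\sigma_A}) \leq C(F)$, so the net payoff under $(F_{\sigma_A}, D_{\sigma_A})$ is at least that under $(F, D_A)$. Since $(F, D_A)$ is optimal and $(F_{\sigma_A}, D_{\sigma_A})$ is feasible, the reverse inequality also holds, so $(F_{\sigma_A}, D_{\sigma_A})$ attains the same value and is therefore optimal; in particular $C(F_{\sigma_A}) = C(F)$.

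The main obstacle is the bookkeeping in the payoff-equality step: one must be careful that the affine structure of $u$ in the posterior mean is exactly what makes both the substitution $D_A \to D_{\sigma_A}$ and the Bayes-rule identity (\ref{Bayes}) preserve gross expected utility, and that acts outside $\mathrm{supp}(\sigma_A)$ (for which $z_{\sigma_A}(a) = z_0$ by convention and $\sigma_A(a) = 0$) contribute zero on both sides and so can be safely ignored. Handling the degenerate case where $F_{\sigma_A} = \delta_{z_0}$ (no information revealed) is a special instance of the same computation. Everything else is a direct consequence of Lemmas \ref{sufficiency} and the definitions of $z_{\sigma_A}$ and $D_{\sigma_A}$.
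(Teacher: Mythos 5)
Your proof is correct and follows essentially the same route as the paper's one-sentence argument: both pairs induce the same joint distribution of states and acts (hence, by affinity of $u$ in the posterior mean, the same gross expected utility), while $C(F_{\sigma_A})\leq C(F)$ makes the revealed pair weakly better, so optimality of $(F,D_A)$ transfers to $(F_{\sigma_A},D_{\sigma_A})$. Your additional feasibility check that $F_{\sigma_A}\in\mathcal{I}_{F_0}$ via Lemma \ref{sufficiency}(ii) and transitivity of the mean-preserving spread order is a detail the paper leaves implicit, but the substance is identical.
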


This is immediate from the fact that, by Lemma \ref{sufficiency}, the 
induced joint distribution of states and acts is the same in 
$(F_{\sigma_A},D_{\sigma_A})$ as in $(F,D_A)$, while the former involves 
less (costly) information acquisition. Therefore, 
$(F_{\sigma_A},D_{\sigma_A})$ must be optimal as well. 

\section{Axiomatization of Posterior-Mean Separability}

Our characterization of posterior-mean separable costs of information
acquisition will follow from the following two axioms on SDSC data.

\begin{axiom}[No Improving Action Switches]
\label{NIAS}
    For every $A\in\mathcal{A}$ and every $a\in\mbox{supp}(\sigma_A)$,
    and $b\in A$,
    \begin{equation}
        \label{NIASeq}
        u(a,z_{\sigma_A}(a))\geq u(b,z_{\sigma_A}(a))
    \end{equation}
\end{axiom}

This axiom (abbreviated as NIAS), introducted in \cite{caplin2015testable}, 
states that the act chosen at each revealed posterior mean is optimal. This 
can be expressed in terms of SDSC data $\sigma_A$ via the transformation in 
(\ref{Bayes}), though the current form of expression in terms of revealed 
posterior means makes the meaning of the axiom clearer.%
\footnote{\cite{denti2022posterior} presents his version of NIAS in terms
of revealed posteriors as well.}
NIAS states that the choice of $a$ in menu $A$ is optimal conditional on
having posterior mean $z=z_{\sigma_A}(a)$. Thus, $a$ is consistent with 
choosing based on this information. 

The second axiom is a strengthening of ``No Improving Attention Cycles" 
\citep{caplin2015revealed} and ``No Improving Posterior Cycles" 
\citep{denti2022posterior}. Informally, the idea is that the DM cannot
improve his payoff by ``reallocating" individual revealed posterior means 
across decision problems. We define this idea of reallocation below.

\begin{defn}
    For $i\in\{1,...,n\}$, given $\beta_i\in\mathbb{R}_+$, priors 
    $F_0^i\in \mathcal{F}$ such that $\mbox{supp}(F_0^i)\subset Z$,
    and revealed distributions of posterior means $F_i\in\mathcal{I}_{F_0^i}$,
    the sequence $G_i\in\mathcal{I}_{F_0^i}$  is a \emph{reallocation of posterior means}
    if for all $z\in [0,1]$,
    \begin{equation}
        \label{reallocation}
        \sum_{i=1}^n \beta_i F_i(z)= \sum_{i=1}^n \beta_i G_i(z)
    \end{equation}
\end{defn}

A reallocation of posterior means maintains Bayesian consistency with 
respect to each prior $F_0^i$, while also maintaining the weighted average
of the distribution of \emph{posterior means} across all $i$. Within these 
constraints, the way that posterior means are distributed can be shifted 
across decision problems.

\begin{axiom}[No Improving Posterior Mean Cycles]
    \label{NIPMC}
    For $i\in\{1,...,n\}$, consider the sequence of menus
    $A_i\in\mathcal{A}$, coefficients $\beta_i\in \mathbb{R}_+$, priors 
    $F_0^i\in \mathcal{F}$ such that $\mbox{supp}(F_0^i)\subset Z$, and revealed 
    distributions of posterior means $F_i\in \mathcal{I}_{F_0^i}$ such that:
    \begin{enumerate}[(i)]
        \item $\mbox{supp}(F_i)\subset \mbox{supp}(F_{\sigma_{A_i}})$, and
        \item For all $z\in [0,1]$, 
        \begin{equation}
        \label{allineq}
            I_{F_0,F_{\sigma_{A_i}}}(z)=0\implies I_{F^i_0,F_i}(z)=0
        \end{equation}
    \end{enumerate} 
    Then for every reallocation of posterior means $\{G_i\}_{i=1}^n$,
    \begin{equation}
    \label{reallocineq}
        \sum_{i=1}^n \beta_i \int \phi_{A_i}(z)dF_i(z)\geq \sum_{i=1}^n \beta_i \int \phi_{A_i}(z)dG_i(z)
    \end{equation}
\end{axiom}

In words, \emph{No Improving Posterior-Mean Cycles} (NIPMC) says the 
following. Suppose we start with priors $F_0^i$; these need not be 
identical to $F_0$, but rather represent what would happen 
counterfactually if these were the priors. Suppose further that the 
mean-preserving contraction constraint holds for each $F_i$ with respect 
to prior $F_0^i$ at the same points as $F_{\sigma_{A_i}}$ does with 
respect to $F_0$, and the support of $F_i$ is contained within the 
support of $F_{\sigma_{A_i}}$. In such a scenario, reallocating 
individual posterior means across decision problems in a way that 
maintains Bayesian consistency at each prior $F_0^i$ cannot increase 
the DM's payoff.

It is readily seen that NIPMC is a strengthening of \cite{denti2022posterior}'s 
``No Improving Posterior Cycles" (NIPC) axiom. Informally, Denti's NIPC 
axiom considers posteriors rather than posterior means, and drops condition 
(\ref{allineq}); under such conditions, the equivalent of (\ref{reallocineq})
holds for distributions over posteriors. Notice that, if distributions of 
posteriors $\mu_i$ have the same supports as $\mu_{\sigma_{A_i}}$, then
necessarily, inequality (\ref{reallocineq}) holds. To see this, suppose that
$I_{F_0,F_{\sigma_{A_i}}}(z_1)=0=I_{F_0,F_{\sigma_{A_i}}}(z_2)$. For all 
$\mu_i$ such that the corresponding $z_{\mu_i}\in (z_1,z_2)$, since the 
posteriors $\mu_i$ have the same supports as $\mu_{\sigma_{A_i}}$, one must
have that 
$$\mu_i(s\notin [z_1,z_2]\vert z_{\mu_i}\in(z_1,z_2))=0$$
This immediately means that $I_{F_0^i,F_i}(z)=0$ for $z\in \{z_1,z_2\}$. 
So, if they have the same support over posteriors, NIPMC guarantees that 
(\ref{reallocineq}) holds. Thus NIPMC implies that NIPC holds whenever payoffs 
from $u$ depend only on posterior means.

By contrast, when the payoffs and costs of information depend on the 
entire posterior distribution, as in \cite{denti2022posterior}, the 
sequence of distributions over posteriors could satisfy NIPMC, yet still
be suboptimal. This is because the posterior means do not contain enough 
information to determine the optimum when the payoffs depend on the 
entire posterior. Consider, for instance, the following example:

\noindent\textbf{Example 1:} Consider $Z=\{0,0.5,1\}$ with a uniform 
prior, and posterior-separable cost of information acquisition equal to
reduction in entropy, i.e. $H(\pi)=\sum_{z\in Z} \pi(z)\ln(\pi(z))$. 
Let $A_1\coloneqq\{a_1,a_2,a_3\}$ and 
$A_2\coloneqq\{\hat{a}_1,a_2,\hat{a}_3\}$, with
\begin{equation*}
       u(a,z)=\left\{ 
       \begin{array}{ll}
       \ln 550 - \ln 9 - z\ln 25, & 
       a=a_1\\    
       \ln 100 - z\ln 100 ,& 
       a=\tilde{a}_1\\ 
       \ln 22, & a=a_2\\
      \ln 22 - \ln 9 + z\ln 25 ,& a=a_3\\
        z\ln 100 ,& a=\tilde{a}_3\\
       \end{array}
       \right. 
\end{equation*}

From \cite{matvejka2015rational}, equation (13), one can verify that for
both $A_1$ and $A_2$, the optimal distribution of posterior means has 
support on $z\in\{\frac{1}{6},\frac{1}{2},\frac{5}{6}\}$, with 
respective probabilities $\{\frac{3}{8},\frac{1}{4},\frac{3}{8}\}$ and
$\{\frac{55}{114},\frac{2}{57},\frac{55}{114}\}$; note that the MPC 
constraint (\ref{MPC}) does not hold at any $z\in(0,1)$ in either case.
However, they do not share the same optimal posteriors. 

From the perspective of the resultant posteriors, the example satisfies 
NIPC since $H$ is posterior separable. However, when looking at the 
posterior means, one must conclude that NIPMC is not satisfied. 
Consider, then, the case where instead of choosing the above 
probabilities of the posteriors, the probabilities over 
$z\in\{\frac{1}{6},\frac{1}{2},\frac{5}{6}\}$ are redistributed to 
$ \{163/486, 65/228,163/486  \}$ and $ \{ 1/2, 0, 1/2\}$ for $A_1$ and
$A_2$, respectively. Notice that this redistribution is admissible since
the MPC constraint is not binding. However, it is straightforward
to verify that there is no set of posteriors with these posterior 
means for which these distributions are respectively optimal. $\square$

The role that Axiom 2 plays is in providing the necessary conditions for a 
distribution of posteriors to be optimal if the DM has posterior-
mean separable costs of information acquisition. Suppose that 
$F_{\sigma_{A_i}}$ is optimal for each menu $A_i$ given prior $F_0$. Then
if we switch the prior to $F_0^i$, while keeping the menu $A_i$ fixed,
then the resulting distribution of posteriors $F_i$, as described in Axiom
\ref{NIPMC}, will be optimal. We formalize this in the lemma below:

\begin{lemma}
\label{DM2019}
    Fix menu $A$, and suppose that the DM has finite, Lipschitz continuous, 
    posterior-mean separable costs of information generated by $c:[0,1]\rightarrow\mathbb{R}$. 
    Suppose that $F_{\sigma_{A}}$ is an optimal distribution of posterior 
    means given prior $F_0$ (the state space need not be finite). 
    
    Then given prior $\hat{F}_0$, if for $\hat{F}_{\sigma_A}\in\mathcal{I}_{\hat{F}_0}$,
    $\mbox{supp}(\hat{F}_{\sigma_A})\subset \mbox{supp}(F_{\sigma_A})$ 
    and $I_{F_0,F_{\sigma_{A}}}(z)=0\implies I_{\hat{F}_0,\hat{F}_{\sigma_A}}(z)=0$, 
    the distribution of posterior means $\hat{F}_{\sigma_A}$ is optimal.
\end{lemma}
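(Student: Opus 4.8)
The plan is to recast the DM's problem as an instance of optimal persuasion in posterior means and then exploit its ``price function'' duality \citep{dworczak2019simple,dizdar2020simple}: a distribution of posterior means is optimal if and only if it admits a convex majorant of the objective that coincides with the objective on the distribution's support and is affine wherever the mean-preserving-contraction constraint is slack. The heart of the argument is that both of these features are inherited when one passes from $(F_0,F_{\sigma_A})$ to $(\hat F_0,\hat F_{\sigma_A})$ under the two hypotheses of the lemma, so the \emph{same} price function certifies optimality in the counterfactual. First I would rewrite the objective: since $C(F)=c(z_0)-\int c\,dF$, the DM's payoff $\int\phi_A\,dF-C(F)$ equals $\int v_A\,dF-c(z_0)$ for $v_A:=\phi_A+c$, so, discarding the prior-dependent constant, the problem is $\max_{F\in\mathcal I_{F_0}}\int v_A\,dF$, i.e.\ maximization of $\int v_A\,dF$ over mean-preserving contractions of $F_0$. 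Because $\phi_A$ is the maximum of finitely many affine functions and $c$ is finite and Lipschitz, $v_A$ is Lipschitz on the compact interval $[0,1]$, which places this squarely in the posterior-mean persuasion framework where strong duality holds and the dual is attained.

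Next I would record the price-function characterization: for $F^\ast\in\mathcal I_{F_0}$, $F^\ast$ is optimal if and only if there is a convex $p:[0,1]\to\mathbb R$ with
\begin{enumerate}[(a)]
\item $p(z)\ge v_A(z)$ for all $z\in[0,1]$;
\item $p(z)=v_A(z)$ for all $z\in\operatorname{supp}(F^\ast)$;
\item $p$ is affine on each connected component of the open set $\{z\in(0,1): I_{F_0,F^\ast}(z)>0\}$.
\end{enumerate}
The ``if'' direction is a one-line weak-duality computation: for any $G\in\mathcal I_{F_0}$,
\[
\int v_A\,dG\ \le\ \int p\,dG\ \le\ \int p\,dF_0\ =\ \int p\,dF^\ast\ =\ \int v_A\,dF^\ast,
\]
where the second inequality uses convexity of $p$ and the fact that $G$ is a mean-preserving contraction of $F_0$, the middle equality uses (c) (the gap $\int p\,dF_0-\int p\,dF^\ast$ is nonnegative for convex $p$ and, after integrating by parts twice, equals $\int I_{F_0,F^\ast}\,d\mu_p$ for the curvature measure $\mu_p\ge0$ of $p$, hence vanishes exactly when $p$ carries no curvature on $\{I_{F_0,F^\ast}>0\}$), and the outer relations use (a) and (b). The ``only if'' direction is where the regularity of $v_A$ enters, guaranteeing existence of an optimal price function and no duality gap.

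Finally I would transport the price function. Optimality of $F_{\sigma_A}$ given $F_0$ supplies a convex $p$ meeting (a)--(c) with $F^\ast=F_{\sigma_A}$, and I would show the same $p$ meets (a)--(c) for $\hat F_{\sigma_A}$ relative to $\hat F_0$. Property (a) is prior-free. Property (b) holds because $\operatorname{supp}(\hat F_{\sigma_A})\subseteq\operatorname{supp}(F_{\sigma_A})$ and $p=v_A$ on the latter. For property (c): since $F_{\sigma_A}\in\mathcal I_{F_0}$ and $\hat F_{\sigma_A}\in\mathcal I_{\hat F_0}$, both $I$-functions are nonnegative, so the hypothesis $I_{F_0,F_{\sigma_A}}(z)=0\Rightarrow I_{\hat F_0,\hat F_{\sigma_A}}(z)=0$ is, by contraposition, the inclusion $\{z:I_{\hat F_0,\hat F_{\sigma_A}}(z)>0\}\subseteq\{z:I_{F_0,F_{\sigma_A}}(z)>0\}$; hence every connected component of the left-hand open set sits inside a single component of the right-hand one, on which $p$ is affine, so $p$ is affine there as well. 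Applying the ``if'' direction with prior $\hat F_0$ then gives optimality of $\hat F_{\sigma_A}$. The step I expect to require the most care is the second one: invoking the exact version of posterior-mean persuasion duality that is valid under ``finite and Lipschitz'' and that delivers (a)--(c) as simultaneously necessary and sufficient (\cite{dworczak2019simple} for bounded upper-semicontinuous objectives, \cite{dizdar2020simple} under weaker conditions); by contrast, the reformulation in Step~1 and the support/contrapositive bookkeeping in Step~3 are routine.
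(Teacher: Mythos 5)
Your proposal is correct and follows essentially the same route as the paper: both transport the Dworczak--Martini price function $P$ certifying optimality of $F_{\sigma_A}$ under $F_0$ to the pair $(\hat F_0,\hat F_{\sigma_A})$, using the support inclusion for the contact condition and the contrapositive of the hypothesis $I_{F_0,F_{\sigma_A}}(z)=0\Rightarrow I_{\hat F_0,\hat F_{\sigma_A}}(z)=0$ to get affineness of $P$ on the set where the contraction constraint is slack. The only difference is cosmetic: you verify $\int P\,d\hat F_{\sigma_A}=\int P\,d\hat F_0$ via the double integration-by-parts identity expressing the gap as the integral of $I_{\hat F_0,\hat F_{\sigma_A}}$ against the curvature measure of $P$, whereas the paper carries out an explicit computation over maximal slack intervals with conditional splitting weights; both hinge on $P$ carrying no curvature where $I>0$.
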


Notice that our axiom, as well as the lemma, differ from the analogous 
Lemma 4 in \cite{denti2022posterior} in that we have an additional
condition regarding the mean-preserving contraction constraint, 
$I_{F_0,F_{\sigma_A}}$. This is due to the difference in optimization 
problems between the case of optimizing over distributions over 
\emph{posteriors}, as Denti does, and optimizing over distributions 
over \emph{posterior means}. As Denti points out, his Lemma 4 takes
advantage of the property of ``locally invariant posteriors" (LIP) of 
\cite{caplin2022rationally}, which states that the support of the optimal 
distribution of posteriors does not change for small perturbations,
assuming this support is still feasible. This property flows from the 
concavification arguments over posteriors that have become standard in 
information design and Bayesian persuasion, most notably by 
\cite{Aumann1995} and \cite{Kamenica2011}. 

It should be recognized that Lemma 4 may be of independent interest in
and of itself, in that it provides a posterior-mean analogue of LIP. That
is, given prior $\hat{F}_0$ that is close to $F_0$, the support of the 
optimal posterior means remains the same, but with the crucial caveat 
that the MPC constraint (\ref{MPC}) holds at the same values. 
This gives a recipe for candidate solutions for the optimal information,
given a solution for optimal distribution of posterior means $F$ for 
prior $F_0$. Take as given the intervals on which (\ref{MPC})
is not binding for $F_0$. For alternative prior $\hat{F}_0$, if there 
is some distribution of posterior means $\hat{F}$ such that, 
interval-by-interval, there is a mean-preserving contraction of $\hat{F}_0$ 
with support restricted to that of $F$, then this mean-preserving 
contraction is also optimal. Thus, in many directions of perturbation (i.e.
those that keep the MPC constraint binding at the same locations), the
same supports of posterior means will remain optimal.

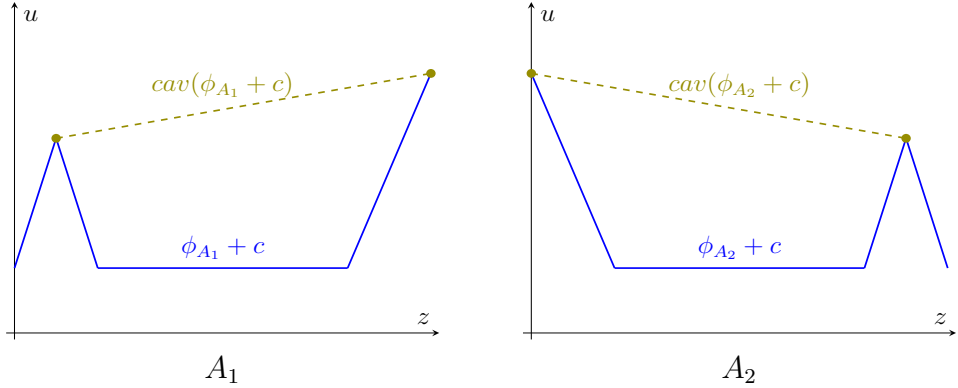
\begin{figure}[h]
\centering
\begin{subfigure}{0.4\textwidth}
\begin{center}
\resizebox{60mm}{45mm}{
\begin{tikzpicture}
\begin{axis}[
	axis lines=center,
  ymin=-0.02,ymax=1.02,
  xmin=-0.02,xmax=1.02,
  xlabel={$z$}, ylabel={$u$},
  ymajorticks=false,
  xmajorticks=false
]
\addplot[blue, thick,samples=200][domain=0:0.1] {4*x+0.2};
\addplot[blue, thick,samples=200][domain=0.1:0.2] {-4*(x-0.1)+0.6};
\addplot[blue, thick,samples=200][domain=0.2:0.8] {0.2};
\addplot[blue, thick,samples=200][domain=0.8:1] {0.2+3*(x-0.8)};
\addplot[olive,thick,dashed,samples=200][domain=0.1:1]{0.6+2/9*(x-0.1)};
\node[fill, olive, circle, inner sep=1.5pt] 	at (axis cs:0.1,0.6){};
\node[fill, olive, circle, inner sep=1.5pt] 	at (axis cs:1,0.8){};
\draw[blue] (axis cs: 0.5, 0.2) node[above] {$\phi_{A_1}+c$};
\draw[olive] (axis cs: 0.5, 0.7) node[above] {$cav(\phi_{A_1}+c)$};
\end{axis}
\end{tikzpicture}
}

$A_1$
\end{center}
\end{subfigure}
\hspace{5 mm}
\begin{subfigure}{0.4\textwidth}
\begin{center}
\resizebox{60mm}{45mm}{
\begin{tikzpicture}
\begin{axis}[
	axis lines=center,
  ymin=-0.02,ymax=1.02,
  xmin=-0.02,xmax=1.02,
  xlabel={$z$}, ylabel={$u$},
  ymajorticks=false,
  xmajorticks=false
]
\addplot[blue, thick,samples=200][domain=0.9:1] {-4*(x-0.9)+0.6};
\addplot[blue, thick,samples=200][domain=0.8:0.9] {4*(x-0.8)+0.2};
\addplot[blue, thick,samples=200][domain=0.2:0.8] {0.2};
\addplot[blue, thick,samples=200][domain=0:0.2] {0.8-3*x};
\addplot[olive,thick,dashed,samples=200][domain=0:0.9]{0.8-2/9*x};
\node[fill, olive, circle, inner sep=1.5pt] 	at (axis cs:0.9,0.6){};
\node[fill, olive, circle, inner sep=1.5pt] 	at (axis cs:0,0.8){};
\draw[blue] (axis cs: 0.5, 0.2) node[above] {$\phi_{A_2}+c$};
\draw[olive] (axis cs: 0.5, 0.7) node[above] {$cav(\phi_{A_2}+c)$};
\end{axis}
\end{tikzpicture}
}

$A_2$
\end{center}
\end{subfigure}

\caption{Optimal information acquisition}
\end{figure}

As shown in Figure 1, this implies that the upper envelope of the values of 
posteriors from the optimal distribution, net of information costs, must all 
lie on the same hyperplane $P$. Any unused posterior must lie weakly below this 
hyperplane (Lemma 3 of \cite{caplin2013behavioral}). Otherwise, as shown in 
Figure 2, there would be an alternative way of distributing posteriors that 
yielded a higher value. For instance, if there were an unused posterior, in the 
convex hull of the support of the distribution of posteriors, that had higher 
net value, a mean-preserving contraction of this distribution would improve the 
DM's (ex-ante) expected payoff. Conversely, if there were more ``extreme" 
posteriors with higher net value, a mean-preserving spread would increase the 
DM's expected payoff. Thus, a reallocation away from the given posteriors
would not improve the DM's payoff.

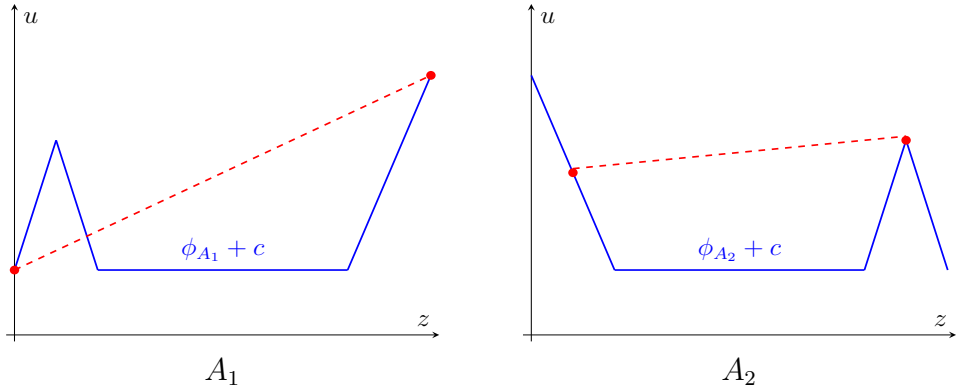
\begin{figure}[h]
\centering
\begin{subfigure}{0.4\textwidth}
\begin{center}
\resizebox{60mm}{45mm}{
\begin{tikzpicture}
\begin{axis}[
	axis lines=center,
  ymin=-0.02,ymax=1.02,
  xmin=-0.02,xmax=1.02,
  xlabel={$z$}, ylabel={$u$},
  ymajorticks=false,
  xmajorticks=false
]
\addplot[blue, thick,samples=200][domain=0:0.1] {4*x+0.2};
\addplot[blue, thick,samples=200][domain=0.1:0.2] {-4*(x-0.1)+0.6};
\addplot[blue, thick,samples=200][domain=0.2:0.8] {0.2};
\addplot[blue, thick,samples=200][domain=0.8:1] {0.2+3*(x-0.8)};
\addplot[red,thick,dashed,samples=200][domain=0:1]{0.2+0.6*x)};
\node[fill, red, circle, inner sep=1.5pt] 	at (axis cs:0,0.2){};
\node[fill, red, circle, inner sep=1.5pt] 	at (axis cs:1,0.8){};
\draw[blue] (axis cs: 0.5, 0.2) node[above] {$\phi_{A_1}+c$};
\end{axis}
\end{tikzpicture}
}

$A_1$
\end{center}
\end{subfigure}
\hspace{5 mm}
\begin{subfigure}{0.4\textwidth}
\begin{center}
\resizebox{60mm}{45mm}{
\begin{tikzpicture}
\begin{axis}[
	axis lines=center,
  ymin=-0.02,ymax=1.02,
  xmin=-0.02,xmax=1.02,
  xlabel={$z$}, ylabel={$u$},
  ymajorticks=false,
  xmajorticks=false
]
\addplot[blue, thick,samples=200][domain=0.9:1] {-4*(x-0.9)+0.6};
\addplot[blue, thick,samples=200][domain=0.8:0.9] {4*(x-0.8)+0.2};
\addplot[blue, thick,samples=200][domain=0.2:0.8] {0.2};
\addplot[blue, thick,samples=200][domain=0:0.2] {0.8-3*x};
\addplot[red,thick,dashed,samples=200][domain=0.1:0.9]{0.5+1/8*x};
\node[fill, red, circle, inner sep=1.5pt] 	at (axis cs:0.9,0.6){};
\node[fill, red, circle, inner sep=1.5pt] 	at (axis cs:0.1,0.5){};
\draw[blue] (axis cs: 0.5, 0.2) node[above] {$\phi_{A_2}+c$};
\end{axis}
\end{tikzpicture}
}

$A_2$
\end{center}
\end{subfigure}

\caption{Reallocation of posteriors does not improve}
\end{figure}

By contrast, our environment is one of information design over posterior
means. As a result, rather than using concavification arguments, the 
relevant characterization is by a \emph{price function} $P$ as defined 
by \cite{dworczak2019simple}. Prima facie, the set of posterior means
is one-dimensional, and so one might think the information acquisition 
problem would be simpler than in the potentially multidimensional 
probability simplex for all posteriors. However, an attempt to simply replace
``posteriors" with ``posterior means" in Denti's NIPC condition does not
work, as illustrated in the following example.

\textbf{Example 3:} Suppose that, as in Figure 3,
\[
F_0(z)=\begin{cases}
0.49, & z\in[0,0.4)\\
0.5, & z\in[0.4,0.6)\\
0.51, & z\in[0.6,1)\\
1, & z=1
\end{cases}
\]
\[
\phi_{A_1}(z)+c(z)=\begin{cases}
2-\frac{10}{3}z, & z\in[0,0.3)\\
1+0.1(z-0.3), & z\in [0.3,0.5)\\
1.02-0.1(z-0.5), & z\in[0.5,0.7)\\
1+\frac{10}{3}(z-0.7), & z\in[0.7,1]
\end{cases}
\]

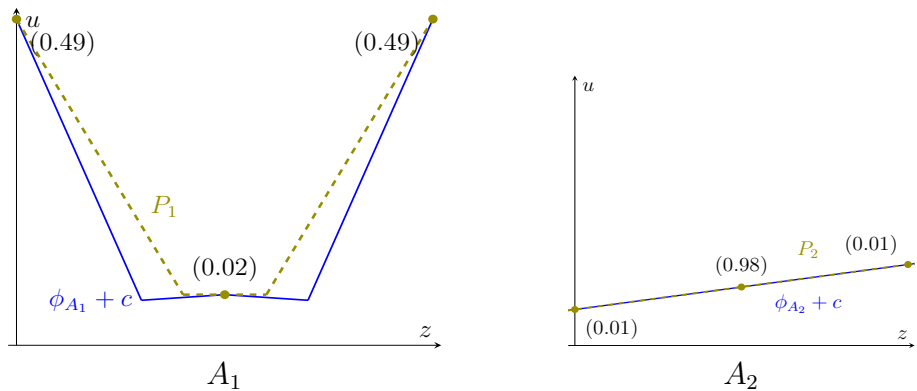
\begin{figure}[h]
\centering
\begin{subfigure}{0.4\textwidth}
\begin{center}
\resizebox{60mm}{45mm}{
\begin{tikzpicture}
\begin{axis}[
	axis lines=center,
  ymin=0.84,ymax=2.04,
  xmin=-0.02,xmax=1.02,
  xlabel={$z$}, ylabel={$u$},
  ymajorticks=false,
  xmajorticks=false
]
\addplot[blue, thick,samples=200][domain=0:0.3] {2-10/3*x};
\addplot[blue, thick,samples=200][domain=0.3:0.5] {1+0.1*(x-0.3)};
\addplot[blue, thick,samples=200][domain=0.5:0.7] {1.02-0.1*(x-0.5)};
\addplot[blue, thick,samples=200][domain=0.7:1] {1+10/3*(x-0.7)};
\addplot[olive,very thick,dashed,samples=200][domain=0:0.4]{2-2.45*x};
\addplot[olive,very thick,dashed,samples=200][domain=0.4:0.6]{1.02};
\addplot[olive,very thick,dashed,samples=200][domain=0.6:1]{1.02+2.45*(x-0.6)};
\node[label = {south east: (0.49)}, fill, olive, circle, inner sep=1.5pt] 	at (axis cs:0,2){};
\node[label = {south west: (0.49)}, fill, olive, circle, inner sep=1.5pt] 	at (axis cs:1,2){};
\node[label = {(0.02)}, fill, olive, circle, inner sep=1.5pt] 	at (axis cs:0.5,1.02){};
\draw[blue] (axis cs: 0.3, 1) node[left] {$\phi_{A_1}+c$};
\draw[olive] (axis cs: 0.3, 1.333) node[right] {$P_1$};
\end{axis}
\end{tikzpicture}
}

$A_1$
\end{center}
\end{subfigure}
\hspace{5 mm}
\begin{subfigure}{0.4\textwidth}
\begin{center}
\resizebox{48 mm}{36mm}{
\begin{tikzpicture}
\begin{axis}[
	axis lines=center,
  ymin=0.84,ymax=2.04,
  xmin=-0.02,xmax=1.02,
  xlabel={$z$}, ylabel={$u$},
  ymajorticks=false,
  xmajorticks=false
]
\addplot[blue, thick,samples=200] {1+0.2*x};
\addplot[olive,very thick,dashed,samples=200] {1+0.2*x};
\node[label={south east:(0.01)}, fill, olive, circle, inner sep=1.5pt] 	at (axis cs:0,1){};
\node[label={(0.98)}, fill, olive, circle, inner sep=1.5pt] 	at (axis cs:0.5,1.1){};
\node[label={north west:(0.01)}, fill, olive, circle, inner sep=1.5pt] 	at (axis cs:1,1.2){};
\draw[blue] (axis cs: 0.7, 1.1) node[below] {$\phi_{A_2}+c$};
\draw[olive] (axis cs: 0.7, 1.2) node[above] {$P_2$};
\end{axis}
\end{tikzpicture}
}

$A_2$
\end{center}
\end{subfigure}

\caption{Optimal information acquisition with $F_0$ as in Example 2}
\end{figure}

Consider now the alternative priors $\hat{F}_0$ which place probabilities 
of $0.5$ on $z=0$ and $z=1$, respectively, and the reallocation of 
posterior means from Figure 3 that yields $\hat{F}_1=\delta_{0.5}$, and 
$\hat{F}_2=\hat{F}_0$ (Figure 4). Without condition (\ref{allineq}),  
NIPMC would be the same as Denti's NIPC, with ``posteriors" replaced with 
``posterior means;" however, as can clearly be seen, this would be 
suboptimal: rather, the optimal concavification in the left-hand graph is 
also $\hat{F}_1=\hat{F}_0$, which is feasible when the prior only has 
support over the two states $z\in\{0,1\}$.

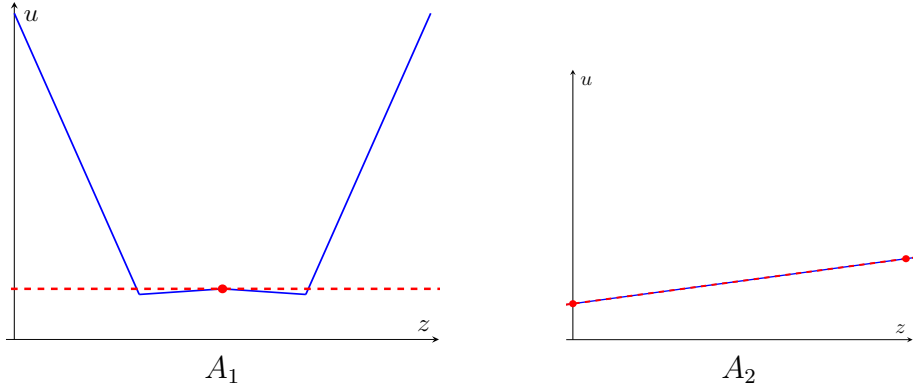
\begin{figure}[h]
\centering
\begin{subfigure}{0.4\textwidth}
\begin{center}
\resizebox{60mm}{45mm}{
\begin{tikzpicture}
\begin{axis}[
	axis lines=center,
  ymin=0.84,ymax=2.04,
  xmin=-0.02,xmax=1.02,
  xlabel={$z$}, ylabel={$u$},
  ymajorticks=false,
  xmajorticks=false
]
\addplot[blue, thick,samples=200][domain=0:0.3] {2-10/3*x};
\addplot[blue, thick,samples=200][domain=0.3:0.5] {1+0.1*(x-0.3)};
\addplot[blue, thick,samples=200][domain=0.5:0.7] {1.02-0.1*(x-0.5)};
\addplot[blue, thick,samples=200][domain=0.7:1] {1+10/3*(x-0.7)};
\addplot[red,very thick,dashed,samples=200] {1.02};
\node[fill, red, circle, inner sep=1.5pt] 	at (axis cs:0.5,1.02){};
\end{axis}
\end{tikzpicture}
}

$A_1$
\end{center}
\end{subfigure}
\hspace{5 mm}
\begin{subfigure}{0.4\textwidth}
\begin{center}
\resizebox{48 mm}{36mm}{
\begin{tikzpicture}
\begin{axis}[
	axis lines=center,
  ymin=0.84,ymax=2.04,
  xmin=-0.02,xmax=1.02,
  xlabel={$z$}, ylabel={$u$},
  ymajorticks=false,
  xmajorticks=false
]
\addplot[blue, thick,samples=200] {1+0.2*x};
\addplot[red,very thick,dashed,samples=200] {1+0.2*x};
\node[fill, red, circle, inner sep=1.5pt] 	at (axis cs:0,1){};
\node[fill, red, circle, inner sep=1.5pt] 	at (axis cs:1,1.2){};
\end{axis}
\end{tikzpicture}
}

$A_2$
\end{center}
\end{subfigure}

\caption{Suboptimal information acquisition after reallocation, with supp$(\hat{F}_0)=\{0,1\}$}
\end{figure}

The reason that adapting NIPC in this way fails is that it is no longer the
case that all mean-preserving spreads of the prior mean on this interval
are possible: they are subject to the constraint given by (\ref{feasiblecdfs}). 
This means that, given a CDF $F\in\mathcal{I}_{F_0}$, mean-preserving 
contractions of $F$ are feasible, while mean-preserving spreads may not 
be since they may not be in $\mathcal{I}_{F_0}$. As a result, there cannot 
be any concave regions of the function $P$ that majorizes $\phi_A+c$, as if 
so a mean-preserving contraction would take advantage of that. By contrast, 
there may be \emph{convex} regions of $P$, since the mean-preserving spread to 
take advantage of it may not be feasible. Instead of being
majorized by a hyperplane $P$ (here, an affine function) as in information 
designs over posteriors, the DM's objective is majorized by a convex
function. So, in order to preserve optimality, the convex majorizing function
must be preserved as well.

The relationship of Denti's NIPC axiom and our NIPMC axiom can be summarized as 
follows through the respective acyclicality conditions. NIPC states that the 
SDSC data can be viewed as coming from optimal \emph{concavification}, with 
reallocations of posteriors yielding suboptimality. By contrast, NIPMC states
that the SDSC is consistent with optimal \emph{majorization by a convex 
function}.

The characterization of information acquisition as restricted by 
(\ref{feasiblecdfs}) is what drives the necessity of Axiom \ref{NIPMC}.
Thus, when the cost of information is posterior-mean separable, 
Axiom \ref{NIPMC} is satisfied for $F_0^i=F_0$. By Lemma \ref{DM2019}, 
for each $i$, we have 
\[
\int [\phi_{A_i}(z)+c(z)]dF_i(z)\geq \int [\phi_{A_i}(z)+c(z)]dG_i(z)
\]
Taking the weighted average via $\beta_i$,
\[
\sum_{i=1}^n \beta_i\int[ \phi_{A_i}(z)+c(z)]dF_i(z)\geq \sum_{i=1}^n \beta_i\int [\phi_{A_i}(z)+c(z)]dG_i(z)
\]
Since we assumed that $\{G_i\}_{i=1}^n$ is a reallocation of posterior
means, we have that 
\[
\sum_{i=1}^n \beta_i\int c(z)dF_i(z)= \sum_{i=1}^n \beta_i\int c(z)dG_i(z)
\]
and therefore 
\[
\sum_{i=1}^n \beta_i\int \phi_{A_i}(z)dF_i(z)\geq \sum_{i=1}^n \beta_i\int \phi_{A_i}(z)dG_i(z)
\]
as required. This will be useful in our construction of the representation
of the cost function for fixed prior $F_0$.

This brings us to our main result, which shows that Axioms \ref{NIAS} and
\ref{NIPMC} are precisely what we need for our characterization.

\begin{theorem}
\label{costexist}
    For a given $u$ and $F_0$, the dataset $\{\sigma_{A_i}\}_{i=1}^n$ 
    satisfies NIAS and NIPMC if and only if it can be represented as 
    coming from a DM with posterior-mean separable costs of information
    for which $\{(F_{\sigma_{A_i}},D_{\sigma_{A_i}})\}_{i=1}^n$ are
    optimal information acquisitions and decision functions, respectively.
\end{theorem}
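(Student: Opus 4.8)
The plan is to prove the two implications separately; the ``if'' direction is short and essentially contained in the discussion preceding the theorem, while the ``only if'' direction carries the weight.

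\emph{If.} Suppose $C(F)=c(z_0)-\int c\,dF$ with $c$ upper-semicontinuous and concave rationalizes the data with each $(F_{\sigma_{A_i}},D_{\sigma_{A_i}})$ optimal. NIAS is immediate: optimality of the decision function $D_{\sigma_{A_i}}$ means every act in the support of $D_{\sigma_{A_i}}(\cdot\mid z)$ maximizes $u(\cdot,z)$, and by construction $a$ lies in that support at $z=z_{\sigma_{A_i}}(a)$. For NIPMC, I would argue exactly as in the paragraph before the theorem via Lemma \ref{DM2019}: under hypotheses (i)--(ii) of the axiom, $F_i$ is optimal for menu $A_i$ at the auxiliary prior $F_0^i$, so $\int(\phi_{A_i}+c)\,d(F_i-G_i)\ge 0$ for every $G_i\in\mathcal{I}_{F_0^i}$; taking the $\beta_i$-weighted sum and cancelling the $c$-terms via the reallocation identity $\sum_i\beta_i\int c\,dF_i=\sum_i\beta_i\int c\,dG_i$ yields (\ref{reallocineq}). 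The only technical point is that Lemma \ref{DM2019} is stated for Lipschitz $c$; since only finitely many decision problems, and hence finitely many revealed posterior means, are involved, and since adding an affine function to $c$ leaves $C$ unchanged, I would pass to a Lipschitz modification of $c$ agreeing with it on the relevant supports, or else reprove the needed inequality directly from the price-function characterization of optimality.

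\emph{Only if.} By Lemmas \ref{sufficiency} and \ref{revealedoptimal}, together with NIAS (which makes $D_{\sigma_{A_i}}$ a best response once $F_{\sigma_{A_i}}$ is fixed), it suffices to produce an upper-semicontinuous concave $c:[0,1]\to\mathbb{R}$ --- equivalently, by Claim 1 of \cite{ravid2022learning}, a canonical posterior-mean separable $C$ --- such that for every $i$,
\[
F_{\sigma_{A_i}}\in\arg\max_{F\in\mathcal{I}_{F_0}}\int_0^1\big(\phi_{A_i}(z)+c(z)\big)\,dF(z).
\]
The first step is to rewrite this optimality, for fixed $c$, in dual form following \cite{dworczak2019simple}: $F_{\sigma_{A_i}}$ is optimal iff there is a convex price function $p_i\ge\phi_{A_i}+c$ on $[0,1]$ that coincides with $\phi_{A_i}+c$ on $\mathrm{supp}(F_{\sigma_{A_i}})$ and is affine on every subinterval on which $I_{F_0,F_{\sigma_{A_i}}}$ is strictly positive. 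The task is then to choose a single $c$ so that all $n$ of these convex majorants can be exhibited simultaneously. I would carry this out by a separation argument: view the search for $c$ as an infinite-dimensional feasibility problem --- $c$ ranges over the convex cone of upper-semicontinuous concave functions, with constraints $\int(\phi_{A_i}+c)\,d(F_{\sigma_{A_i}}-F)\ge 0$ for all $i$ and $F\in\mathcal{I}_{F_0}$ --- and dualize in the space of finite signed measures on $[0,1]$ under the weak$^*$ topology. Infeasibility yields nonnegative weights $\beta_i$, auxiliary priors $F_0^i$ supported in $Z$, feasible $F_i\in\mathcal{I}_{F_0^i}$, and a reallocation $\{G_i\}$ with $\sum_i\beta_i\int\phi_{A_i}\,dG_i>\sum_i\beta_i\int\phi_{A_i}\,dF_i$, contradicting NIPMC. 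The reason the axiom has its precise form is that the support restriction $\mathrm{supp}(F_i)\subset\mathrm{supp}(F_{\sigma_{A_i}})$ and the binding-constraint implication (\ref{allineq}) are exactly the constraints cutting out the range of dual variables compatible with the existence of the majorants $p_i$ (they identify where $p_i$ must touch $\phi_{A_i}+c$ and where $p_i$ is forced to be affine). Equivalently and more constructively, one can define $c(z)$ as an infimum of $\beta$-weighted sums $\sum_k\beta_k\phi_{A_{i_k}}(z_k)$ over NIPMC-admissible chains ending at $z$ (up to an affine normalization), use the axiom to show this is finite and internally consistent, pass to its concave envelope, and check that the majorants exist --- the primal side of the same duality, in the spirit of the cyclic-monotonicity constructions of \cite{caplin2015revealed} and \cite{denti2022posterior}.

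The main obstacle I anticipate is precisely this dualization. Unlike Denti's posterior setting, where the majorant is a hyperplane and the feasible set is the whole simplex, here only mean-preserving contractions of $F_0$ are available, so the majorant is a genuine convex function and the feasibility set $\mathcal{I}_{F_0}$ is one-sided; matching this one-sided constraint against the two-sided requirement that $c$ be concave --- and ensuring the separating functional is represented by an honest upper-semicontinuous function rather than a singular object --- is the delicate step, and is where finiteness of $Z$ and $\mathcal{A}$, and care at the endpoints $z\in\{0,1\}$ where concave functions on $[0,1]$ may fail to be Lipschitz, will be needed. Once $c$ is in hand, verifying that $C$ is canonical and reading off (\ref{optimality}) from NIAS and the price functions is routine.
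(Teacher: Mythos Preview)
Your ``if'' direction is essentially the argument the paper gives in the paragraph preceding the theorem (apply Lemma~\ref{DM2019} to each $i$, take the $\beta_i$-weighted sum, cancel the $c$-integrals using the reallocation identity). One remark: you assume $c$ concave, but the theorem's hypothesis is merely that $c$ is upper-semicontinuous, and Lemma~\ref{DM2019} does not use concavity, so drop that assumption. The paper's written-out necessity proof is more elaborate than yours because it works from an arbitrary optimal pair $(F_i^*,D_{A_i}^*)$ generating $\sigma_{A_i}$ (not necessarily the revealed one) and constructs auxiliary distributions $\hat F_i,\hat G_i$ to which Lemma~\ref{DM2019} applies; under the theorem's stated hypothesis that $(F_{\sigma_{A_i}},D_{\sigma_{A_i}})$ is itself optimal, your shorter route suffices.

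The ``only if'' direction has a genuine gap. You set up the feasibility problem as a search over \emph{concave} $c$ (``it suffices to produce an upper-semicontinuous concave $c$''; ``pass to its concave envelope''), but the theorem does not claim, and the paper explicitly does not assert, that a concave $c$ exists: see the paragraph immediately after the theorem and Section~5.5, where it is shown that the concavification of a rationalizing $c$ can change the optimal information structure (Example~3). Consequently, your feasibility set can be empty even when NIAS and NIPMC hold, and the ``violation'' your separation argument would produce need not be a violation of NIPMC at all---it could simply reflect the extra, unwarranted concavity constraint. The paper's construction yields $c(z)=-\max_{A,a}\{u(a,z)-\Lambda_A(z)\}$ with each $\Lambda_A$ piecewise affine and convex, and makes no claim that this is concave.

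Methodologically, the paper avoids the infinite-dimensional duality you anticipate as the main obstacle: finiteness of $Z$ and of each $A$ reduces everything to a finite system. Lemma~\ref{exanteNIPMC} reformulates NIPMC as the nonexistence of nonnegative coefficients $\beta_{(A,B)}(a,b)$ satisfying a finite list of linear equalities and inequalities (equations~(\ref{reallocMPCnobind})--(\ref{exanteNIPMCeq})); a Farkas-type alternative (Lemma~\ref{farkas}) then delivers multipliers $\lambda_{(z^*,A)}$ indexed by the binding points $z^*\in Z_A^*$, from which both $c$ and the Dworczak--Martini price functions $P_A^*=\Lambda_A$ are written down explicitly. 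Your hand-wave ``infeasibility yields $\beta_i$, auxiliary priors $F_0^i$, feasible $F_i$, and a reallocation $\{G_i\}$'' is precisely the content of Lemma~\ref{exanteNIPMC}, and getting the priors $F_0^i$ and the binding-constraint matching~(\ref{allineq}) to emerge from an abstract separation is not automatic---that is the work the lemma does.
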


The overall structure of the argument is similar to that in 
\cite{denti2022posterior}, though the details of the particular steps differ
significantly. In the ``necessity" direction, we show how to normalize the 
potentially different prior distributions $F_0^i$ to come from the same prior, 
and then follow a similar argument to that in the discussion above Theorem 
\ref{costexist}. In the ``sufficiency" direction, we translate the problem 
into a convex program over $S_A$, and show that Axiom \ref{NIPMC} is 
equivalent to a system of linear inequalities which satisfy a version of 
``cyclical monotonicity," introduced by 
\cite{rockafellar1966characterization}, and first used in revealed preference 
analysis by \cite{afriat1967construction}, and in mechanism design by 
\cite{rochet1987necessary}. Recall that \cite{dworczak2019simple} show that there 
exists some price function $P$ that majorizes the DM's utility as a function of 
posterior mean, here being $\max_{a\in A_i} u(a,z)+c(z)$. Using Farkas' lemma,%
\footnote{This approach is also used in the previous literature on identifying
information costs \citep{caplin2015revealed,chambers2020costly,denti2022posterior}.}
we show the existence of a vector $\lambda$ which can be used to both 
construct $c$ as well as the price function $P$ that rationalize $\sigma_{A_i}$.

Given the existence of a representation of information costs via a 
posterior-mean separable function $c$, one can use the construction of
$c$ in the above result to identify such a possible representation. In 
particular, the aforementioned vector $\lambda$ provides a set of linear 
inequalities which the payoffs must satisfy, given by 
(\ref{optimuminequality}) in the proof. One can then identify the set of such 
$\lambda$ that satisfy these inequalities, and then construct $c$ as in 
equation (\ref{IDcost}).

It should be noted that we do not restrict the function $c$ in Theorem 
\ref{costexist} to be concave, and hence it may be non-canonical (though the 
case of $c$ being concave is also covered by our theorem). We discuss the 
property of concavity of $c$ in Section 4.6.

\section{Extensions and Discussion}

\subsection{Uniform Posterior-Mean Separability}

Up to now, we have focused on the case of a DM who must make decisions
with respect to a \emph{single} prior $F_0$. Thus our representation 
result only holds with respect to $F_0$; it does not say anything about
what the representation would be if the prior were to change. One may also
be interested in the case where the cost function $c$ is \emph{independent}
of the prior; thus, for \emph{all} $F_0$, the cost of information acquisition
would be given by (\ref{pmsepcosts}). In other words, analogous to the 
``uniform posterior separability" (UPS) condition of 
\cite{caplin2022rationally}, where the function $H$ in (\ref{postsepcosts})
does not depend on the prior $\pi$, one would here have a notion of
``uniform posterior-mean separability" (UPMS).

\begin{defn}
    A cost function $C:\mathcal{I}_{F_0}\rightarrow(-\infty,\infty)$ is 
    \emph{uniformly posterior-mean separable} if, for all 
    $F_0\in \mathcal{F}$ (with respective prior means $\mu_0$,
    there is an upper-semicontinuous function $c:[0,1]\rightarrow \mathbb{R}$ 
    such that 
    \begin{equation}
        C(F)=c(z_0)-\int_0^1 c(z)dF(z)
    \end{equation}
\end{defn}

The question, then, is what properties guarantee that $c$ is the same for
all $F_0$, assuming that for each $F_0$, Axioms \ref{NIAS} and \ref{NIPMC}
hold. \cite{denti2022posterior} argues that in the case of posterior-separable
costs, the extension to uniform posterior separability is made by a 
straightforward strengthening of NIPMC, which allows for different
priors for each decision problem, and then reallocating across them. This 
turns out to be the case in our environment as well.

For each $i\in \{1,...,n\}$, let the respective priors be $\tilde{F}_0^i$,
which we assume each have full support in $Z$. We modify the subscripts in the 
definitions of previous variables and functions that had $A$ to now include 
$\tilde{F}_0$ as well, and replace the prior $F_0$ with $\tilde{F}_0^i$ as
needed.

We now state the appropriate axioms, the first simply being a rewriting of
NIAS with the modified notation, and the second being the aforementioned
strengthening of NIPMC.

\begin{axiom}
    \label{UNIAS}
    For every $A\in\mathcal{A}$ with associated prior $\tilde{F}_0$, every 
    $a\in\mbox{supp}(\sigma_A)$, and $b\in A$,
    \[
        u(a,z_{\sigma_{(\tilde{F}_0,A)}}(a))\geq u(b,z_{\sigma_{(\tilde{F}_0,A)}}(a))
    \]
\end{axiom}

\begin{axiom}
    \label{UNIPMC}    
    For $i\in\{1,...,n\}$, consider the sequence of menus
    $A_i\in\mathcal{A}$, coefficients $\beta_i\in \mathbb{R}_+$, priors 
    $F_0^i,\tilde{F}_0^i\in \mathcal{F}$ such that 
    $\mbox{supp}(F_0^i)\subset Z$, and revealed distributions of posterior 
    means $F_i\in \mathcal{I}_{F_0^i}$ such that:
    \begin{enumerate}[(i)]
        \item $\mbox{supp}(F_i)\subset \mbox{supp}(F_{\sigma_{(\tilde{F}_0^i,A_i)}})$, and
        \item For all $z\in [0,1]$, 
        \[
            I_{\tilde{F}_0^i,F_{\sigma_{(\tilde{F}_0^i,A_i)}}}(z)=0\implies I_{F^i_0,F_i}(z)=0
        \]
    \end{enumerate} 
    Then for every reallocation of posterior means $\{G_i\}_{i=1}^n$,
    \[
        \sum_{i=1}^n \beta_i \int \phi_{A_i}(z)dF_i(z)\geq \sum_{i=1}^n \beta_i \int \phi_{A_i}(z)dG_i(z)
    \]
\end{axiom}

\begin{theorem}
    For a given $u$ and $\{\tilde{F}_0^i\}_{i=1}^N$, the dataset 
    $\{\sigma_{\tilde{F}_i^0,A_i}\}_{i=1}^n$ satisfies Axioms \ref{UNIAS} and
    \ref{UNIPMC} if and only if it can be represented as 
    coming from a DM with \emph{uniformly} posterior-mean separable costs of 
    information for which $\{ (F_{\sigma_{\tilde{F}_0^i,A_i}}, D_{\sigma_{\tilde{F}_0^i,A_i}})\}_{i=1}^n$ are the
    optimal information acquisitions and decision functions, respectively.
\end{theorem}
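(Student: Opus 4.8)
The plan is to reprise the proof of Theorem~\ref{costexist} over the whole family of priors $\{\tilde F_0^i\}_{i=1}^n$ instead of a single $F_0$, exploiting that Axiom~\ref{UNIPMC} licenses reallocations of posterior means \emph{across} decision problems carrying different priors, which is precisely what forces the derivative $c$ to be prior-independent.

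\emph{Necessity.} Suppose the data are rationalized by a uniformly posterior-mean separable cost with derivative $c$, so that $C(F)=c(z_0)-\int_0^1 c(z)\,dF(z)$ for every prior with mean $z_0$. Fix a configuration $\{A_i,\beta_i,F_0^i,\tilde F_0^i,F_i\}_{i=1}^n$ satisfying (i)--(ii) of Axiom~\ref{UNIPMC}. For each $i$, since $\sigma_{(\tilde F_0^i,A_i)}$ is optimal, $F_{\sigma_{(\tilde F_0^i,A_i)}}$ is an optimal distribution of posterior means under the prior $\tilde F_0^i$; conditions (i)--(ii) are exactly the hypotheses of Lemma~\ref{DM2019} (with $F_0\mapsto\tilde F_0^i$, $\hat F_0\mapsto F_0^i$, $\hat F_{\sigma_A}\mapsto F_i$), which therefore gives that $F_i$ maximizes $\int_0^1[\phi_{A_i}(z)+c(z)]\,dF(z)$ over $F\in\mathcal I_{F_0^i}$. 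For any reallocation $\{G_i\}_{i=1}^n$ we have $G_i\in\mathcal I_{F_0^i}$, hence $\int[\phi_{A_i}+c]\,dF_i\ge\int[\phi_{A_i}+c]\,dG_i$; multiplying by $\beta_i$, summing, and cancelling the $c$-terms using \eqref{reallocation}---legitimate precisely because $c$ is a single function shared by all priors, the one place uniformity enters---yields $\sum_i\beta_i\int\phi_{A_i}\,dF_i\ge\sum_i\beta_i\int\phi_{A_i}\,dG_i$. Axiom~\ref{UNIAS} follows from NIAS applied to each $\sigma_{(\tilde F_0^i,A_i)}$.

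\emph{Sufficiency.} By \cite{dworczak2019simple}, the pair $(F_{\sigma_{(\tilde F_0^i,A_i)}},D_{\sigma_{(\tilde F_0^i,A_i)}})$ is optimal under prior $\tilde F_0^i$ for a posterior-mean separable cost with derivative $c$ iff there is a convex price function $P_i\ge\phi_{A_i}+c$ on $[0,1]$ that agrees with $\phi_{A_i}+c$ on $\mathrm{supp}(F_{\sigma_{(\tilde F_0^i,A_i)}})$ and is affine on every maximal interval where $I_{\tilde F_0^i,F_{\sigma_{(\tilde F_0^i,A_i)}}}>0$, with Axiom~\ref{UNIAS} securing the act choice at each revealed mean. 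So it suffices to exhibit one upper-semicontinuous $c:[0,1]\to\mathbb R$ together with convex $P_1,\dots,P_n$ of this form. Taking as unknowns the values of $c$ on $\bigcup_i\mathrm{supp}(F_{\sigma_{(\tilde F_0^i,A_i)}})\cup\{0,1\}$ and the slopes and intercepts of the piecewise-affine $P_i$, this becomes a finite system of linear inequalities. Following the proof of Theorem~\ref{costexist}, one shows Axiom~\ref{UNIPMC} is equivalent to the cyclical-monotonicity condition for this system---now with cycles indexed by pairs $(i,z)$ for $z\in\mathrm{supp}(F_{\sigma_{(\tilde F_0^i,A_i)}})$ ranging over all decision problems and all priors---and applies Rockafellar's theorem (or Farkas' lemma, as in the related literature) to extract a potential vector $\lambda$. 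Because Axiom~\ref{UNIPMC} permits cycles mixing distinct priors, $\lambda$ is mutually consistent across the $\tilde F_0^i$, so the shortest-path/infimal-convolution construction based on $\lambda$ (the analogue of \eqref{IDcost}, with payoff restrictions as in \eqref{optimuminequality}) defines a single $c$ valid for every prior; the same $\lambda$ builds each $P_i$, and checking the Dworczak--Martini conditions together with Axiom~\ref{UNIAS} shows that $c$ rationalizes $\{\sigma_{(\tilde F_0^i,A_i)}\}$ uniformly.

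\emph{Main obstacle.} The delicate step, not present in Theorem~\ref{costexist}, is the prior-independence of $c$: the feasible sets $\mathcal I_{\tilde F_0^i}$ and the binding sets $\{z:I_{\tilde F_0^i,F_{\sigma_{(\tilde F_0^i,A_i)}}}(z)=0\}$ vary with $i$, so one must verify that the cross-prior reallocations in Axiom~\ref{UNIPMC} generate enough cycle inequalities to pin down a \emph{common} potential, and that the resulting $c$ is upper-semicontinuous on all of $[0,1]$ while remaining compatible with every convex majorization---that is, over intervals where several constraints $I_{\tilde F_0^i,F_{\sigma_{(\tilde F_0^i,A_i)}}}$ bind, the affine segments demanded of the distinct $P_i$ must not conflict once read through $c$. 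The full-support assumption on each $\tilde F_0^i$ is used to ensure the union of revealed-mean supports covers enough of $[0,1]$ that $c$ is determined where it must be and can be extended elsewhere by upper-semicontinuous convex interpolation without violating any $P_i\ge\phi_{A_i}+c$. As in Theorem~\ref{costexist}, no concavity of $c$ is claimed.
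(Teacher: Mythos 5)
Your proposal follows the same route the paper takes: the paper omits the proof of this theorem entirely, stating only that it is ``substantially identical to the case of a single prior,'' i.e., a rerun of the proof of Theorem \ref{costexist} with the priors $\tilde{F}_0^i$ in place of $F_0$, Lemma \ref{DM2019} applied prior-by-prior, and the Farkas/cyclical-monotonicity construction run over cycles that mix decision problems with different priors. Your identification of exactly where uniformity enters (cancelling the $c$-terms in necessity because $c$ is shared across priors, and extracting a single potential $\lambda$, hence a single $c$, in sufficiency) is precisely the content of that adaptation, so the proposal is correct and matches the paper's approach.
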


As in \cite{denti2022posterior}, the proof is substantially identical to 
the case of a single prior, and so is omitted.

\subsection{Continuum of States/Acts}

The analysis in Theorem \ref{costexist} pertains to the case of finite 
states and acts. However, a large part of the motivation for the
use of models of information design where the problem only depends on
the distribution of posterior means is to allow for a continuum of 
states and acts \citep{dworczak2019simple}. Indeed, many of the 
natural applications for these models of information acquisition also
take advantage of techniques stemming from the continuum of states
\citep{mensch2022monopoly}. The question of extension to a continuum
is thus important both in terms of motivation and application.

Our model is modified as follows; where unmodified, the same notation
and terminology is used as before. Let $X$ be a compact grand set of acts,
and endow it with the Borel $\sigma$-algebra. Let $\mathcal{A}$ be a 
set of compact subsets of $X$, with no redundant/dominated acts: that
is, for all $a,a^\prime\in A\subset \mathcal{A}$, 
$u(a,\cdot)\neq u(a^\prime,\cdot)$, and $\exists z,z^\prime\in[0,1]$ such
that $u(a,z)> u(a^\prime,z)$ and $u(a,z^\prime)<u(a^\prime,z^\prime)$.
For each $A\in\mathcal{A}$, the analyst observes the measurable function 
$\sigma_A:Z\rightarrow\Delta(A)$. We can then analogously define the decision 
function $D_A:[0,1]\rightarrow\Delta(A)$ as in the finite model, namely, for 
all measurable $B\subset A$,
\[
\int_0^1 \int_B d\sigma_A(a\vert z)dF_0(z)=\int_0^1 \int_B dD_A(a\vert \tilde{z})dF(\tilde{z})
\]
Let $\mathcal{D}_A$ be the set of measurable functions from $Z$ to $\Delta(A)$. 
Then $(F,D_A)$ is optimal for the DM if 
\[
\int_{Z}\int_A u(a,z)dD_A(a\vert z)dF(z)-C(F)\geq \int_Z \phi_A(z)d\hat{F}(z)-C(\hat{F}),\forall \hat{F}\in\mathcal{I}_{F_0}
\]
Define 
\[
\sigma_A(B)=\int_0^1 \sigma_A(B\vert z)dF_0(z)
\]
as the unconditional probability that $a\in B$ is chosen according to $\sigma_A$.
For each measurable $B\subset \mbox{supp}(\sigma_A)$ such that $\sigma_A(B)>0$, 
the \emph{revealed posterior mean} $z_{\sigma_A}(B)$ is given by 
\[
z_{\sigma_A}(B)=\frac{\int_0^1\int_B zd\sigma_A(a\vert z)dF_0(z)}{\int_0^1\int_B d\sigma_A(a\vert z)dF_0(z)}
\]
For $B$ such that $B\cap\mbox{supp}(\sigma_A)=\emptyset$, adopt the 
convention that $z_{\sigma_A}(B)=z_0$. We can extend the above definition to 
sets $B$ of measure $0$ by the use of the Lebesgue differentiation theorem. 
Thus for almost all $a\in A$ with respect to $\sigma_A$, the revealed posterior 
mean $z_{\sigma_A}(a)$ is well-defined. The revealed decision function  
$D_{\sigma_A}\in\mathcal{D}_A$ is defined as follows:
\begin{enumerate}[(i)]
    \item If $z\in\mbox{supp}(F_{\sigma_A})$ and $z=z_{\sigma_A}(a)$, then 
    for all $Y$ such that $z\in Y$,
    \[
    \int_Y dD_{\sigma_A}(a\vert z)dF_{\sigma_A}(z)=d\sigma_A(a)
    \]
    \item If $z\in\mbox{supp}(F_{\sigma_A})$ and $z\neq z_{\sigma_A}(a)$,
    then $dD_{\sigma_A}(a\vert z)=0$;
    \item If $z\notin\mbox{supp}(F_{\sigma_A})$, then 
    $D_{\sigma_A}(\cdot\vert z)=\sigma_A(\cdot)$.
\end{enumerate}
In particular, if each act $a$ has a distinct revealed posterior mean, then 
$dF_{\sigma_A}(z_{\sigma_A}(a))=d\sigma_A(a)$ and $D_{\sigma_A}(a\vert z_{\sigma_A}(a))=1$.

The adaptations of Lemmas \ref{sufficiency} and \ref{revealedoptimal} is 
straightforward and therefore omitted.

\begin{theorem}
    \label{continuumexist} The SDSC with $Z\subset[0,1]$ and $A\subset 
    \mathcal{A}$ compact satisfies Axioms \ref{NIAS} and 
    \ref{NIPMC} if and only if it can be rationalized by a posterior-mean
    separable cost of information acquisition.
\end{theorem}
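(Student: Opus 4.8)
The plan is to reduce Theorem~\ref{continuumexist} to the finite case (Theorem~\ref{costexist}) by an approximation argument, exploiting the fact that all of the economically relevant objects --- the utility $u(a,z)=zu(a,1)+(1-z)u(a,0)$, the indirect utilities $\phi_A$, the cost primitive $c$, and the mean-preserving-contraction constraint $I_{F_0,F}$ --- are driven entirely by the one-dimensional variable $z\in[0,1]$ and by the two ``extreme values'' $u(a,0),u(a,1)$ of each act. The continuum of acts enters the problem only through the upper envelope $\phi_A(z)=\max_{a\in A}u(a,z)$, which, being a max of affine functions, is a \emph{convex} function on $[0,1]$; and since $X$ is compact and $u$ continuous, $\phi_A$ is Lipschitz. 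So the genuine object of interest is a finite collection of convex Lipschitz functions $\phi_{A_1},\dots,\phi_{A_n}$ on $[0,1]$, together with the finitely-supported revealed distributions $F_{\sigma_{A_i}}$.

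\medskip

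\textbf{Necessity} ($\Rightarrow$ direction, i.e. a posterior-mean separable representation implies the axioms) is essentially unchanged from the finite case. Given the representation with derivative $c$, for each $i$ the revealed distribution $F_{\sigma_{A_i}}$ is optimal for the problem $\max_{F\in\mathcal{I}_{F_0}}\int[\phi_{A_i}+c]\,dF$ (using Lemma~\ref{revealedoptimal} and canonicity, or the given non-canonical construction). Lemma~\ref{DM2019} is stated for \emph{infinite} state spaces already --- its hypotheses only require $c$ finite and Lipschitz and $\phi_A$ convex --- so it applies verbatim: under the support and binding-constraint conditions (i) and (ii) of NIPMC, $F_i$ is optimal for the perturbed prior $F_0^i$, so $\int[\phi_{A_i}+c]\,dF_i\ge\int[\phi_{A_i}+c]\,dG_i$ for every $G_i\in\mathcal{I}_{F_0^i}$; taking the $\beta_i$-weighted sum and cancelling the $\int c\,dF_i=\int c\,dG_i$ terms via the reallocation identity~\eqref{reallocation} gives~\eqref{reallocineq}. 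NIAS is immediate from optimality of each chosen act at its revealed posterior mean. This is the paragraph labelled ``necessity'' in the discussion above Theorem~\ref{costexist}, and nothing in it used finiteness of $X$.

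\medskip

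\textbf{Sufficiency} is where the continuum bites, and it is the main obstacle. One cannot directly run the Farkas'/cyclical-monotonicity argument of Theorem~\ref{costexist}, since that produces a finite-dimensional multiplier vector $\lambda$ indexed by the (now infinitely many) candidate posterior means and acts. The plan is a two-step discretization. \emph{First}, observe that each $\phi_{A_i}$ is convex and piecewise... not piecewise-affine in general, but it is a sup of affine functions, so it is the pointwise limit of an increasing sequence of finite maxima $\phi_{A_i}^{(k)}=\max_{a\in A_i^{(k)}}u(a,\cdot)$ with $A_i^{(k)}\subset A_i$ finite and containing all the (finitely many) acts in $\mathrm{supp}(\sigma_{A_i})$; one can moreover choose the $A_i^{(k)}$ so that on the finite grid $\bigcup_i\mathrm{supp}(F_{\sigma_{A_i}})$ the equality $\phi_{A_i}^{(k)}=\phi_{A_i}$ holds exactly (just include, for each grid point $z$, an act attaining the max there). \emph{Second}, check that NIAS and NIPMC for the original (continuum) data imply NIAS and NIPMC for the finite menus $\{A_i^{(k)}\}$: NIAS is inherited since the support and the revealed posterior means are unchanged; NIPMC is inherited because any reallocation feasible for the finite problem is feasible for the continuum problem, $\phi_{A_i^{(k)}}\le\phi_{A_i}$ everywhere with equality on the relevant supports, and conditions (i)--(ii) are expressed purely through $F_{\sigma_{A_i}}$ and $I_{F_0,F_{\sigma_{A_i}}}$, which do not change. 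Then Theorem~\ref{costexist} applied to $\{A_i^{(k)}\}$ yields a derivative $c^{(k)}$ and price functions rationalizing the data; since these $c^{(k)}$ are concave (or, in the non-canonical construction, built from a bounded multiplier) and normalized by $c^{(k)}(z_0)=0$ (WLOG), they are uniformly bounded and equi-Lipschitz on compacta interior to $[0,1]$, so by Arzel\`a--Ascoli a subsequence converges pointwise to an upper-semicontinuous $c$. One then verifies that $c$ rationalizes the continuum data: optimality of each $F_{\sigma_{A_i}}$ for $\max_F\int[\phi_{A_i}+c]\,dF$ passes to the limit because the objective is affine in $F$ and $\phi_{A_i}+c^{(k)}\to\phi_{A_i}+c$ uniformly (on the support, exactly), and the constraint set $\mathcal{I}_{F_0}$ is weak-$^*$ compact. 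The Bayes'-consistency relation~\eqref{Bayes} for $(F_{\sigma_{A_i}},D_{\sigma_{A_i}})$ is automatic from the construction of the revealed objects. The delicate point to get right is the uniform Lipschitz bound on the $c^{(k)}$ near the endpoints $0$ and $1$ --- here one uses $\{0,1\}\subset\mathrm{supp}(F_0)$ and the fact that $\phi_{A_i}$ is globally Lipschitz to bound the relevant multipliers, exactly as Lipschitz continuity of $c$ was needed in Lemma~\ref{DM2019} --- together with checking that the limiting $c$ need not be continuous at the endpoints but only upper-semicontinuous, which is all the definition of posterior-mean separability requires. If instead one prefers to avoid approximation entirely, the alternative is to redo the Farkas' argument in the infinite-dimensional setting: the dual object becomes a signed measure rather than a vector, the ``cyclical monotonicity'' condition of Rockafellar still characterizes the existence of a convex potential, and Dworczak--Martini's price-function existence result already supplies the majorizing convex $P$; but managing the measurability and duality carefully makes this heavier than the discretization route, so the approximation argument is the one I would write up.
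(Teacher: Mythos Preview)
Your sufficiency argument rests on a false premise. You write that the reduction targets ``the finitely-supported revealed distributions $F_{\sigma_{A_i}}$'' and plan to take finite submenus $A_i^{(k)}$ that ``contain all the (finitely many) acts in $\mathrm{supp}(\sigma_{A_i})$.'' But in the continuum extension both the state space $Z\subset[0,1]$ and the menus $A_i$ are merely compact, so $\mathrm{supp}(\sigma_{A_i})$ --- and hence $\mathrm{supp}(F_{\sigma_{A_i}})$ --- can be uncountable. Once this fails, the discretization plan collapses: you cannot include all support acts in a finite $A_i^{(k)}$; you cannot force $\phi_{A_i}^{(k)}=\phi_{A_i}$ on the (infinite) set of revealed means; and conditions (i)--(ii) of NIPMC, which are phrased in terms of $\mathrm{supp}(F_{\sigma_{A_i}})$ and the zero set of $I_{F_0,F_{\sigma_{A_i}}}$, do not transfer to an approximation with different support. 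Even if you also discretize the supports, the $c^{(k)}$ produced by Theorem~\ref{costexist} rationalize only the approximated data, and you give no argument that optimality survives the double limit. The equi-Lipschitz bound needed for Arzel\`a--Ascoli is likewise unsupported: in the finite construction $c$ is Lipschitz because the multiplier vector $\lambda$ is finite-dimensional, but nothing prevents $\|\lambda^{(k)}\|$ from blowing up as the grid is refined (and you yourself flag this as ``delicate'' without resolving it). Finally, your passing assertion that the $c^{(k)}$ are concave contradicts the paper's own discussion: the representation need not be canonical.

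The paper in fact takes exactly the route you dismiss as ``heavier'': it runs the duality argument directly in infinite dimensions, replacing the finite Farkas' lemma by the Convex Cone Alternative (Aliprantis--Border, Corollary~5.84). The multipliers $\lambda_{(z^*,A)}$ become \emph{measures} $\lambda_A$ on $Z^*_A$, and $\Lambda_A(z)$ is defined by integration against $d\lambda_A$; a continuum analogue of Lemma~\ref{exanteNIPMC} (sums replaced by integrals, with auxiliary measures $\tau_i$ on $A_i$) furnishes the alternative hypothesis, and weak-$^*$ continuity of the linear functionals provides the topological input. On the necessity side, the paper also does more than ``nothing used finiteness of $X$'': one must construct $A_i^*$, $\tilde D_{A_i}^*$, and $\tilde\sigma_{A_i}$ measurably, which the paper does by ordering acts via the slope $u_z(a,\cdot)$ (using the no-redundant-acts assumption), invoking the Lebesgue differentiation theorem for $D_{A_i}^*$, and using Radon--Nikodym derivatives in the definitions of $\hat F_i$ and $\hat G_i$.
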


\subsection{Monotone Partitional Information}

A special case of interest is where the information chosen by the DM,
as represented by the revealed posterior means, is \emph{monotone 
partitional}: one can divide the set of states into intervals, within 
which the same act is chosen (with possible mixing between the two 
acts on the boundary). We present a formal definition of a monotone
partitional information structure below.

\begin{defn}
Let $F\in \mathcal{I}_{F_0}$. $F$ is \emph{monotone partitional} if,
for all $z_1,z_2\in \mbox{supp}(F)$, where $z_1<z_2$, there exists 
$z\in[z_1,z_2]$ such that $I_{F_0,F}(z)=0$.
\end{defn}

An important implication of the monotone partitional structure is that 
the mean-preserving contraction constraint (\ref{MPC}) must hold at the
endpoints of these intervals, and that there is exactly one posterior mean
in each such interval. These observations allow for a more straightforward 
expression of Axiom \ref{NIPMC} whenever $F_{\sigma_{A_i}}$ is monotone 
partitional for each $i\in\{1,...,n\}$.

In the case where the prior $F_0$ is absolutely continuous over $[0,1]$
(see Section 5.2), a monotone partitional structure can be expressed by
an interval partition of $[0,1]$, and taking the conditional mean of $z$ 
for each element of the partition under $F_0$. Axiom \ref{NIPMC} is then 
equivalent to stating that, for every sequence of priors $\{F_0^i\}$
with the same conditional means under the same partitional structure, 
there does not exist a reallocation of posterior means that improves the
DM's payoff.

In the case where the prior $F_0$ is not absolutely continuous, the same
idea can be expressed via  the probability integral transform, i.e. a 
distribution of $\alpha$ uniformly distributed over $[0,1]$, where 
$\chi(\alpha)=\inf\{z:F_0(z)\geq \alpha\}$. A monotone partitional 
structure can then be represented as before by a partition of $[0,1]$ and 
taking the conditional mean of $\chi(\alpha)$ within each element of the 
partition.%
\footnote{To avoid any pathologies at the endpoints of elements of
partitions regarding measure-$0$ events, we assume that if an element has 
positive length with endpoints $\alpha_1\neq\alpha_2$, then if it includes 
the endpoint $\alpha_1$, we have 
$\chi(\alpha_1)=\lim_{\alpha\rightarrow\alpha_1^+} \chi(\alpha)$, and
similarly for $\alpha_2$.}
Now consider a sequence of priors $\{\hat{F}_0^i\}_{i=1}^n$ over 
$\alpha$ on $[0,1]$, holding the definition of $\chi(\cdot)$ the same as 
before. Axiom \ref{NIPMC} is then equivalent to stating that, if applying 
the same partitional structure yields the same conditional mean of 
$\chi(\alpha)$ for each $i$ given $\hat{F}_0^i$ on each element of the 
partition with positive probability, then there is no reallocation of 
posterior means that improves the DM's payoff.

\subsection{Binary Actions}

Another set of special cases of interest is when the DM faces menus that 
only contain at most $2$ actions each. In these cases, for each $A_i$, 
either (i) the information constraint (\ref{MPC}) never binds with
respect to the distribution of revealed posterior means on $(0,1)$, 
or (ii) the information is monotone partitional. In the former case, one can 
drop the requirement of (\ref{allineq}), and so the requirement reduces
to that of Denti's NIPC, with posterior means instead of posteriors. In the
latter case, one can check the conditions as described in Section 5.3: 
namely, checking for each of the (two) elements of the partition that the
conditional posterior mean remains the same when comparing $F_i$ and 
$F_{\sigma_{A_i}}$. In both cases, the condition of NIPMC simplifies.

\subsection{Comparative Statics}

One may wonder how increased incentives to acquire information affect 
the distribution of posterior means. Such incentives arise when the 
indirect utility function $\phi$ becomes more convex (which in turn occurs
if the set of possible actions is enlarged). In 
\cite{denti2022posterior}, Proposition 3, he shows that beliefs must 
become more extreme: that is, beliefs under more extreme incentives 
cannot lie in the interior of the convex hull of those under the less
extreme incentives. This is because, if information (i.e. a mean-
preserving spread) is valuable under weaker incentives, then it is 
valuable under stronger incentives as well. With posterior-mean separable 
costs, though, this outcome is challenged by the presence of the 
mean-preserving contraction constraint. A more convex utility function
may change the incentives to have this constraint binding in certain 
locations, thereby preventing mean-preserving spreads away from posterior 
means in the relative interior of this convex hull (here, an open 
interval).

Nevertheless, one can still make claims conditional on the points where
this constraint binds. If, indeed, the constraint is not binding on a 
given interval, then stronger incentives rule out the location of
posterior means in this interval. Since the mean-preserving spread is
now feasible, it would be suboptimal to have a posterior mean in this 
interval.

\begin{proposition}
\label{cs}
Suppose the dataset is rationalized by a posterior-mean separable cost. 
Let $A_1$ and $A_2$ be a pair of menus such that $\phi_{A_1}-\phi_{A_2}$ 
is convex at each $z\in[0,1]$. Let $z_1, z_2\in supp (F_{\sigma_{A_2}})$, 
$\mathcal{I}_{F_0,F_{\sigma_{A_1}}}(z) >0$ for each $z\in(z_1,z_2)$, and 
$\phi_{A_1}-\phi_{A_2}$ is strictly convex for some $z \in (z_1, z_2)$.
Then
$$ supp(F_{\sigma_{A_1}}) \cap (z_1, z_2)= \emptyset$$
\end{proposition}

\subsection{Concavity of $c$}

As mentioned earlier, $c$ may not be concave, and thus not monotone in the 
Blackwell order. This may be surprising to some readers, as it has been 
shown (\cite{de2017rationally}, Theorem 2) that one can replace a 
non-monotone cost function with a monotone one, since one can 
always just acquire more information if it weakly decreases the cost. In 
the case of posterior-separable costs of information, this, in turn, 
implies that it is without loss to consider concave functions 
$H:\Delta(\Theta)\rightarrow \mathbb{R}$; see, for instance, 
\cite{lipnowski2022predicting}, Proposition 7, which uses the analogous
argument that one can take a mean-preserving spread of any posterior in a 
region where $H$ is not concave. Indeed, the cost function as derived in 
\cite{denti2022posterior} is concave. Yet when it comes to posterior-%
mean separable costs of information, the crucial difference is that not 
all mean-preserving spreads of a particular posterior mean $z\in(0,1)$ 
are feasible: it will depend on the overall distribution $F$, and 
whether the information constraint is binding. Notice that, 
while any posterior-mean separable cost of information is also posterior 
separable, applying the argument of \cite{lipnowski2022predicting} only 
implies that, for any revealed posterior mean $z_{\sigma_A}(a)$ and 
$\alpha\in[0,1]$,
\[
\alpha c(z_1)+(1-\alpha)c(z_2)\leq c(z_{\sigma_A}(a))
\]
\[
\alpha z_1+(1-\alpha)z_2=z_{\sigma_A}(a)
\]
whenever $I_{F_0,F_{\sigma_A}}(z)>0$ for all $z\in [z_1,z_2]$. Thus one 
cannot automatically replace $c$ with its concavification and preserve the 
same incentives to acquire information.

To illustrate this, consider the following example. 

\textbf{Example 4:} Let $A=\{a_1,a_2,a_3\}$, with 
\[
u(a_1,z)=-\frac{1}{2}z+\frac{1}{4}
\]
\[
u(a_2,z)=\frac{1}{8}
\]
\[
u(a_3,z)=\frac{1}{2}z-\frac{1}{4}
\]
and let 
\[
c(z)=\begin{cases}
\frac{1}{6}z-\frac{1}{36}, & z\leq \frac{1}{6}\\
5-30z, & z\in[\frac{1}{6},\frac{1}{2}]\\
-25+30z, & z\in[\frac{1}{2},\frac{5}{6}]\\
-\frac{1}{6}z+\frac{5}{36}, & z\geq \frac{5}{6}
\end{cases}
\]
Let $Z=\{0,\frac{1}{3},\frac{2}{3},1\}$ with uniform probability $F_0$. 
It is quite clear that the optimal information acquisition is to have 
two posterior beliefs at $\frac{1}{6}$ and $\frac{5}{6}$, each with 
probability $\frac{1}{2}$ (Figure 5). 
\begin{center}

\begin{figure}
\centering
\resizebox{80 mm}{60 mm}{
\begin{tikzpicture}
\begin{axis}[
		axis lines=center,
  ymin=-10.02,ymax=1.02,
  xmin=-0.02,xmax=1.02,
  xlabel={$z$}, ylabel={$u$},
  ymajorticks=false,
  xmajorticks=false
]
\addplot[blue, thick,samples=200][domain=0:0.16667] {-1/2*x+1/4+1/6*x-1/36};
\addplot[blue, thick,samples=200][domain=0.16667:0.25] {-1/2*x+1/4+5-30*x};
\addplot[blue, thick,samples=200][domain=0.25:0.5] {1/8+5-30*x};
\addplot[blue, thick,samples=200][domain=0.5:0.75] {1/8-25+30*x};
\addplot[blue, thick,samples=200][domain=0.75:0.83333] {1/2*x-1/4-25+30*x};
\addplot[blue, thick,samples=200][domain=0.83333:1] {1/2*x-1/4-1/6*x+5/36};
\addplot[olive, dashed, very thick,samples=200][domain=0:0.33333] {3/4-15/4*x};
\addplot[olive, dashed, very thick,samples=200][domain=0.33333:0.66667] {-1/2};
\addplot[olive, dashed, very thick,samples=200][domain=0.66667:1] {3/4+15/4*(x-1)};
\node[fill, olive, circle, inner sep=1.5pt] 	at (axis cs:0.16667,0.125){};
\node[fill, olive, circle, inner sep=1.5pt] 	at (axis cs:0.83333,0.125){};
\end{axis}
\end{tikzpicture}
}

\caption{Optimal information with original costs}
\end{figure}
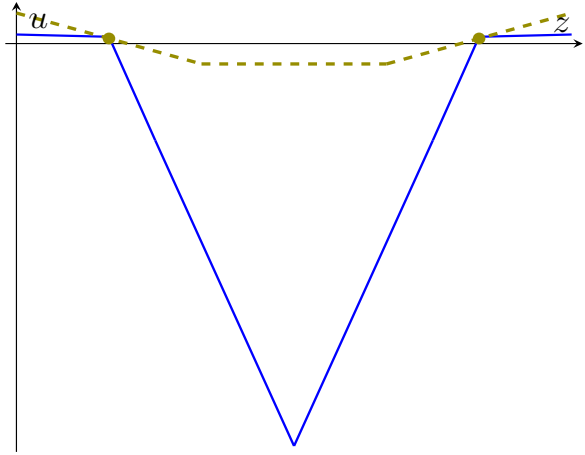
\end{center}

By contrast, the concavification of $c$ is 
given by 
\[
\tilde{c}(z)=\begin{cases}
\frac{1}{6}z-\frac{1}{36}, & z\leq \frac{1}{6}\\
0, & z\in[\frac{1}{6},\frac{5}{6}]\\
-\frac{1}{6}z+\frac{5}{36}, & z\geq \frac{5}{6}
\end{cases}
\]
in which case the optimal information acquisition is to fully reveal if 
$z\in\{0,1\}$, and pool if $z\in\{\frac{1}{3},\frac{2}{3}\}$ (Figure 6). 
To see this, note that the $y$-intercept of $u+\tilde{c}$ is now 
$\frac{2}{9}$, and the slope is $-\frac{1}{3}$. So, the value of $z$ at which 
$\frac{2}{9}-\frac{1}{3}z=\frac{1}{8}$ is $z=\frac{7}{24}$. Therefore, 
there exists $P^*_A$ that is convex and weakly above $u+\tilde{c}$, 
linear on all intervals between $z_1,z_2\in Z$, and is equal to 
$u(z)+\tilde{c}(z)$ for $z\in Z$, given by
$$P^*_A(z)=\begin{cases}
\frac{2}{9}-\frac{7}{24}z, & z\leq \frac{1}{3}\\
\frac{1}{8}, & z\in[\frac{1}{3},\frac{2}{3}]\\
-\frac{5}{72}+\frac{7}{24}z, & z\geq \frac{2}{3}
\end{cases}
$$
This yields strictly higher utility than the original information 
acquisition would, as $P^*_A(\frac{1}{6})>u(\frac{1}{6})+\tilde{c}(\frac{1}{6})$.

\begin{center}
    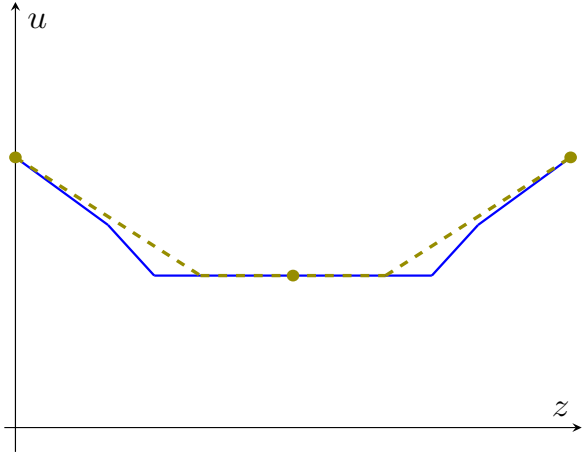
\begin{figure}
\centering
\resizebox{80 mm}{60 mm}{
\begin{tikzpicture}
\begin{axis}[
	axis lines=center,
  ymin=-0.02,ymax=0.35,
  xmin=-0.02,xmax=1.02,
  xlabel={$z$}, ylabel={$u$},
  ymajorticks=false,
  xmajorticks=false
]
\addplot[blue, thick,samples=200][domain=0:0.16667] {-1/2*x+1/4+1/6*x-1/36};
\addplot[blue, thick,samples=200][domain=0.16667:0.25] {-1/2*x+1/4};
\addplot[blue, thick,samples=200][domain=0.25:0.75] {1/8};
\addplot[blue, thick,samples=200][domain=0.75:0.83333] {1/2*x-1/4};
\addplot[blue, thick,samples=200][domain=0.83333:1] {1/2*x-1/4-1/6*x+5/36};
\addplot[olive, dashed, very thick,samples=200][domain=0:0.33333] {2/9-7/24*x};
\addplot[olive, dashed, very thick,samples=200][domain=0.33333:0.66667] {1/8};
\addplot[olive, dashed, very thick,samples=200][domain=0.66667:1] {2/9+7/24*(x-1)};
\node[fill, olive, circle, inner sep=1.5pt] 	at (axis cs:0.5,0.125){};
\node[fill, olive, circle, inner sep=1.5pt] 	at (axis cs:0,0.22222){};
\node[fill, olive, circle, inner sep=1.5pt] 	at (axis cs:1,0.22222){};
\end{axis}
\end{tikzpicture}
}

\caption{Optimal information with concavified costs}
\end{figure}
\end{center}

It should be noted that the non-concavity of $c$ can only be inferred by
counterfactuals: there is no way for the DM to exploit the lack of 
concavity. This failure of concavity can only appear in regions where
the mean-preserving contraction constraint (\ref{MPC}) binds,
as $\sigma_A$ is optimal for each $A$: otherwise, there would be a 
mean-preserving spread that would be feasible and reduces information 
costs, while simultaneous increasing indirect utility (since $\phi_A$ 
is convex). Thus the non-concavity can only be inferred indirectly by 
the acyclicality condition for the menus that appear in the dataset.

That being said, the case where $c$ is not concave lends itself to a 
certain lack of robustness. For instance, if there were to be additional 
data from some menu $B$, it might not be possible for there to be 
\emph{any} dataset $\sigma_B$ which satisfies our axioms. This is because 
it might be that, the revealed posterior mean $z_{\sigma_B}(b)$ for any 
optimal distribution $\sigma_B$ lies in a non-concave region of $c$, i.e. 
there exist $z_1,z_2$ and $\alpha\in(0,1)$ such that 
$\alpha c(z_1)+(1-\alpha)c(z_2)>c(z_{\sigma_B}(b))$
and $I_{F_0,F_{\sigma_B}}(z)>0,\forall z\in [z_1,z_2]$. In that case, we 
would get a violation of Axiom \ref{NIPMC}, as according to NIPMC, if we
could reallocate posterior means $z_1,z_2$ from another decision problem
with menu $A$, we could improve the DM's payoff for large enough 
coefficients $\beta_B$. If so, we would not be able to identify 
such non-concave $c$ using our Theorem \ref{costexist}.

To see this in the context of Example 4 above, consider the singleton 
menu $B=\{b\}$. Then the revealed posterior is $z=\frac{1}{2}$. Of course, 
however, the optimal information acquisition would involve posterior means 
at $z=\frac{1}{6}$ and $\frac{5}{6}$, so the revealed posterior is not
actually optimal with respect to $c$.

Nevertheless, it is possible to check sufficient conditions for the
concavity of $c$ by use of a linear programming algorithm. We spell out
the algorithm in the appendix, which involves checking $n^{\vert Z\vert}$
sets of $n\vert Z\vert$ inequalities. 

In any case, the reader is reminded that the standard case where $c$ is 
concave, and hence canonical, is also covered by our representation theorem.
That being said, we argue that $c$ being non-concave can actually be reasonable 
in some circumstances. This is because there may be instances in which one 
would want to capture the idea that distinguishing between two lower states 
with lower values of $z$ may be costly in ways that distinguishing from those 
with higher values may not. Thus, in Example 4, it is quite costly to 
distinguish between $z=0$ and $z=\frac{1}{3}$, leading the DM to optimally
pool them to reduce costs. Our cost function captures this, while
maintaining the posterior-mean separable representation, and hence enabling
the use of the tools from \cite{dworczak2019simple}.

\section{Conclusion}

Our main results have established that the optimization with respect to
PMS costs is behaviorally equivalent to the conditions NIAS and NIPMC. We 
argued that the latter is capturing the idea of optimizing via a Dworczak-%
Martini ``price function" through the relevant acyclicality condition. The 
presence of such cost functions enables the use of these price functions
for the optimal information choice problem. We hope that, with further 
understanding of the implications of this class of information costs, they 
will be used further in applied theoretical work.

Several open questions remain. First, while we have presented an example
of a natural case where PMS costs apply (reduction of posterior variance),
it would be interesting to find other such cases. One such possibility
is via a Brownian diffusion process over posterior means, where flow costs 
only depend on the current posterior mean. By applying the arguments in 
\cite{morris2019wald}, this leads to a representation of costs over $[0,1]$ 
that is posterior-mean separable. By the martingale representation 
theorem, it is possible to generate any mean-preserving spread of the prior
mean via some stopping rule. However, the stopping rule needed to do so 
must look at the entire distribution in the filtration at time $t$, since
it needs to ensure that it remains a mean-preserving contraction of the 
prior $F_0$. Providing a natural environment for this sort of Brownian
diffusion process, or a modification thereof, would extend the circumstances
under which PMS costs would be appropriate.

In addition, we saw that the cost function $c$ that is consistent with the
data need not be concave. As discussed, this means that this representation 
of the cost function would not be canonical, and so, counterfactually,
there may be circumstances in which increased information would yield 
lower costs. While we have given sufficient conditions to check whether
there is a concave $c$ that rationalizes the dataset, it would be useful
to provide a tighter characterization of necessary and sufficient conditions
that guarantee concavity.

\bibliographystyle{jpe.bst}
\bibliography{PosteriorMean}

\appendix

\section{Proofs of preliminary results}

In several of the proofs, we make reference to the main results of 
\cite{dworczak2019simple}, listed below as Lemma \ref{DMT12}.

\begin{lemma}[\cite{dworczak2019simple}, Theorems 1 and 2]
    \label{DMT12}
    Given Lipschitz-continuous payoff function 
    $V: [0,1]\rightarrow\mathbb{R}$, the distribution 
    $F^*\in \mathcal{I}_{F_0}$ is a solution to 
    \[
    \max_{F\in \mathcal{I}_{F_0}} \int V(z)dF(z)
    \]
    if and only if there exists convex function 
    $P:[0,1]\rightarrow \mathbb{R}$ such that 
    \begin{equation}
    \label{major}
        P(z)\geq V(z),\forall z\in [0,1],
    \end{equation}
    \begin{equation}
        \label{samesupp}
        \mbox{supp}(F^*)\subset \{z\in[0,1]:V(z)=P(z)\}
    \end{equation}
    \begin{equation}
        \label{sameint}
        \int_0^1 P(z)dF^*(z)=\int_0^1 P(z)dF_0(z)
    \end{equation}
\end{lemma}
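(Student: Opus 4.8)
The plan is to prove the two directions separately, both resting on a single integration-by-parts identity. For any convex $P:[0,1]\to\mathbb{R}$, write $dP'$ for the (nonnegative) second-derivative measure of $P$, and let $F\in\mathcal{I}_{F_0}$ be any feasible distribution, so that $F$ and $F_0$ share the prior mean $z_0$. Integrating by parts twice and using $I_{F_0,F}(0)=I_{F_0,F}(1)=0$ together with the equal-means condition, I would establish
\begin{equation}
\int_0^1 P\,dF_0-\int_0^1 P\,dF=\int_0^1 I_{F_0,F}(z)\,dP'(z). \label{ibpident}
\end{equation}
Since $I_{F_0,F}\geq0$ and $dP'\geq0$, the right-hand side is nonnegative, giving the weak-duality inequality $\int_0^1 P\,dF\leq\int_0^1 P\,dF_0$; moreover equality holds precisely when $I_{F_0,F}$ vanishes $dP'$-almost everywhere, i.e.\ when $P$ is affine on every interval where the mean-preserving-contraction constraint is slack. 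This identity is exactly what condition (\ref{sameint}) encodes.

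Sufficiency is then immediate. Given $P$ satisfying (\ref{major})--(\ref{sameint}) and any feasible $F$, I would chain $\int_0^1 V\,dF\leq\int_0^1 P\,dF\leq\int_0^1 P\,dF_0$, the first inequality from the majorization (\ref{major}) and the second from (\ref{ibpident}). For the candidate $F^*$, (\ref{samesupp}) gives $\int_0^1 V\,dF^*=\int_0^1 P\,dF^*$ because $V=P$ on $\mbox{supp}(F^*)$, and (\ref{sameint}) gives $\int_0^1 P\,dF^*=\int_0^1 P\,dF_0$; hence $\int_0^1 V\,dF^*=\int_0^1 P\,dF_0\geq\int_0^1 V\,dF$ for every feasible $F$, so $F^*$ is optimal.

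Necessity is the substantive direction, and I would obtain the certificate $P$ by convex duality. First note $\mathcal{I}_{F_0}$ is convex and weak$^*$-compact and $F\mapsto\int_0^1 V\,dF$ is weak$^*$-continuous since $V$ is Lipschitz, so a maximizer exists and the optimal value $\Pi^*$ is finite. I would then dualize the family of constraints $I_{F_0,F}(z)\geq0$ with a nonnegative multiplier measure: forming the Lagrangian and applying (\ref{ibpident}) identifies this measure with $dP'$ for a convex $P$ (its affine part being the multiplier on the equal-means constraint), reducing the dual to $\min\{\int_0^1 P\,dF_0:P\text{ convex},\,P\geq V\}$. Weak duality follows from the chain above; the crux is \emph{strong} duality with dual attainment, which I expect to be the main obstacle. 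Here the Lipschitz hypothesis on $V$ does the work, since it lets one restrict to convex $P$ with uniformly controlled slopes, so the dual feasible set is compact enough to force a zero gap and an attained minimizer. With an optimal dual $P$ in hand, complementary slackness delivers the remaining conditions: $\int_0^1(V-P)\,dF^*=0$ with $V\leq P$ forces $V=P$ on $\mbox{supp}(F^*)$, which is (\ref{samesupp}), and vanishing of the multiplier term $\int_0^1 I_{F_0,F^*}\,dP'=0$ is, via (\ref{ibpident}), exactly (\ref{sameint}). As an alternative to invoking an abstract duality theorem, I would note that one can construct $P$ explicitly by setting it equal to $V$ outside the maximal open intervals on which $I_{F_0,F^*}>0$ and equal to the affine interpolant across each such interval, then verifying convexity and $P\geq V$ directly from optimality of $F^*$---any failure would expose a feasible mean-preserving spread raising $\int_0^1 V\,dF$, contradicting optimality.
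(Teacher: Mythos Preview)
The paper does not prove this lemma; it is stated in the appendix and attributed to Theorems~1 and~2 of \cite{dworczak2019simple}, with no argument supplied. So there is no in-paper proof to compare against, and I assess your proposal on its own.

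Your integration-by-parts identity and the sufficiency direction are correct, and the duality outline for necessity is the right architecture (it is essentially how Dworczak and Martini argue). However, your alternative explicit construction of $P$ is wrong. You propose to set $P=V$ on $\{I_{F_0,F^*}=0\}$ and to interpolate affinely across each maximal open interval where $I_{F_0,F^*}>0$, then argue that optimality of $F^*$ forces $P\geq V$. Take $F_0$ uniform on $[0,1]$ and $V(z)=-(z-\tfrac12)^2$. The optimum is $F^*=\delta_{1/2}$, so $I_{F_0,F^*}>0$ on all of $(0,1)$, and your construction yields the single interpolant $P\equiv V(0)=V(1)=-\tfrac14$. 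But $V(\tfrac12)=0>-\tfrac14$, so (\ref{major}) fails, and since $\mbox{supp}(F^*)=\{\tfrac12\}$ with $P(\tfrac12)\neq V(\tfrac12)$, (\ref{samesupp}) fails too. Your proposed verification---that any failure ``would expose a feasible mean-preserving spread raising $\int V\,dF$''---does not bite: every mean-preserving spread of $\delta_{1/2}$ \emph{lowers} $\int V\,dF$ when $V$ is concave. The correct certificate here is the supporting line $P\equiv 0$ at the support point $\tfrac12$, not an interpolant anchored at the endpoints of a slack interval. The underlying error is the implicit assumption that $\mbox{supp}(F^*)\subset\{I_{F_0,F^*}=0\}$; in general support points sit in the \emph{interior} of slack intervals, so anchoring $P$ to $V$ only on the binding set cannot deliver (\ref{samesupp}). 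Your duality route avoids this pitfall, but you still owe the strong-duality and dual-attainment argument that you flag as the main obstacle.
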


We now proceed with the proofs of the results that appear in the body
of the paper.

\begin{proof}[Proof of Lemma \ref{DM2019}]

    Given prior $F_0$, the DM's problem is given by 
    \begin{equation}
    \label{DMprob}
        \max_{F\in\mathcal{I}_{F_0}} \int [\phi_A (z)+c(z)]dF(z)
    \end{equation}
    Suppose that $F_{\sigma_A}$ is a solution to \ref{DMprob}. 
    By Lemma \ref{DMT12}, there exists function
    $P:[0,1]\rightarrow\mathbb{R}$ that satisfies (\ref{major})-%
    (\ref{sameint}) with respect to $V(z)=\phi_A(z)+c(z)$. By Lemma 2 of 
    \cite{dworczak2019simple}, $P$ must be affine on any interval where 
    $I_{F_0,F_{\sigma_A}}(z)>0$.
        
    It thus remains to be shown that the converse holds as well, i.e. 
    that for $\hat{F}_{\sigma_A}\in\mathcal{I}_{\hat{F}_0}$, if one takes 
    the same $P$ as before, $I_{F_0,F_{\sigma_{A}}}(z)=0\implies I_{\hat{F}_0,\hat{F}_{\sigma_A}}(z)=0$,
    and (\ref{samesupp}) holds, then $P$ is still affine wherever 
    $I_{\hat{F}_0,\hat{F}_{\sigma_A}}(z)>0$, and (\ref{sameint}) is 
    satisfied. To see this, since $I_{F_0,F_{\sigma_{A}}}(z)=0\implies I_{\hat{F}_0,\hat{F}_{\sigma_A}}(z)=0$, 
    we have that on any interval $[z_1,z_2]$ such that 
    $I_{\hat{F}_0,\hat{F}_{\sigma_A}}(z)>0$, $P$ must
    be affine since for all $z\in(z_1,z_2)$, $I_{F_0,F_{\sigma_A}}(z)>0$. To show that (\ref{sameint}) holds, for each $z\in Z$ such that 
    $I_{\hat{F}_0,\hat{F}_{\sigma_A}}(z)>0$, define the probability 
    weights $\gamma_{-}^z,\gamma_=^z,\gamma_+^z$ as the conditional 
    probabilities that, if the true state is $z$, the posterior belief
    will be $\hat{z}<z$, $\hat{z}=z$, and $\hat{z}>z$, respectively. 
    Next, note that for any maximal interval $(z_1,z_2)$ such that $I_{\hat{F}_0,\hat{F}_{\sigma_A}}(z)>0$,
    one has that 
    \begin{equation}
    \label{equaloninterval}
    \int_{(z_1,z_2)}d\hat{F}_{\sigma_A}(z)=\int_{(z_1,z_2)}d\hat{F}_0(z)+
    \gamma_+^{z_1}\hat{f}_0(z_1)+\gamma_-^{z_2}\hat{f}_0(z_2)
    \end{equation}
    where $\hat{f}_0(z)$ is the prior probability of state $z\in Z$.
    Furthermore, the function $P$ must be affine on $[z_1,z_2]$. So, 
    letting $\cup (z_1,z_2)$ be the union of such intervals, we have
    \begin{align*}
        \int_0^1 [\phi_A(z)+c(z)]d\hat{F}_{\sigma_A}(z) & =\int_0^1 P(z)d\hat{F}_{\sigma_A}(z)\\
        & =\int_{[0,1]\setminus \{\cup (z_1,z_2)\}} P(z)d\hat{F}_{\sigma_A}(z)+\sum_{(z_1,z_2)}\int_{(z_1,z_2)}P(z)d\hat{F}_{\sigma_A}(z)\\
        & =\int_{[0,1]\setminus \{\cup (z_1,z_2)\}} \gamma_=^z P(z)d\hat{F}_0(z)+\sum_{(z_1,z_2)}[\int_{(z_1,z_2)}P(z)d\hat{F}_0(z)\\
        & \qquad +\gamma_+^{z_1}P(z_1)\hat{f}_0(z_1)+\gamma_-^{z_2}P(z_2)\hat{f}_0(z_2)]\\
        & =\int_{[0,1]\setminus \{\cup (z_1,z_2)\}} (\gamma_-^z+\gamma_=^z+\gamma_+^z)P(z)d\hat{F}_0(z)+\sum_{(z_1,z_2)}\int_{(z_1,z_2)}P(z)d\hat{F}_0(z)\\
        &=\int_0^1 P(z)d\hat{F}_0(z)
    \end{align*}
    where the first equality is due to (\ref{samesupp}), the third 
    equality is due to the affineness of $P$ on $[z_1,z_2]$, and the last
    equality is due to the fact that $\gamma_{-}^z+\gamma_=^z+\gamma_+^z=1$. 
    Thus (\ref{sameint}) holds as required.
\end{proof}

\section{Proof of Theorem \ref{costexist}}

To prove the result, we require some additional lemmas.
\begin{lemma}
    \label{finiteineq}
    For $i=1,...,n$, define priors $F_0^i\in\mathcal{F}$ such that 
    $\mbox{supp}(F_0^i)\subset Z$ and distributions of revealed posterior 
    means $F_i\in \mathcal{I}_{F_0^i}$ such that:
    \begin{enumerate}[(i)]
    \item $\mbox{supp}(F_i)\subset\mbox{supp}(F_{\sigma_{A_i}})$, and
    \item For all $z\in\mbox{supp}(F_0)$,
    \begin{equation}
    \label{sameinfoconstraint}
    I_{F_0,F_{\sigma_{A_i}}}(z)=0\implies I_{F^i_0,F_i}(z)=0
    \end{equation}
    \end{enumerate}
    Then for all $z\in [0,1]$, (\ref{sameinfoconstraint}) holds.
\end{lemma}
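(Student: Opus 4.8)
The plan is to reduce the claim to the single case not already handled by condition (ii): a point $z_\ast\in[0,1]$ with $I_{F_0,F_{\sigma_{A_i}}}(z_\ast)=0$ but $z_\ast\notin\mbox{supp}(F_0)$. Write $g(z):=I_{F_0,F_{\sigma_{A_i}}}(z)$ and $h(z):=I_{F_0^i,F_i}(z)$. Both are continuous, nonnegative (since $F_{\sigma_{A_i}}\in\mathcal{I}_{F_0}$ and $F_i\in\mathcal{I}_{F_0^i}$), and piecewise affine, being integrals of the step functions $F_0-F_{\sigma_{A_i}}$ and $F_0^i-F_i$; and $g(0)=g(1)=0$. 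Since $\{0,1\}\subset\mbox{supp}(F_0)$, any such $z_\ast$ lies in the open interval $(0,1)$, so it suffices to show $g(z_\ast)=0\implies h(z_\ast)=0$ for $z_\ast\in(0,1)\setminus\mbox{supp}(F_0)$.

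First I would localize around $z_\ast$. On a small interval $(z_\ast-\varepsilon,z_\ast+\varepsilon)$ disjoint from the finite set $\mbox{supp}(F_0)$, $F_0$ is constant, so $g'=F_0-F_{\sigma_{A_i}}$ is nonincreasing there (as $F_{\sigma_{A_i}}$ is a CDF); hence $g$ is concave on that interval. A concave function on an interval that is nonnegative and attains the value $0$ at an interior point is identically $0$; thus $g\equiv0$ on $(z_\ast-\varepsilon,z_\ast+\varepsilon)$. Let $[a,b]$ be the connected component of $\{z:g(z)=0\}$ containing $z_\ast$; by the above, $a<z_\ast<b$. If $a\notin\mbox{supp}(F_0)$, then since $g(a)=0$ and $a\in(0,1)$, the same concavity argument applied at $a$ would force $g\equiv0$ on a neighborhood of $a$, contradicting maximality of $[a,b]$; so $a\in\mbox{supp}(F_0)$, and likewise $b\in\mbox{supp}(F_0)$.

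It remains to show $h\equiv0$ on $[a,b]$. On $(a,b)$ we have $g\equiv0$, so $\int(F_0-F_{\sigma_{A_i}})=0$ over every subinterval, whence $F_0=F_{\sigma_{A_i}}$ on $(a,b)$ (right-continuous nondecreasing functions agreeing a.e.\ agree everywhere). Take consecutive points $c_1<c_2$ of the finite set $\mbox{supp}(F_0)\cap[a,b]$. On $(c_1,c_2)$, $F_0$ is constant, hence so is $F_{\sigma_{A_i}}$, so $\mbox{supp}(F_{\sigma_{A_i}})\cap(c_1,c_2)=\emptyset$, and therefore $\mbox{supp}(F_i)\cap(c_1,c_2)=\emptyset$ by condition (i); thus $F_i$ is constant on $(c_1,c_2)$. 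Since $F_0^i$ is nondecreasing, $h(z)=h(c_1)+\int_{c_1}^{z}(F_0^i(s)-F_i(s))\,ds$ is convex on $[c_1,c_2]$; combined with $h\ge0$ and $h(c_1)=h(c_2)=0$ (which hold by condition (ii), since $c_1,c_2\in\mbox{supp}(F_0)$), convexity forces $h\equiv0$ on $[c_1,c_2]$. Chaining over all consecutive pairs in $\mbox{supp}(F_0)\cap[a,b]$ yields $h\equiv0$ on $[a,b]$, and in particular $h(z_\ast)=0$.

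The step I expect to need the most care is the propagation from ``$g=0$ at a single off-support point'' to ``$g\equiv0$ on a maximal interval whose endpoints lie in $\mbox{supp}(F_0)$'': one must verify that the concavity of $g$ holds on a genuine two-sided neighborhood (using $\{0,1\}\subset\mbox{supp}(F_0)$ to stay interior) and that the component's endpoints cannot escape $\mbox{supp}(F_0)$. A secondary subtlety is that $F_0^i$ and $F_i$ may a priori jump strictly inside $(c_1,c_2)$; this is ruled out for $F_i$ by condition (i) together with $F_{\sigma_{A_i}}=F_0$ on $(a,b)$, which is exactly what makes $h$ convex (not merely affine) there — and convexity, nonnegativity, and vanishing endpoints together suffice.
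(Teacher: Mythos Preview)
Your proof is correct and follows essentially the same approach as the paper: trap $z_\ast$ between adjacent points where condition (ii) applies, argue that $F_{\sigma_{A_i}}$ (and hence $F_i$) places no mass in the gap, and deduce that $I_{F_0^i,F_i}$ vanishes there. The paper works directly with the adjacent points of $Z$ (so both $F_0^i$ and $F_i$ are constant on the gap, making $h$ affine rather than merely convex), whereas your connected-component detour and convexity argument for $h$ are a bit more elaborate but also handle the case $\mbox{supp}(F_0)\subsetneq Z$ without relying on the identification $Z=\mbox{supp}(F_0)$.
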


Lemma \ref{finiteineq} shows that in order to demonstrate that \ref{allineq}
holds, it is sufficient to look at a finite number of such inequalities, namely 
those at $z\in Z$.

\begin{proof}[Proof of Lemma \ref{finiteineq}]
    For all $z\in [0,1]$ such that $I_{F_0,F_{\sigma_{A_i}}}(z)=0$ and 
    $z\notin Z$, let $z_1 =\max\{\hat{z}\in Z, \hat{z}<z\}$ 
    and $z_2 =\min\{\hat{z}\in Z, \hat{z}>z\}$. Then for all 
    $\hat{z}\in (z_1,z_2)$, $\hat{z}\notin\mbox{supp}(F_{\sigma_{A_i}})$, or 
    else $I_{F_0,F_{\sigma_{A_i}}}(z)>0$. So, $\hat{z}\notin \mbox{supp}(F_i)$ 
    as well. Thus if 
    \[
    I_{F^i_0,F_i}(z_1)=0=I_{F^i_0,F_i}(z_2)
    \]
    we conclude that for all $z\in (z_1,z_2)$,
    \[
    I_{F_0,F_{\sigma_{A_i}}}(z)=0=I_{F^i_0,F_i}(z).
    \]
\end{proof}

\begin{lemma}
\label{exanteNIPMC}
    Let $A_1,...,A_n$ be a sequence of menus in $\mathcal{A}$, and 
    $\beta_{(i,j)}:A_i\times A_j\rightarrow \mathbb{R}_+^2$ for $i,j=1,...n$. 
    Suppose that Axioms \ref{NIAS} and \ref{NIPMC} hold, and the following conditions hold:
    \begin{enumerate}
        \item 
    For all $i$ and 
    $z^*\in \{z: I_{F_0,F_{\sigma_{A_i}}}(z)=0\}\cap Z$,
    \begin{multline}
    \label{reallocMPCnobind}
    \sum_{j=1}^n \int_0^{z^*}\sum_{s\in Z} [\sum_{a_j\in A_j}\sum_{\{a_i: z_{\sigma_{A_i}}(a_i)\leq z\}}\beta_{(i,j)}(a_i,a_j)\sigma_{A_i}(a_i\vert s)\\
    -\sum_{a_i\in A_i}\sum_{\{a_j: z_{\sigma_{A_j}}(a_j)\leq z\}}\beta_{(j,i)}(a_j,a_i)\sigma_{A_j}(a_j\vert s)]f_0(s)dz\geq 0
    \end{multline}
    with equality at $z^*=1$.

    \item For all $i$, 
       \begin{equation}
        \label{reallocsameprior}
        \sum_{j=1}^n \sum_{z \in Z}\sum_{a_i \in A_i} \sum_{a_j \in A_j } \beta_{(i,j)}(a_i, a_j) \sigma_{A_i}(a_i|z)f_0(z)=\sum_{j=1}^n \sum_{z \in Z}\sum_{a_i \in A_i} \sum_{a_j \in A_j } \beta_{(j,i)}(a_j, a_j) \sigma_{A_j}(a_j|z)f_0(z).
    \end{equation}
    \end{enumerate}
    
    Then 
    \begin{equation}
        \label{exanteNIPMCeq}
        \sum_{i=1}^n\sum_{j=1}^n\sum_{z\in Z}\sum_{a_j\in A_j}\sum_{a_i\in A_i}\beta_{(i,j)}(a_i,a_j) [u(a_i,z)-u(a_j,z)]\sigma_{A_i}(a_i\vert z)f_0(z)\geq0
    \end{equation}
\end{lemma}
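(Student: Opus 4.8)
The plan is to absorb all the pairwise transfers recorded by the $\beta_{(i,j)}$ into a single invocation of Axiom~\ref{NIPMC} and then read off (\ref{exanteNIPMCeq}) by feeding the resulting inequality through NIAS. Write $\beta_i$ for the common value of the two sides of (\ref{reallocsameprior}). First I would discard every index $i$ with $\beta_i=0$: then every summand of (\ref{reallocsameprior}) vanishes, forcing $\beta_{(i,j)}(a_i,a_j)\sigma_{A_i}(a_i\vert z)f_0(z)=0$ and $\beta_{(j,i)}(a_j,a_i)\sigma_{A_j}(a_j\vert z)f_0(z)=0$ for all $j,z$ and all acts, so such $i$ contribute nothing to (\ref{exanteNIPMCeq}) and their removal leaves (\ref{reallocMPCnobind})--(\ref{reallocsameprior}) valid for the rest. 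For the surviving indices, set $w_i(a_i):=\sum_j\sum_{a_j}\beta_{(i,j)}(a_i,a_j)$ and define a fictitious prior $F_0^i$ on $Z$ by $f_0^i(s):=\beta_i^{-1}\sum_{a_i}w_i(a_i)\sigma_{A_i}(a_i\vert s)f_0(s)$, a fictitious revealed distribution $F_i:=\beta_i^{-1}\mu_i^+$ with $\mu_i^+:=\sum_{a_i}w_i(a_i)\sigma_{A_i}(a_i)\,\delta_{z_{\sigma_{A_i}}(a_i)}$, and a candidate reallocation $G_i:=\beta_i^{-1}\mu_i^-$ with $\mu_i^-:=\sum_j\sum_{a_j}\bigl(\sum_{a_i}\beta_{(j,i)}(a_j,a_i)\bigr)\sigma_{A_j}(a_j)\,\delta_{z_{\sigma_{A_j}}(a_j)}$. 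By (\ref{reallocsameprior}), $\mu_i^+$ and $\mu_i^-$ have the same total mass $\beta_i$, so $F_0^i\in\mathcal{F}$ with $\mbox{supp}(F_0^i)\subset Z$ and $F_i,G_i$ are probability CDFs; $\mbox{supp}(F_i)\subset\mbox{supp}(F_{\sigma_{A_i}})$ is immediate, and relabelling the summation indices gives $\sum_i\beta_iF_i=\sum_i\mu_i^+=\sum_i\mu_i^-=\sum_i\beta_iG_i$, so $\{G_i\}$ will be a reallocation of $\{F_i\}$ once $G_i\in\mathcal{I}_{F_0^i}$ is established.

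The key structural fact behind the feasibility checks is that $z_{\sigma_{A_i}}(a_i)$ is the barycenter under $f_0$ of the sub-probability measure $s\mapsto\sigma_{A_i}(a_i\vert s)f_0(s)$, so the point mass at it is a mean-preserving contraction of that sub-probability; writing $h_{a_i}(z)\ge0$ for the corresponding information-gap integral, one obtains the two identities $I_{F_0,F_{\sigma_{A_i}}}(z)=\sum_{a_i}h_{a_i}(z)$ and $\beta_i\,I_{F_0^i,F_i}(z)=\sum_{a_i}w_i(a_i)h_{a_i}(z)$. The first together with $h_{a_i}\ge0$ yields $F_i\in\mathcal{I}_{F_0^i}$ (non-negativity for every $z$, equality at $z=1$), and comparing the two identities yields the implication (\ref{allineq}): a $z$ at which $I_{F_0,F_{\sigma_{A_i}}}(z)=0$ has all $h_{a_i}(z)=0$, hence $I_{F_0^i,F_i}(z)=0$.

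For $G_i\in\mathcal{I}_{F_0^i}$, note that summing the bracketed term of (\ref{reallocMPCnobind}) over $j$ and integrating against $f_0$ collapses it to $\beta_i\bigl(F_i(z)-G_i(z)\bigr)$, so (\ref{reallocMPCnobind}) is precisely $\int_0^{z^*}\bigl(F_i-G_i\bigr)\ge0$ at every $z^*\in\{z:I_{F_0,F_{\sigma_{A_i}}}(z)=0\}\cap Z$, with equality at $z^*=1$; combined with the vanishing of $I_{F_0^i,F_i}$ on $\{z:I_{F_0,F_{\sigma_{A_i}}}(z)=0\}$, this gives $I_{F_0^i,G_i}(z^*)\ge0$ at those states and $I_{F_0^i,G_i}(1)=0$. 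To upgrade this to $I_{F_0^i,G_i}\ge0$ on all of $[0,1]$, I would use that $F_0^i-G_i$ is a step function whose up-jumps lie only in $Z$, so $I_{F_0^i,G_i}$ is continuous and piecewise affine with every convex kink---hence every candidate minimizer---in $Z$, and argue, in the spirit of Lemma~\ref{finiteineq}, that no violation can occur strictly between consecutive binding states. Making this last reduction precise---correctly identifying ``where the mean-preserving-contraction constraint binds''---is the step I expect to be the main obstacle, exactly the point at which NIPMC must be more delicate than Denti's NIPC.

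With a legitimate NIPMC configuration in hand, Axiom~\ref{NIPMC} gives $\sum_i\beta_i\int\phi_{A_i}\,dF_i\ge\sum_i\beta_i\int\phi_{A_i}\,dG_i$. On the left, NIAS gives $\phi_{A_i}\bigl(z_{\sigma_{A_i}}(a_i)\bigr)=u\bigl(a_i,z_{\sigma_{A_i}}(a_i)\bigr)$ whenever $\sigma_{A_i}(a_i)>0$, so $\sum_i\beta_i\int\phi_{A_i}\,dF_i=\sum_{i,j}\sum_{a_i,a_j}\beta_{(i,j)}(a_i,a_j)\,\sigma_{A_i}(a_i)\,u\bigl(a_i,z_{\sigma_{A_i}}(a_i)\bigr)$; on the right, $\phi_{A_i}\ge u(a_i,\cdot)$ gives $\sum_i\beta_i\int\phi_{A_i}\,dG_i\ge\sum_{i,j}\sum_{a_i,a_j}\beta_{(j,i)}(a_j,a_i)\,\sigma_{A_j}(a_j)\,u\bigl(a_i,z_{\sigma_{A_j}}(a_j)\bigr)$. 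Relabelling the indices of the right-hand member of this chain, and then using affineness of $u$ in $z$---which replaces $\sigma_{A_i}(a_i)\,u\bigl(a_k,z_{\sigma_{A_i}}(a_i)\bigr)$ by $\sum_z u(a_k,z)\sigma_{A_i}(a_i\vert z)f_0(z)$ for any act $a_k$---converts the inequality into exactly (\ref{exanteNIPMCeq}).
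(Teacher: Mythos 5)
Your overall strategy matches the paper's: build fictitious priors $F_0^i$ and distributions $F_i,G_i$ out of the $\beta$-weights, invoke Axiom~\ref{NIPMC}, and then convert via NIAS and affineness of $u$. Your verification that $F_i\in\mathcal{I}_{F_0^i}$ and that (\ref{allineq}) holds via the act-by-act gap functions $h_{a_i}$ is correct and clean. But there is a genuine gap at the step you yourself flagged: with your choice of fictitious prior $f_0^i(s)=\beta_i^{-1}\sum_{a_i}w_i(a_i)\sigma_{A_i}(a_i\vert s)f_0(s)$, the candidate reallocation $G_i$ need \emph{not} lie in $\mathcal{I}_{F_0^i}$, so Axiom~\ref{NIPMC} cannot be applied. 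Your proposed patch fails because the interior local minima of $I_{F_0^i,G_i}$ sit at the atoms of $F_0^i$, and your $F_0^i$ has atoms at \emph{every} state in $\mbox{supp}(F_0)$ --- including non-binding ones, where neither (\ref{reallocMPCnobind}) nor $I_{F_0^i,F_i}=0$ gives any control. Concretely: take $Z=\{0,\tfrac12,1\}$ uniform, $A_1$ a singleton menu (so $F_{\sigma_{A_1}}=\delta_{1/2}$, binding set $\cap\, Z=\{0,1\}$, and your $F_0^1=F_0$), and $A_2$ a menu with revealed means $0,\tfrac12,1$ each of mass $\tfrac13$; choosing $\beta_{(2,1)}=\tfrac32$ on the acts with means $0$ and $1$ and $\beta_{(1,2)}=1$ on the act with mean $\tfrac12$ satisfies (\ref{reallocMPCnobind}) and (\ref{reallocsameprior}), yet produces $G_1=\tfrac12\delta_0+\tfrac12\delta_1$ with $I_{F_0,G_1}(\tfrac12)=-\tfrac{1}{12}<0$.

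The paper's resolution is precisely to choose $F_0^i$ differently: it is supported \emph{only} on $\{z:I_{F_0,F_{\sigma_{A_i}}}(z)=0\}\cap Z$, calibrated so that $\int_0^{z^*}F_0^i=\int_0^{z^*}F_i$ at each such $z^*$. Then $\int_0^zF_0^i(s)ds$ is affine between consecutive binding points while $\int_0^zF_i(s)ds$ and $\int_0^zG_i(s)ds$ are convex, so $I_{F_0^i,F_i}$ and $I_{F_0^i,G_i}$ are concave on each such interval and their minima are attained at the endpoints, where (\ref{reallocMPCnobind}) guarantees nonnegativity. (In the example above this gives $F_0^1=\tfrac12\delta_0+\tfrac12\delta_1$, for which $G_1$ is feasible.) This choice of a maximally spread fictitious prior is the key idea your argument is missing; it is also exactly why condition (\ref{allineq}) appears in NIPMC at all, i.e., why the axiom must track where the mean-preserving-contraction constraint binds. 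The remainder of your argument (handling of $\gamma_i=0$, the reallocation identity, and the NIAS/affineness conversion at the end) agrees with the paper and is correct.
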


\begin{proof}[Proof of Lemma \ref{exanteNIPMC}]
    For each $i$, let
    \[
    \gamma_i\coloneqq\sum_{j=1}^n \sum_{z\in Z}\sum _{a_i\in A_i}\sum_{a_j\in A_j}\beta_{(i,j)}(a_i,a_j)\sigma_{A_i}(a_i\vert z)f_0(z)
    \]

     By equation (\ref{reallocsameprior}), we also have
    \[
    \gamma_i=\sum_{j=1}^n \sum_{z\in Z}\sum_{a_j\in A_j}\sum_{a_i\in A_i}\beta_{(j,i)}(a_j,a_i)\sigma_{A_j}(a_j\vert z)f_0(z)
    \]
    If $\gamma_i>0$, define $F_i,G_i\in \mathcal{F}$ by 
    \[
    F_i(z)=\frac{1}{\gamma_i}\sum_{j=1}^n \sum_{s\in Z}\sum_{a_j\in A_j} \sum_{\{a_i: z_{\sigma_{A_i}}(a_i)\leq z\}}\beta_{(i,j)}(a_i,a_j)\sigma_{A_i}(a_i\vert s)f_0(s)
    \]
    \[
    G_i(z)=\frac{1}{\gamma_i}\sum_{j=1}^n \sum_{s\in Z} \sum_{a_i\in A_i}\sum_{\{a_j: z_{\sigma_{A_j}}(a_j)\leq z\}}\beta_{(j,i)}(a_j,a_i)\sigma_{A_j}(a_j\vert s)f_0(s)
    \]
    If $\gamma_i=0$, set $F_i=F_0=G_i$. Notice that both $F_i$ and $G_i$
    are increasing functions of $z$, $F_i(0)=G_i(0)=0$, and, by  (\ref{reallocsameprior}), $F_i(1)=G_i(1)=1$; so, both $F_i$ and $G_i$
    are valid CDFs.

    By construction, the support of $F_i$ is the set 
    $\{z:\exists a_i\in A_i: z=z_{\sigma_{A_i}}(a_i)\}$, which is just 
    $\mbox{supp}(F_{\sigma_{A_i}})$. 

    Now let $F_0^i(z)$ have support on $\{z: I_{F_0,F_{\sigma_{A_i}}}(z)=0\}\cap Z$ such that, for $z^*\in \mbox{supp}(F_0^i)$, 
    $\int_0^{z^*} F_0^i(s)ds=\int_0^{z^*} F_i(s)ds$. 
    Then $F_i,G_i\in \mathcal{I}_{F_0^i}$, seen as follows. Notice that for 
    any two consecutive $z_1,z_2\in \mbox{supp}(F_0^i)$, $\int_0^z F_0^i(s)ds$ 
    is affine for $z\in[z_1,z_2]$; however, on the same interval, both 
    $\int_0^z F_i(s)ds$ and $\int_0^z G_i(s)ds$ are convex. Moreover, by 
    construction, at $z^*\in\mbox{supp}(F_0^i)$, 
    $\int_0^{z^*}F_i(s)ds\geq \int_0^{z^*}G_i(s)ds$ by (\ref{reallocMPCnobind}). So, for all 
    $z\in[0,1]$, 
    \[
    \int_0^z F_0^i(s)ds\geq \max\{\int_0^z F_i(s)ds, \int_0^z G_i(s)ds\}
    \]
    and so $F_0^i$ is a mean-preserving spread of both $F_i$ and $G_i$, with
    $I_{F_0^i,F_i}(z^*)=0$ for $z^*\in \mbox{supp}(F_0^i)$.

    Lastly, summing up over $i=1,...,n$,
    \begin{align*}
        \sum_{i=1}^n \gamma_i F_i(z) & =\sum_{i=1}^n \sum_{j=1}^n \sum_{s\in Z}\sum_{a_j\in A_j} \sum_{\{a_i: z_{\sigma_{A_i}}(a_i)\leq z\}}\beta_{(i,j)}(a_i,a_j)\sigma_{A_i}(a_i\vert s)f_0(s)\\
        & =\sum_{j=1}^n\sum_{i=1}^n \sum_{s\in Z}\sum_{a_i\in A_i} \sum_{\{a_j: z_{\sigma_{A_j}}(a_j)\leq z\}}\beta_{(j,i)}(a_j,a_i)\sigma_{A_j}(a_j\vert s)f_0(s)\\
        & = \sum_{i=1}^n \gamma_i G_i(z)
    \end{align*}

    Therefore, $\{G_i\}_{i=1}^n$ is a reallocation of posterior means from 
    $\{F_i\}_{i=1}^n$ given priors $\{F_0^i\}_{i=1}^n$. By Axiom \ref{NIPMC},
    \begin{equation}
    \label{L5NIPMC}
    \sum_{i=1}^n \gamma_i\int_0^1 \phi_{A_i}(z)dF_i(z)\geq \sum_{i=1}^n \gamma_i\int_0^1 \phi_{A_i}(z)dG_i(z)
    \end{equation}
    Substitution yields, by (\ref{Bayes}) and Axiom \ref{NIAS},
    \begin{align*}
        \sum_{i=1}^n \gamma_i\int_0^1 \phi_{A_i}(z)dF_i(z) &=\sum_{i=1}^n \sum_{j=1}^n\sum_{\hat{z}\in Z}\sum_{z\in\mbox{supp}(F_i)}\sum_{a_j\in A_j} \sum_{a_i\in A_i}\beta_{(i,j)}(a_i,a_j)\phi_{A_i}(z)\sigma_{A_i}(a_i\vert \hat{z})f_0(\hat{z}) \\
        &=\sum_{i=1}^n \sum_{j=1}^n\sum_{z\in\mbox{supp}(F_i)}\sum_{a_j\in A_j} \sum_{a_i\in A_i}\beta_{(i,j)}(a_i,a_j)u(a_i,z)D_{A_i}(a_i\vert z)f_i(z)\\
        &= \sum_{i=1}^n \sum_{j=1}^n\sum_{z\in Z}\sum_{a_j\in A_j} \sum_{a_i\in A_i} \beta_{(i,j)}(a_i,a_j)u(a_i,z)\sigma_{A_i}(a_i\vert z)f_0(z) \numberthis \label{EUFI}
    \end{align*}
    
    Similarly, 
    \begin{align*}
        \sum_{i=1}^n \gamma_i\int_0^1 \phi_{A_i}(z)dG_i(z) &=\sum_{i=1}^n \sum_{j=1}^n\sum_{\hat{z}\in Z}\sum_{z\in\mbox{supp}(G_i)}\sum_{a_i\in A_i} \sum_{a_j\in A_j}\beta_{(j,i)}(a_j,a_i)\phi_{A_i}(z)\sigma_{A_j}(a_j\vert \hat{z})f_0(\hat{z}) \\
        &=\sum_{i=1}^n \sum_{j=1}^n\sum_{z\in\mbox{supp}(G_i)}\sum_{a_i\in A_i} \sum_{a_j\in A_j}\beta_{(j,i)}(a_j,a_i)\{\max_{\hat{a}_i\in A_i} u(\hat{a}_i,z)\}\sigma_{A_j}(a_j) \\
        &\geq \sum_{i=1}^n \sum_{j=1}^n\sum_{z\in\mbox{supp}(G_i)}\sum_{a_i\in A_i} \sum_{a_j\in A_j}\beta_{(j,i)}(a_j,a_i)u(a_i,z)\sigma_{A_j}(a_j) \\
        &= \sum_{i=1}^n \sum_{j=1}^n\sum_{z\in Z}\sum_{a_i\in A_i} \sum_{a_j\in A_j}\beta_{(j,i)}(a_j,a_i)u(a_i,z)\sigma_{A_j}(a_j\vert z)f_0(z) \\
        &=  \sum_{i=1}^n \sum_{j=1}^n\sum_{z\in Z} \sum_{a_i\in A_i}\sum_{a_j\in A_j}\beta_{(i,j)}(a_i,a_j)u(a_j,z)\sigma_{A_i}(a_i\vert z)f_0(z)  \numberthis \label{EUGI}
    \end{align*}
    where the last equality is from switching the indices $i,j$. Combining
    (\ref{L5NIPMC})-(\ref{EUGI}) yields the desired result.
\end{proof}

\begin{lemma}
    \label{farkas}
    Let $\mathbf{A}\in\mathbb{R}^{m\times (n+k)}$ and 
    $\mathbf{b}\in \mathbb{R}^{m}$. Exactly one of the following holds:
    \begin{enumerate}
    \item There exists $\lambda \in \mathbb{R}^{n+k}$ such that 
    $\mathbf{A}\lambda \leq \mathbf{b}$ and, for each $i=1,...,n+k$, 
    \begin{align*}
        \lambda_i &\in \mathbb{R}, \qquad i=1,...,n\\
        &\geq 0, \qquad i=n+1,...,n+k.
    \end{align*}
    \item There exists $\mathbf{y}\in \mathbb{R}^{m}$ such that:
    \begin{enumerate}[(i)]
        \item $\mathbf{y}\geq 0$, 
        \item For $i=1,...,n+k$,
        \begin{align*}
        (\mathbf{A}^T \mathbf{y})_i& =0,\qquad i=1,...,n\\
        &\geq 0,\qquad i=n+1,...,n+k
        \end{align*}
        \item $\mathbf{b}\cdot \mathbf{y}<0$.
    \end{enumerate}
    \end{enumerate}
\end{lemma}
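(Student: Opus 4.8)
The plan is to read Lemma~\ref{farkas} as a standard ``theorem of the alternative'' for a linear system in which the first $n$ coordinates of the multiplier are free and the last $k$ are sign-constrained, and to obtain it either as a direct corollary of the classical Farkas lemma (after splitting the free variables) or, equivalently, from the fact that a finitely generated convex cone is closed together with the separating hyperplane theorem. The ``exactly one'' claim splits into two parts: that the two alternatives cannot hold simultaneously (elementary), and that at least one of them must hold (the real content).

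For mutual exclusivity, suppose $\lambda$ satisfies (1) and $\mathbf{y}$ satisfies (2). Since $\mathbf{b}-\mathbf{A}\lambda\ge 0$ and $\mathbf{y}\ge 0$, we get $\mathbf{y}\cdot\mathbf{b}\ge \mathbf{y}\cdot(\mathbf{A}\lambda)=(\mathbf{A}^T\mathbf{y})\cdot\lambda$. Breaking the last inner product into its first $n$ coordinates, where $(\mathbf{A}^T\mathbf{y})_i=0$, and its last $k$ coordinates, where $(\mathbf{A}^T\mathbf{y})_i\ge 0$ and $\lambda_i\ge 0$, shows $(\mathbf{A}^T\mathbf{y})\cdot\lambda\ge 0$, hence $\mathbf{y}\cdot\mathbf{b}\ge 0$, contradicting $\mathbf{b}\cdot\mathbf{y}<0$.

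For exhaustiveness, assume (1) fails. Writing $A^i$ for the $i$-th column of $\mathbf{A}$ and $e_1,\dots,e_m$ for the standard basis of $\mathbb{R}^m$, condition (1) says exactly that $\mathbf{b}$ lies in the cone $K$ generated by $\{A^i,-A^i: i\le n\}\cup\{A^i: n<i\le n+k\}\cup\{e_1,\dots,e_m\}$: the coefficients on the $\pm A^i$ for $i\le n$ recover the free components of $\lambda$, those on the $A^i$ for $i>n$ the nonnegative components, and those on the $e_j$ the slack $\mathbf{b}-\mathbf{A}\lambda\ge 0$. Because $K$ is finitely generated it is a closed convex cone (Minkowski--Weyl), so $\mathbf{b}\notin K$ yields, by the separating hyperplane theorem applied to the point $\mathbf{b}$ and the closed convex set $K$, a vector $\mathbf{y}$ with $\mathbf{y}\cdot\mathbf{b}<0\le \mathbf{y}\cdot\mathbf{w}$ for all $\mathbf{w}\in K$ (the value $0$ may be used on the right since $K$ is a cone containing the origin). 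Evaluating $\mathbf{y}\cdot\mathbf{w}\ge 0$ at the generators gives: $y_j\ge 0$ from $e_j$, so $\mathbf{y}\ge 0$; $(\mathbf{A}^T\mathbf{y})_i=0$ from $\pm A^i$ with $i\le n$; and $(\mathbf{A}^T\mathbf{y})_i\ge 0$ from $A^i$ with $i>n$. Together with $\mathbf{y}\cdot\mathbf{b}<0$ this is precisely (2).

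The only step requiring care is the closedness of the finitely generated cone $K$ (the Minkowski--Weyl theorem), which is what legitimizes the separation argument; everything else is bookkeeping. One can sidestep it altogether by substituting $\lambda_i=\lambda_i^+-\lambda_i^-$ with $\lambda_i^\pm\ge 0$ for $i\le n$, adjoining nonnegative slacks to turn $\mathbf{A}\lambda\le\mathbf{b}$ into an equality system with all variables nonnegative, invoking the textbook Farkas lemma, and then recombining the two inequalities $(\mathbf{A}^T\mathbf{y})_i\ge 0$ and $-(\mathbf{A}^T\mathbf{y})_i\ge 0$ arising from $\lambda_i^+$ and $\lambda_i^-$ into the single equality $(\mathbf{A}^T\mathbf{y})_i=0$. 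Since this lemma is used only as a black box inside the sufficiency argument of Theorem~\ref{costexist}, I would present the reduction to classical Farkas and keep the write-up short.
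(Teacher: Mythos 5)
Your proof is correct, and the ``sidestep'' you describe in the final paragraph --- splitting each free variable as $\lambda_i=\lambda_i^+-\lambda_i^-$, adjoining nonnegative slacks to convert $\mathbf{A}\lambda\le\mathbf{b}$ into an equality system in nonnegative variables, invoking classical Farkas, and recombining the paired inequalities on $(\mathbf{A}^T\mathbf{y})_i$ into equalities --- is exactly the route the paper takes (citing Aliprantis--Border, Corollary 5.85, and forming the block matrix $(\mathbf{A}_{1,\dots,n},\,-\mathbf{A}_{1,\dots,n},\,\mathbf{A}_{n+1,\dots,n+k},\,\mathbf{I})$). The argument you present as your main body is a genuinely different, self-contained derivation: you identify alternative (1) with membership of $\mathbf{b}$ in the finitely generated cone spanned by $\{\pm A^i\}_{i\le n}\cup\{A^i\}_{i>n}\cup\{e_j\}$, invoke Minkowski--Weyl closedness, and separate. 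That version makes the one nontrivial ingredient (closedness of a finitely generated cone) explicit rather than burying it inside a black-box citation of Farkas, at the cost of being longer; the reduction is shorter and, as you note, entirely adequate for a lemma used only as a tool in the sufficiency direction of the main theorem. Either write-up would serve; your explicit verification of mutual exclusivity is a small addition the paper omits but is harmless and correct.
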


\begin{proof}[Proof of Lemma \ref{farkas}]
    Let $\mathbf{A}^\prime \in \mathbb{R}^{m^\prime\times n^\prime}$ and 
    $\mathbf{b}^\prime \in \mathbb{R}^{m^\prime}$. By Farkas' lemma 
    (\cite{aliprantis2006infinite}, Corollary 5.85), exactly one of the 
    following holds:
    \begin{enumerate}[(i)]
        \item There exists $\mathbf{x}^\prime\in \mathbb{R}^{n^\prime}_+$ such 
        that $\mathbf{A}^\prime \mathbf{x}^\prime = \mathbf{b}^\prime$.
        \item There exists $\mathbf{y}^\prime \in \mathbb{R}^{m^\prime}$
        such that $\mathbf{A^\prime}^T \mathbf{y}^\prime\geq 0$ and $\mathbf{b}^\prime\cdot \mathbf{y}^\prime<0$. 
    \end{enumerate}

    Let $\mathbf{A}_{1,...n}$ be the $m\times n$ matrix consisting of the first 
    $n$ columns of $\mathbf{A}$, and $\mathbf{A}_{n+1,...,n+k}$ be the 
    $m\times k$ matrix consisting of the last $k$ columns. Condition (2)(i)-(ii)
    in the lemma is equivalent to $y$ satisfying
    \begin{equation}
    \label{farkas1}
        \begin{pmatrix} \mathbf{A}_{1,...n}^T \\ -\mathbf{A}_{1,...n}^T \\ \mathbf{A}_{n+1,...,n+k}^T \\ \mathbf{I} \end{pmatrix}\mathbf{y}\geq 0
    \end{equation}
    where $\mathbf{I}$ is the $m\times m$ identity matrix. By Farkas' lemma,
    (\ref{farkas1}) holds for some $\mathbf{y}\in \mathbb{R}^m$ and 
    $\mathbf{b}\cdot \mathbf{y}\leq 0$ 
    if and only if
    \begin{equation}
    \label{farkas2}
        \{\mathbf{x}: \begin{pmatrix} \mathbf{A}_{1,...n} & -\mathbf{A}_{1,...n} & \mathbf{A}_{n+1,...,n+k} & I \end{pmatrix} \mathbf{x} = \mathbf{b}, \mathbf{x} \geq 0\}=\emptyset 
    \end{equation}
    In turn, (\ref{farkas2}) holds if and only if there is no $\lambda$
    such that 
    \begin{align*}
    \label{lambda}
        \lambda_i & \in \mathbb{R}, \qquad i=1,...,n\\
        & \geq 0, \qquad i=n+1,...,n+k \numberthis
    \end{align*}
    and $\mathbf{s}\geq 0$ such that $\mathbf{A}\lambda+\mathbf{s}=\mathbf{b}$.
    This, of course, is equivalent to there being no $\lambda$ as defined in 
    (\ref{lambda}) such that $\mathbf{A}\lambda \leq \mathbf{b}$.
\end{proof}

\begin{proof}[Proof of Theorem \ref{costexist}]
We start with the necessity of NIAS and NIPMC. Let 
$c:[0,1]\rightarrow \mathbb{R}$ be Lipschitz continuous and generate a 
posterior-mean measurable cost of information acquisition that rationalizes 
the dataset. NIAS is immediate by \cite{caplin2015revealed}, Theorem 1. For 
NIPMC, let $A_i\in\mathcal{A}$, and define $F_0^i$, $F_i$, $G_i$, and 
$\beta_i$ as in Axiom \ref{NIPMC}. Let $(F_i^*,D_{A_i}^*)$ solve 
(\ref{DMprob}), and generate $\sigma_{A_i}$. For each $i$ and each 
$a_i\in A_i$, define $A_i^*$ as a minimal set of acts needed to 
generate the support of $F_i$, by defining $a_i^*: A_i\rightarrow A_i$
such that 
$$a_i^*(a_i)=a_i^*(a_i^\prime)\iff z_{\sigma_{A_i}}(a_i)=z_{\sigma_{A_i}}(a_i^\prime)$$
and letting $A_i^*$ be the set of distinct values that $a_i^*$ takes. 
Next, for each $a_i\in A_i^*$, define
$$\tilde{D}_{A_i}^*(a_i\vert z)=\sum_{a_i^\prime \in A_i:a_i^*(a_i^\prime)=a_i} D_{A_i}^*(a_i^\prime\vert z)$$
$$\tilde{\sigma}_{A_i}(a_i)=\sum_{a_i^\prime \in A_i:a_i^*(a_i^\prime)=a_i}\sigma_{A_i}(a_i^\prime)$$
We then define the CDF $\hat{F}_i$ by 
\[
\hat{F}_i(z)=\sum_{a_i\in A_i^*} f_i(z_{\sigma_{A_i}}(a_i))\int_0^z \frac{\tilde{D}_{A_i}^*(a_i|s)}{\tilde{\sigma}_{A_i}(a_i)}dF_i^*(s)
\]
adopting the convention that $0/0=0$.

Suppose that $I_{F_0,F_i^*}(z^*)=0$. If $z_{\sigma_{A_i}}(a_i)< z^*$, 
then $\sigma_{A_i}(a_i\vert z)>0$ only if $z\leq z^*$, and therefore 
$\tilde{D}^*_{A_i}(a_i\vert z)>0$ only if $z\leq z^*$ as well. Conversely,
if $z_{\sigma_{A_i}}(a_i)> z^*$, then $\sigma_{A_i}(a_i\vert z)=0$ for 
all $z< z^*$. As a result,
\[
\int_0^{z^*} \frac{\tilde{D}^*_{A_i}(a_i\vert s)}{\tilde{\sigma}_{A_i}(a_i)}dF^*_i(s)=\mathbf{1}[z_{\sigma_{A_i}}(a_i)\leq z^*]
\]
while at the same time, for $a_i$ such that $z_{\sigma_{A_i}}(a_i)\leq z^*$,
by (\ref{Bayes}) and the Definitions \ref{revealedmean} and \ref{revealeddecision},
\begin{align*}
    \int_0^{z^*} \int_0^z \frac{\tilde{D}^*_{A_i}(a_i\vert s)}{\tilde{\sigma}_{A_i}(a_i)}dF^*_i(s)dz & =\int_0^{z^*}\int_0^z \sum_{a_i^\prime \in A_i:a_i^*(a_i^\prime)=a_i}\frac{\sigma_{A_i}(a_i^\prime\vert s)}{\tilde{\sigma}_{A_i}(a_i)}dF_0(s)dz\\
    & =\int_0^{z^*} \sum_{a_i^\prime \in A_i:a_i^*(a_i^\prime)=a_i}\frac{\sigma_{A_i}(a_i^\prime)}{\tilde{\sigma}_{A_i}(a_i)}\mathbf{1}[z_{\sigma_{A_i}}(a_i^\prime)\leq z]dz=\int_0^{z^*} \mathbf{1}[z_{\sigma_{A_i}}(a_i)\leq z]dz
\end{align*}
Thus for all 
$z^*\in \{z: I_{F_0,F_i^*}(z)=0\}$, 
\begin{align*}
    \int_0^{z^*} \hat{F}_i(z)dz &=\int_0^{z^*} \sum_{a_i\in A_i^*} f_i(z_{\sigma_{A_i}}(a_i))\int_0^z \frac{\tilde{D}^*_{A_i}(a_i\vert s)}{\tilde{\sigma}_{A_i}(a_i)}dF^*_i(s)dz\\ 
    &= \int_0^{z^*} \sum_{a_i\in A_i^*}f_i(z_{\sigma_{A_i}}(a_i))\mathbf{1}[z_{\sigma_{A_i}}(a_i)\leq z]dz\\
    &= \int_0^{z^*}F_i(z)dz\\
    &= \int_0^{z^*}F^i_0(z)dz
\end{align*}
where the second equality is by the above argument, and the last 
equality is from the fact that $I_{F_0^i,F_i}(z^*)=0$. Moreover, by 
construction, $\mbox{supp}(\hat{F}_i)\subseteq \mbox{supp}(F_i^*)$. 
Therefore, by Lemma \ref{DM2019}, $\hat{F}_i$ solves (\ref{DMprob}) for prior 
$F_0^i$. 

Similarly, define the CDF $\hat{G}_i$ by
\begin{equation}
    \label{hatg}
    \hat{G}_i(z)\coloneqq \sum_{j=1}^n \sum_{a_j\in A_j^*} \frac{g_i(z_{\sigma_{A_j}}(a_j))}{N(z_{\sigma_{A_j}}(a_j))}\int_0^z \frac{\tilde{D}^*_{A_j}(a_j\vert s)}{\tilde{\sigma}_{A_j}(a_j)}dF^*_j(s) 
\end{equation}
where $N(z)$ is the number of menus for which 
$z\in \mbox{supp}(F_{\sigma_{A_j}})$, again adopting the convention that 
$0/0=0$. 

By the same argument as for $\hat{F}_i$, for all $z^*\in \{z: I_{F_0,F_i^*}(z)=0\}$,
\begin{align*}
    \int_0^{z^*} \hat{G}_i(z)dz &=\int_0^{z^*} \sum_{j=1}^n \sum_{a_j\in A_j^*} \frac{g_i(z_{\sigma_{A_j}}(a_j))}{N(z_{\sigma_{A_j}}(a_j))}\int_0^z \frac{\tilde{D}^*_{A_j}(a_j\vert s)}{\tilde{\sigma}_{A_j}(a_j)}dF^*_j(s)dz\\
    &= \int_0^{z^*} \sum_{j=1}^n \sum_{a_j\in A_j^*} \frac{g_i(z_{\sigma_{A_j}}(a_j))}{N(z_{\sigma_{A_j}}(a_j))}\mathbf{1}[z_{\sigma_{A_j}}(a_j)\leq z]dz\\
    &= \int_0^{z^*} G_i(z)dz\\
    &\leq \int_0^{z^*} F_0^i(z)dz
\end{align*}
where the last inequality holds with equality at $z^*=1$. By Lemma 
\ref{DM2019}, $\hat{F}_i$ is optimal with respect to (\ref{DMprob}), 
and so
\begin{equation}
    \label{newopt}
\int_0^1 [\phi_{A_i}(z)+c(z)]d\hat{F}_i(z)\geq \int_0^1 [\phi_{A_i}(z)+c(z)]d\hat{G}_i(z),\forall i
\end{equation}

Multiplying (\ref{newopt}) by $\beta_i$ and summing over $i$, we get
\begin{equation}
    \label{weightednewopt}
    \sum_{i=1}^n \beta_i \int_0^1 \phi_{A_i}(z)d\hat{F}_i(z) +\int_0^1 c(z)d(\sum_{i=1}^n \beta_i \hat{F}_i(z))\geq \sum_{i=1}^n \beta_i \int_0^1 \phi_{A_i}(z)d\hat{G}_i(z) +\int_0^1 c(z)d(\sum_{i=1}^n \beta_i \hat{G}_i(z))
\end{equation}
Notice that $\{\hat{G}_i\}_{i=1}^n$ is a reallocation of posterior means of $\{\hat{F}_i\}_{i=1}^n$:
\begin{align*}
    \sum_{i=1}^n \beta_i \hat{G}_i(z) &=\sum_{i=1}^n \beta_i \sum_{j=1}^n \sum_{a_j\in A_j^*} \frac{g_i(z_{\sigma_{A_j}}(a_j))}{N(z_{\sigma_{A_j}}(a_j))}\int_0^z \frac{\tilde{D}^*_{A_j}(a_j\vert s)}{\tilde{\sigma}_{A_j}(a_j)}dF^*_j(s) \\
    &=\sum_{j=1}^n \sum_{a_j\in A_j^*} \frac{\sum_{i=1}^n \beta_i g_i(z_{\sigma_{A_j}}(a_j))}{N(z_{\sigma_{A_j}}(a_j))}\int_0^z \frac{\tilde{D}^*_{A_j}(a_j\vert s)}{\tilde{\sigma}_{A_j}(a_j)}dF^*_j(s) \\
    &=\sum_{j=1}^n \sum_{a_j\in A_j^*} \frac{\sum_{i=1}^n \beta_i f_i(z_{\sigma_{A_j}}(a_j))}{N(z_{\sigma_{A_j}}(a_j))}\int_0^z \frac{\tilde{D}^*_{A_j}(a_j\vert s)}{\tilde{\sigma}_{A_j}(a_j)}dF^*_j(s) \\
    &=\sum_{i=1}^n \beta_i \sum_{j=1}^n \sum_{a_j\in A_j^*} \frac{f_i(z_{\sigma_{A_j}}(a_j))}{N(z_{\sigma_{A_j}}(a_j))}\int_0^z \frac{\tilde{D}^*_{A_j}(a_j\vert s)}{\tilde{\sigma}_{A_j}(a_j)}dF^*_j(s) \\
    &=\sum_{i=1}^n \beta_i \sum_{a_i\in A_i^*} f_i(z_{\sigma_{A_i}}(a_i))\int_0^z \frac{\tilde{D}^*_{A_i}(a_i\vert s)}{\tilde{\sigma}_{A_i}(a_i)}dF^*_i(s) \\
    &= \sum_{i=1}^n \beta_i \hat{F}_i(z)
\end{align*}
where the penultimate equality is from a counting argument. Therefore,
\[
\int_0^1 c(z)d(\sum_{i=1}^n \beta_i \hat{F}_i(z))=\int_0^1 c(z)d(\sum_{i=1}^n \beta_i \hat{G}_i(z))
\]
and so from (\ref{weightednewopt}), 
\[
\sum_{i=1}^n \beta_i \int_0^1 \phi_{A_i}(z)d\hat{F}_i(z)\geq \sum_{i=1}^n \beta_i \int_0^1 \phi_{A_i}(z)d\hat{G}_i(z)
\]
Recalling that $A_i^*$ samples a single representative act
for each $z\in\mbox{supp}(F_{\sigma_{A_i}})$,
\begin{align*}
    \int_0^1 \phi_{A_i}(z)d\hat{F}_i(z) &= \sum_{a_i\in A_i^*} f_i(z_{\sigma_{A_i}}(a_i))\int_0^z \phi_{A_i}(z)\frac{\tilde{D}_{A_i}(a_i\vert z)}{\tilde{\sigma}_{A_i}(a_i)}dF^*_i(z)\\
    &= \sum_{a_i\in A_i^*} f_i(z_{\sigma_{A_i}}(a_i))\int_0^z u(a_i,z)\frac{\tilde{D}_{A_i}(a_i\vert z)}{\tilde{\sigma}_{A_i}(a_i)}dF^*_i(z)\\
    &= \sum_{a_i\in A_i^*} f_i(z_{\sigma_{A_i}}(a_i)) u(a_i,z_{\sigma_{A_i}}(a_i))\\
    &= \int_0^1 \phi_{A_i}(z)dF_i(z)
\end{align*}
Moreover, $\hat{G}_i\succeq G_i$ since there may be multiple values of 
$s$ in the definition (\ref{hatg}) at which $D_{A_j}^*(a_j\vert s)>0$, 
which are then garbled into $z_{\sigma_{A_j}}(a_j)$, over the latter values 
of which $G_i$ has support. Thus,
\[
\sum_{i=1}^n \beta_i \int_0^1 \phi_{A_i}(z)dF_i(z)=\sum_{i=1}^n \beta_i \int_0^1 \phi_{A_i}(z)d\hat{F}_i(z)\geq \sum_{i=1}^n \beta_i \int_0^1 \phi_{A_i}(z)d\hat{G}_i(z)\geq \sum_{i=1}^n \beta_i \int_0^1 \phi_{A_i}(z)dG_i(z)
\]
and so Axiom \ref{NIPMC} is satisfied.

Conversely, suppose that $\{\sigma_A\}_{A\in \mathcal{A}}$ is a dataset that 
satisfies NIAS and NIPMC. By Lemma \ref{finiteineq}, it is sufficient to 
determine that (\ref{allineq}) holds by checking that it holds at $z\in Z$.
Let $Z^*_A\coloneqq\{z: I_{F_0,F_{\sigma_{A}}}(z)=0\}\cap Z$. By Lemma 
\ref{exanteNIPMC}, there do not exist nonnegative $\beta_{A,B}(a,b)$,
where $A,B\in\mathcal{A}$, $a\in A$, and $b\in B$, such that for all 
$A\in \mathcal{A}$ and $z^*\in Z^*_A$,
\begin{multline}
\label{prod1}
    \sum_{B\in \mathcal{A}} \int_0^{z^*}\sum_{s\in Z} [\sum_{b\in B}\sum_{\{a\in A: z_{\sigma_{A}}(a)\leq z\}}\beta_{(A,B)}(a,b)\sigma_{A}(a\vert s)\\
    -\sum_{a\in A}\sum_{\{b\in B: z_{\sigma_{B}}(b)\leq z\}}\beta_{(B,A)}(b,a)\sigma_{B}(b\vert s)]f_0(s)dz\geq 0
\end{multline}
with equality at $z^*=1$, 
\begin{equation}
\label{prod2}
\sum_{j=1}^n \sum_{z \in Z}\sum_{a_i \in A_i} \sum_{a_j \in A_j } \beta_{(i,j)}(a_i, a_j) \sigma_{A_i}(a_i|z)f_0(z)=\sum_{j=1}^n \sum_{z \in Z}\sum_{a_i \in A_i} \sum_{a_j \in A_j } \beta_{(j,i)}(a_j, a_j) \sigma_{A_j}(a_j|z)f_0(z)
\end{equation}
and
\begin{equation}
    \label{prod3}
    \sum_{A\in \mathcal{A}}\sum_{B\in\mathcal{A}}\sum_{z\in Z}\sum_{b\in B}\sum_{a\in A}\beta_{(A,B)}(a,b) [u(a,z)-u(b,z)]\sigma_{A}(a\vert z)f_0(z)<0
\end{equation}

Inequality (\ref{prod1}) can be rewritten as 
\begin{multline}
\label{prod4}
    \sum_{B\in\mathcal{A}} [\sum_{b\in B}\sum_{a\in A: z_{\sigma_{A}}(a)\leq z^*} \beta_{(A,B)}(a,b)\sigma_A(a)(z^*-z_{\sigma_A}(a))\\
     -\sum_{a\in A}\sum_{b\in B: z_{\sigma_{B}}(b)\leq z^*} \beta_{(B,A)}(b,a)\sigma_B(b)(z^*-z_{\sigma_B}(b))]\geq 0
\end{multline}
while (\ref{prod2}) can be rewritten as 
\begin{equation}
    \label{prod5}
    \sum_{j=1}^n \sum_{a_i \in A_i} \sum_{a_j \in A_j } \beta_{(i,j)}(a_i, a_j) \sigma_{A_i}(a_i)=\sum_{j=1}^n \sum_{a_i \in A_i} \sum_{a_j \in A_j } \beta_{(j,i)}(a_j, a_j) \sigma_{A_j}(a_j)
\end{equation}
and (\ref{prod3}) can be rewritten as 
\begin{equation}
    \label{prod6}
    \sum_{A\in \mathcal{A}}\sum_{B\in\mathcal{A}}\sum_{b\in B}\sum_{a\in A}\beta_{(A,B)}(a,b)[u(a,z_{\sigma_A}(a))-u(b,z_{\sigma_A}(a))]\sigma_{A}(a)<0
\end{equation}

Let the matrix $\mathbf{A}$ as being indexed by rows $i=(A,B,a,b)$, and columns
indexed by $j=(z^*,A^\prime)$ as
\begin{equation}
    \mathbf{A}_{i,j}=\begin{cases}
    (z^*-z_{\sigma_A}(a))\sigma_A(a), & z^*>0, z_{\sigma_A}(a)\leq z^*, A=A^\prime\neq B\\
    -(z^*-z_{\sigma_A}(a))\sigma_A(a), & z^*>0, z_{\sigma_A}(a)\leq z^*, A\neq A^\prime=B\\
    \sigma_A(a), & z^*=0, A=A^\prime\neq B\\
    -\sigma_A(a), &  z^*=0, A\neq A^\prime=B\\
    0, & \mbox{otherwise}
    \end{cases}
\end{equation}
Meanwhile, define the vector $\mathbf{b}$ with entries indexed by $i=(A,B,a,b)$,
such that 
\[
\mathbf{b}_i=[u(a,z_{\sigma_A}(a))-u(b,z_{\sigma_A}(a))]\sigma_A(a).
\]
Then (\ref{prod4})-(\ref{prod6}) is equivalent to there not existing a
vector $\beta$, where $\beta_i=\beta_{(A,B)}(a,b)$ for $i=(A,B,a,b)$, such that
\begin{enumerate}[(i)]
    \item $\beta\geq 0$,
    \item $(\mathbf{A}^T\beta)_j\geq 0$, with equality for $j$ where 
    $z^*\in \{0,1\}$
    \item $\mathbf{b}\cdot \beta<0$
\end{enumerate}
By Lemma \ref{farkas}, there exists vector $\lambda$ with entries $j$ 
corresponding to $(z,A)$ such that $\mathbf{A}\lambda \leq \mathbf{b}$
and $\lambda_j\geq0$ for $j$ such that $z^*\notin\{0,1\}$ where, letting 
\[
\Lambda_A(z)\coloneqq \lambda_{(0,A)}+\sum_{z^*\in Z^*_A} \lambda_{(z^*,A)} (z^*-z)\mathbf{1}[z^*\geq z]
\]
we have that for all $A,B\in\mathcal{A}$, $a\in A$, and $b\in B$,
\begin{equation}
\label{optimuminequality}
    \sigma_A(a)[\Lambda_A(z_{\sigma_A}(a))-\Lambda_B(z_{\sigma_A}(a))]\leq \sigma_A(a) [u(a,z_{\sigma_A}(a))-u(b,z_{\sigma_A}(a))]
\end{equation}
Define $c:[0,1]\rightarrow \mathbb{R}$ by 
\begin{equation}
    \label{IDcost}
    c(z)\coloneqq- \max_{A\in\mathcal{A}, a\in A} \{u(a,z)- \Lambda_A(z)\}
\end{equation}
Notice that $c$ need not be concave, and hence not be canonical, since 
$\Lambda_A(z)$ is convex. However, $c$ is Lipschitz continuous, since 
$\lambda_{z^*,A}$ is finite.

To show that $c$ rationalizes the dataset, by 
Lemma \ref{DMT12} and the argument in the proof of Lemma \ref{DM2019}, 
$\sigma_A$ is optimal if there exists $P^*_A:[0,1]\rightarrow\mathbb{R}$ 
convex such that
\begin{enumerate}
    \item $P^*_A(z)\geq \max_{a\in A} u(a,z)+c(z),\forall z\in[0,1]$,
    \item $P^*_A(z_{\sigma_A}(a))=u(a,z_{\sigma_A}(a))+c(z_{\sigma_A}(a)), \forall a \mbox{ such that } \sigma_A(a)>0$, and 
    \item For any interval $[z_1,z_2]$ on which 
    $I_{F_0,F_{\sigma_A}}(z)>0,\forall z\in(z_1,z_2)$, $P^*_A$ is affine with
    slope $p^*_A(z)$.
\end{enumerate}
Letting 
\[
P^*_A(z)=\Lambda_A(z)
\]
we get that $c(z_{\sigma_A}(a))=P^*_A(z_{\sigma_A}(a))-u(a,z_{\sigma_A}(a))$
for all $a\in\mbox{supp}(\sigma_A)$ by (\ref{optimuminequality}). Moreover,
$P^*_A$ is affine for $z\in[z_1,z_2]$ with slope 
$-\sum_{z^*\in Z^*_A: z^*\geq z_2}\lambda_{(z^*,A)}$. Since 
$\lambda_j\geq 0$ for all $j$ such that $z^*\neq 1$, $p^*_A$ is increasing, and 
so $P^*_A$ is convex. Lastly, for all $z\in [0,1]$, $\hat{a}\in A$,
\begin{align*}
u(\hat{a},z)+c(z) & \leq u(\hat{a},z)-\max_{a\in A} u(a,z) +\Lambda_A(z)\\
&\leq \Lambda_A(z)\\
&=P^*_A(z)
\end{align*}
Thus $P_A^*$ satisfies the conditions of Theorem 1 of \cite{dworczak2019simple}.

\end{proof}

\section{Proofs from Section 4}

\subsection{Proof of Proposition \ref{cs}}

    Suppose for contradiction, for some $z_1, z_2\in supp (F_{\sigma_{A_2}})$, where $\mathcal{I}_{F_0,F_{\sigma_{A_1}}}(z) >0$ for $z_1< z< z_2,$, there exists $\hat{z} \in  supp (F_{\sigma_{A_1}})\cap (z_1, z_2).$ 

There exists $\alpha \in (0,1)$ such that 
$\hat{z}=\alpha z_1 + (1-\alpha)z_2 $. Consider the function $H$ defined 
as follows:%
\footnote{The construction of such $H$ for a continuum of states is more involved. See, for instance, \cite{mensch2024monopoly} for examples of how to construct improvements of distributions with a continuum of states.}
    \begin{equation*}
        H(z)=
        \left\{
        \begin{array}{ll}
        0, & z< z_1\\
        \alpha, &  z_1 \le z < \hat{z}\\
        \alpha-1, & \hat{z} \le z \le  z_2\\
        0,& z\ge z_2
        \end{array}
        \right.
\end{equation*}

 Now consider a reallocation of $F_{\sigma_{A_1}}$ and 
$F_{\sigma_{A_2}}$ to $G_1\coloneqq F_{\sigma_{A_1}}+ \beta H$ and 
$G_2 \coloneqq F_{\sigma_{A_2}}- \beta H$ with equal weights, where 
$\beta$ is small enough to ensure that that $G_1$ and $G_2$ are 
well-defined distribution functions.

Let $\psi\coloneqq \phi_{A_1}- \phi_{A_2}.$ Note that
\begin{align*}
   \int \psi d( F_{\sigma_{A_1}}) < \int \psi d( F_{\sigma_{A_i}} - \beta H)= \int \psi d( G_1)\\  
\end{align*}
due to global convexity of $\psi$ and strict convexity at $\hat{z}.$

This implies
\begin{align*}
      \int \phi_{A_1} d(F_{\sigma_{A_1}})+ \int \phi_{A_2} d(F_{\sigma_{A_2}})&=  \int \phi_{A_1} d(F_{\sigma_{A_1}})+ \int \phi_{A_2} d(G_2+G_1- F_{\sigma_{A_1}})\\
    &= \int \phi_{A_2} d(G_2) + \int [\phi_{A_1}- \phi_{A_2}]d(F_{\sigma_{A_1}})\\
   & ~+ \int \phi_{A_2} d(G_1) - \int \phi_{A_1} d(G_1)+ \int \phi_{A_1} d(G_1)\\
    &=\int \phi_{A_2} d(G_2) +\int \phi_{A_1} d(G_1)+ \int \psi d(F_{\sigma_{A_1}})- \int \psi d( G_1)\\
    &<\int \phi_{A_2} d(G_2) +\int \phi_{A_1} d(G_1)
\end{align*}
which violates NIPMC.

\subsection{Proof of Theorem \ref{continuumexist}}

For the direction of necessity, Axiom \ref{NIAS} is satisfied
for the same reason as in the finite case. For Axiom \ref{NIPMC}, similar 
to the finite case, we want to define $A_i^*$ as a minimal set of acts 
needed to generate the support of $F_i$; however, we need to make sure that 
we construct something that is well-defined and measurable. First, we note 
that, due to the assumption that there are no two acts in $A_i$ with the 
same payoffs, nor are any dominated, they can be ordered by the slope of 
$u_z(\cdot,z)$, in increasing order. It is therefore without loss of generality to assume $a_i=u_z(\cdot,z)$. Thus defined, as $X$ is compact,
the set of such slopes is bounded.

Define $F_0^i,F_i,G_i,\beta_i,F_i^*$, and $D_{A_i}^*$ as in the proof of 
Theorem \ref{costexist}, the latter being well-defined almost everywhere
by the Lebesgue differentiation theorem. Notice that for each $z$, the 
measure $\int_0^z D_{A_i}^*(\cdot\vert s)dF_i^*(s)$ is absolutely 
continuous with respect to $\sigma(\cdot)$, since 
$\int_0^1 D_{A_i}^*(\cdot\vert s)dF_i^*(s)=\sigma(\cdot)$. Moreover,
by Axiom \ref{NIAS}, each act $a_i$ such that $z_{\sigma_{A_i}}(a_i)=z$
is optimal given $z$. 

Now note that the set of acts that share the same value of 
$z_{\sigma_{A_i}}$ is \emph{convex}: that is, for all 
$a_i^1,a_i^2,a_i^3\in A_i\cap\mbox{supp}(\sigma_{A_i})$, where 
$a_i^2\in[a_i^1,a_i^3]$, 
$$z_{\sigma_{A_i}}(a_i^1)=z_{\sigma_{A_i}}(a_i^3)=z^*\implies z_{\sigma_{A_i}}(a_i^2)=z^*$$
This follows immediately from the convexity of $\phi_{A_i}$: the subgradient
of $\phi_{A_i}$, which is identical to the set of values that 
$u_z(\cdot,z)$ can take if $a$ is optimal at $z$ is a convex set,
and that there is a unique value of $a_i\in A_i$ with a given slope. 

One can therefore define as before $a_i^*$ as providing a selection
from the values of $a_i$ that yield the same $z_{\sigma_{A_i}}(a_i)$.
Then, for each $a_i\in A_i^*$, define
$$\tilde{D}_{A_i}^*(a_i\vert z)=\int_{a_i^\prime \in A_i:a_i^*(a_i^\prime)=a_i} D_{A_i}^*(a_i^\prime\vert z)da_i^\prime$$
$$\tilde{\sigma}_{A_i}(a_i)=\int_{a_i^\prime \in A_i:a_i^*(a_i^\prime)=a_i}\sigma_{A_i}(a_i^\prime)da_i^\prime$$
Notice that, so constructed, both $\tilde{D}_{A_i}([0,a_i]\vert z)$ and 
$\tilde{\sigma}_{A_i}([0,a_i])$ are nonnegative, increasing, right-continuous, and take value $1$ at $a_i=1$; therefore, they are valid probability measures. 

We then define
\[
\hat{F}_i(z)\coloneqq \int_{a_i\in A_i^*} \int_0^z \frac{dD^*_{A_i}(a_i\vert s)}{d\sigma_{A_i}(a_i)}dF^*_i(s)dF_i(z_{\sigma_{A_i}}(a_i))
\]
\[
\hat{G}_i(z)\coloneqq \sum_{j=1}^n \int_{a_j\in A_j^*} \int_0^z \frac{dD^*_{A_j}(a_j\vert s)}{d\sigma_{A_j}(a_j)}dF^*_j(s)dG_i(z_{\sigma_{A_j}}(a_j))
\]
where the Radon-Nikodym derivative 
$\int_0^z \frac{dD^*_{A_j}(a_j\vert s)}{d\sigma_{A_j}(a_j)}dF^*_j(s)$ is well-%
defined by absolute continuity, and the integration over $A_i^*$ is well-%
defined due to the measurability of the SDSC dataset. The proof then 
proceeds identically to that of the discrete case.

For the direction of sufficiency, we present the following analogue of
Lemma \ref{exanteNIPMC}.

\begin{lemma}
 Let $A_1,...,A_n$ be a sequence of menus in $\mathcal{A}$, 
    $\beta_{(i,j)}:A_i\times A_j\rightarrow \mathbb{R}_+^2$ continuous,
    and measures $\tau_i$ over $A_i$ for $i,j=1,...n$. If Axioms \ref{NIAS} 
    and \ref{NIPMC} hold, and:
    \begin{enumerate}
        \item For all $i$ and 
    $z^*\in \{z: I_{F_0,F_{\sigma_{A_i}}}(z)=0\}$,
    \begin{multline}
    \label{reallocMPCnobindc}
    \sum_{j=1}^n \int_0^{z^*}\int_0^1 [\int_{A_j}\int_{\{a_i: z_{\sigma_{A_i}}(a_i)\leq z\}}\beta_{(i,j)}(a_i,a_j)d\sigma_{A_i}(a_i\vert s)d\tau_j(a_j)\\
    -\int_{A_i}\int_{\{a_j: z_{\sigma_{A_j}}(a_j)\leq z\}}\beta_{(j,i)}(a_j,a_i)d\sigma_{A_j}(a_j\vert s)d\tau_i(a_i)]dF_0(s)dz\geq 0
    \end{multline}
    with equality at $z=1$.
    
    \item For all $i$, 
       \begin{multline}
        \label{reallocsamepriorc}    \sum_{j=1}^n \int_0^1\int_{A_i} \int_{A_j} \beta_{(i,j)}(a_i, a_j) d\sigma_{A_i}(a_i|z)d\tau_j(a_j) dF_0(z)=\\
        \sum_{j=1}^n \int_0^1\int_{A_i} \int_{A_j} \beta_{(j,i)}(a_j, a_j) d\sigma_{A_j}(a_j|z)d\tau_i(a_i) dF_0(z)
    \end{multline}
    \end{enumerate}
    Then 
    \begin{equation}
        \label{exanteNIPMCeqcontinuum}
        \sum_{i=1}^n\sum_{j=1}^n\int_0^1\int_{A_j}\int_{A_i}\beta_{(i,j)}(a_i,a_j) [u(a_i,z)-u(a_j,z)]d\sigma_{A_i}(a_i\vert z)d\tau_j(a_j) dF_0(z)\geq0
    \end{equation}
\end{lemma}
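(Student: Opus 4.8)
The plan is to run the proof of Lemma \ref{exanteNIPMC} almost verbatim, with finite sums replaced by integrals against $\sigma_{A_i}(\cdot\vert z)$, $F_0$ and $\tau_j$, and with the continuity of $\beta_{(i,j)}$ and the measurability facts from the proof of Theorem \ref{continuumexist} ensuring that every constructed object is well defined. For each $i$ set
\[
\gamma_i:=\sum_{j=1}^n\int_0^1\!\!\int_{A_i}\!\!\int_{A_j}\beta_{(i,j)}(a_i,a_j)\,d\tau_j(a_j)\,d\sigma_{A_i}(a_i\vert z)\,dF_0(z),
\]
which by (\ref{reallocsamepriorc}) also equals the same quantity with $\sigma_{A_j}$ and $\tau_i$ in the symmetric positions. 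When $\gamma_i>0$, let $F_i$ be the push-forward along $a_i\mapsto z_{\sigma_{A_i}}(a_i)$ of the measure $\frac{1}{\gamma_i}\bigl(\sum_j\int_{A_j}\beta_{(i,j)}(\cdot,a_j)\,d\tau_j(a_j)\bigr)\sigma_{A_i}(da_i)$ on $A_i$, and let $G_i$ be the analogous push-forward built from the incoming weights $\beta_{(j,i)}$, $\sigma_{A_j}(da_j)$ and $\tau_i$; when $\gamma_i=0$ set $F_i=G_i=F_0$. Measurability of the weight densities (continuity of $\beta_{(i,j)}$ plus Fubini) and of $a_i\mapsto z_{\sigma_{A_i}}(a_i)$ makes each $F_i$, $G_i$ a genuine CDF, with total mass $1$ by (\ref{reallocsamepriorc}) and $\mbox{supp}(F_i)\subset\mbox{supp}(F_{\sigma_{A_i}})$.

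Next I would construct the counterfactual priors: let $F_0^i$ be supported on the closed set $\{z:I_{F_0,F_{\sigma_{A_i}}}(z)=0\}$ and fixed by $\int_0^{z^*}F_0^i(s)\,ds=\int_0^{z^*}F_i(s)\,ds$ at every such $z^*$. On each maximal interval $(z_1,z_2)$ where $I_{F_0,F_{\sigma_{A_i}}}>0$, the map $z\mapsto\int_0^zF_0^i$ is affine while $z\mapsto\int_0^zF_i$ and $z\mapsto\int_0^zG_i$ are convex, and (\ref{reallocMPCnobindc}) gives $\int_0^{z^*}F_i\ge\int_0^{z^*}G_i$ at the endpoints; hence $\int_0^zF_0^i\ge\max\{\int_0^zF_i,\int_0^zG_i\}$ for all $z$, so $F_0^i$ is a mean-preserving spread of both $F_i$ and $G_i$, giving $F_i,G_i\in\mathcal{I}_{F_0^i}$ with $I_{F_0^i,F_i}(z)=0$ whenever $I_{F_0,F_{\sigma_{A_i}}}(z)=0$. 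This is hypothesis (ii) of Axiom \ref{NIPMC}, hypothesis (i) being the support containment already noted; and interchanging the order of integration in the definitions gives $\sum_i\gamma_iF_i(z)=\sum_i\gamma_iG_i(z)$, so $\{G_i\}_{i=1}^n$ is a reallocation of posterior means of $\{F_i\}_{i=1}^n$ relative to $\{F_0^i\}$.

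Axiom \ref{NIPMC} then yields $\sum_i\gamma_i\int\phi_{A_i}\,dF_i\ge\sum_i\gamma_i\int\phi_{A_i}\,dG_i$. Since $F_i$ is a weighted push-forward of $\sigma_{A_i}$, Axiom \ref{NIAS} lets me replace $\phi_{A_i}(z_{\sigma_{A_i}}(a_i))$ by $u(a_i,z_{\sigma_{A_i}}(a_i))$ inside $\gamma_i\int\phi_{A_i}\,dF_i$, and then the affineness of $u(a_i,\cdot)$ together with the fact that $z_{\sigma_{A_i}}(a_i)$ is the $\sigma_{A_i}$-conditional mean of $z$ given $a_i$ rewrites this as $\sum_j\int_0^1\!\int_{A_j}\!\int_{A_i}\beta_{(i,j)}(a_i,a_j)u(a_i,z)\,d\sigma_{A_i}(a_i\vert z)\,d\tau_j(a_j)\,dF_0(z)$. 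For the right-hand side I would bound $\phi_{A_i}(z)\ge u(a_i,z)$ pointwise in $a_i\in A_i$, apply the same affineness argument, and relabel $i\leftrightarrow j$ to obtain the lower bound $\sum_j\int_0^1\!\int_{A_j}\!\int_{A_i}\beta_{(i,j)}(a_i,a_j)u(a_j,z)\,d\sigma_{A_i}(a_i\vert z)\,d\tau_j(a_j)\,dF_0(z)$; chaining these through the NIPMC inequality and summing over $i$ gives precisely (\ref{exanteNIPMCeqcontinuum}). The step I expect to cost the most effort is the measure-theoretic bookkeeping — checking that the weight densities and the map $a_i\mapsto z_{\sigma_{A_i}}(a_i)$ are measurable, that the push-forwards defining $F_i$, $G_i$ are legitimate, and that an $F_0^i$ with the prescribed property exists even when the set $\{I_{F_0,F_{\sigma_{A_i}}}=0\}$ (and $Z$) is infinite — each of which reuses, essentially verbatim, constructions already carried out in the proof of Theorem \ref{continuumexist}.
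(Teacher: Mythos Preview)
Your proposal is correct and matches the paper's approach exactly: the paper states that ``the proof is identical to that of Lemma \ref{exanteNIPMC}, and so is omitted,'' and what you have written is precisely that proof with sums replaced by integrals and the requisite measurability checks noted. The only difference is that you have actually spelled out the steps the paper leaves implicit.
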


The proof is identical to that of Lemma \ref{exanteNIPMC}, and so is omitted. 
Notice that the linear functional of measures $\Phi(\sigma_{A_i},\sigma_{A_j})$ 
as given by the left-hand side of (\ref{reallocMPCnobindc}) and in 
(\ref{reallocsamepriorc}) are continuous in the weak* topology. By the Convex 
Cone Alternative (\cite{aliprantis2006infinite}, Corollary 5.84), using the 
same construction as in Lemma \ref{farkas} and in the proof of Theorem 
\ref{costexist}, for each $A\in\mathcal{A}$, there exist measures
$\lambda_{A}$, $\lambda_{(0,A)}\in\mathbb{R}$, and 
$\bar{\lambda}_{A}\in \mathbb{R}$ such that almost everywhere with respect
to $\sigma_A$, for all $A,B\in\mathcal{A}$, $a\in A$, $b\in B$,
\begin{equation}
\Lambda_A(z_{\sigma_A}(a))-\Lambda_B(z_{\sigma_A}(a))\leq u(a,z_{\sigma_A}(a))-u(b,z_{\sigma_A}(a))
\end{equation}
where
\[
\Lambda_A(z)\coloneqq \bar{\lambda}_A(1-z_{\sigma_A}(a))+\lambda_{(0,A)}+\int_{[z_{\sigma_A}(a),1]\cap Z^*_A} (z-z_{\sigma_A}(a))d\lambda_A(z)
\]
Defining
\[
c(z)\coloneqq -\max_{A\in \mathcal{A},a\in A}\{u(a,z)-\Lambda_A(z)\}
\]
\[
P^*_A(z)\coloneqq \Lambda_A(z)
\]
yields the desired result by the same steps as in the proof of Theorem 
\ref{costexist}.

\section{Linear Programming Algorithm for the Concavity of $c$}

As shown in the sufficiency direction of the proof of Theorem 
\ref{costexist}, we construct the functions $c$ and $P^*_A$ by use of 
the coefficients $\lambda_{(z^*,A)}$. It is quite straightforward to 
see that between any two consecutive values of $z^*\in Z^*_A$, the
function $c$, as defined by \ref{IDcost}, is concave. The potential
issue is at $z^*\in Z^*_A$: if $\lambda_{(z^*,A)}>0$, there may be 
a kink in $c$ at $z^*$ due to the change in the slope at that point,
if it is menu $A$ that is generating $c$ at $z^*$.

One way that this problem can be avoided is if, in addition to the 
set of inequalities given by (\ref{optimuminequality}), the coefficients
$\{\lambda_{(z^*,A)}\}$ are such that, whenever $A$ is the menu that 
generates $c$ at $z^*$, then $\lambda_{(z^*,A)}=0$. Thus we can check
whether the inequalities in (\ref{optimuminequality}), as well as
the inequalities, for $z\in Z$
\begin{equation}
    \label{cavcheck1}
    \lambda_{(z,A_{i_z})}=0
\end{equation}
\begin{equation} 
\label{cavcheck2}
\Lambda_{A_{i_z}}(z)-\Lambda_B(z)\le \phi_{A_{i_z}}(z)- \phi_{B}(z),\forall B\in\mathcal{A}
\end{equation}
for some values of $\{i_z\}_{z\in Z}$, where, by convention,
$\lambda_{(z,A)}=0$ if $z\notin Z^*_A$. This involves checking at most 
$n\vert Z\vert$ inequalities, for each possible arrangement of values of 
$i_z$: for each $z\in Z$, for a given $i_z$, there are $n-1$ inequalities of the 
form (\ref{cavcheck2}), as well as one equality of the form (\ref{cavcheck1}).
There are $n^{\vert Z\vert}$ possible such arrangements. Thus,
this algorithm for checking these sufficient conditions for the concavity
of $c$ will have the computational complexity of running $n^{\vert Z\vert}$
linear programs, each with at most $n\vert Z\vert$ inequalities.

\end{document}